\documentclass[12pt,oneside,letterpaper]{article}
\usepackage{amsmath}
\usepackage{amsfonts}
\usepackage{amssymb}
\usepackage{amsthm}
\usepackage{graphicx}
\usepackage{multirow}
\usepackage{color}
\usepackage{subfig}
\usepackage{subcaption}
\usepackage{rotfloat}
\usepackage[T1]{fontenc}
\usepackage{amssymb,amsmath}
\usepackage[left=1.25in, bottom=1.25in, right=1.25in, top=1.25in]{geometry}
\usepackage{times}
\usepackage{natbib}
\usepackage{cases}
\usepackage{inputenc}
\bibpunct{(}{)}{;}{a}{,}{,}
\usepackage{float}
\usepackage{pdfsync}
\usepackage{setspace}
\usepackage{enumerate}
\usepackage{mathtools}
\usepackage{tikz}
\usetikzlibrary{arrows.meta,positioning,calc}
\usepackage{pgfplots}
\usepgfplotslibrary{groupplots}
\pgfplotsset{compat=1.18}
\numberwithin{equation}{section}
\newtheorem{lemma}{Lemma}[section]
\newtheorem{proposition}{Proposition}[section]
\newtheorem{theorem}{Theorem}[section]
\newtheorem{assumption}{Assumption}
\newtheorem{corollary}{Corollary}[section]
\theoremstyle{definition}\newtheorem{example}{Example}
\theoremstyle{definition}\newtheorem{remark}{Remark}[section]
\DeclareMathOperator*{\argmin}{arg\,min}

\DeclareMathOperator*{\argmax}{arg\,max}

\title{Ordinal Distributional Change and Conservative Transition Benchmarks: Measurement, Identification, and Inference \footnote{I thank participants of the 2026 Bristol Econometrics Study Group meeting, Russell Davidson, Gaurav Datt, and Otavio Bartalotti for helpful discussions and comments. This research was supported by Monash eResearch capabilities, including M3.}}
\author{Rami V. Tabri\footnote{Address: Department of Econometrics and Business Statistics, Monash University, Clayton, Victoria, Australia, 3800. Email: rami.tabri@monash.edu.}}
\begin{document}
\maketitle

\begingroup
\setstretch{1.0}

\begin{abstract}
Repeated cross-sections reveal changes in ordinal distributions but not the
transitions producing them. I axiomatically characterize a probability metric
for ordinal change from threshold-crossing principles. On canonical rank
distributions, the resulting metric coincides with Wasserstein--1; its
classical transport representation measures minimum average threshold crossings
and yields conservative transition benchmarks. With missing outcomes, I derive
sharp identified sets for the discrepancy and benchmark plans. I develop
finite-sample-valid projection inference using randomized Monte Carlo
calibration and convergent global search. Applied to Arab Barometer data, the
framework documents a robust shift toward broader and more regular remittance
receipt in Lebanon. The discrepancy interval remains well separated from zero
after allowing for item nonresponse and sampling uncertainty, while benchmark
bounds provide strong numerical evidence that least-displacement restructuring
excludes movement toward less frequent receipt and requires some reassignment
from nonreceipt to recurrent receipt.
\end{abstract}

\vspace{-0.4em}

\noindent\textit{JEL:} C12; C15; C18; C61.

\smallskip

\noindent\textit{Keywords:} ordinal data; optimal transport; minimum ordinal
restructuring; conservative transition benchmarks; partial identification;
missing data; finite-sample inference.

\endgroup

\newpage

\section{Introduction}

When panel data are unavailable, repeated cross-sections are a primary source
of information on distributional change. They show how population shares shift
across time or groups but not the individual transitions behind those changes.
This limitation is especially important for ordinal outcomes, where one may
want to know not only whether distributions differ, but also the extent and
nature of restructuring across ordered categories. Examples include
health status, educational attainment, well-being, institutional trust,
and remittance receipt frequency.

The lack of longitudinal data is a persistent empirical challenge. As
\cite{Deaton1985} notes, repeated cross-sections are often available even when
following the same units over time is impossible. Pseudo-panel and
synthetic-panel methods infer economic dynamics from repeated cross-sections
(e.g.,~\citealp{DangEtAl2014,DangLanjouw2023,HeraultJenkins2019}), but require
assumptions that link observations across periods or groups. Methods based on
category shares, cumulative distributions, or stochastic dominance instead
compare ordinal marginals directly (e.g.,~\citealp{Jenkins,Tabri2021}).
These methods can reveal whether and in what direction distributions differ,
but they do not measure the restructuring that the marginal change itself
requires.

This paper develops an intermediate form of information: instead of
reconstructing unobserved individual transitions, I ask what
\emph{minimum ordinal restructuring is required by observed marginal changes.}
This calls for a measure of ordinal change that avoids imposing cardinal
distances, and for a characterization of transition structures that reconcile
the marginals with the least displacement measured by that discrepancy.
When the ordinal outcome is unobserved for some observations, the discrepancy
and benchmark plans are only partially identified; I derive their sharp
identified sets. Sampling uncertainty is
addressed by finite-sample projection-based inference, with derivative-free
search and persistent global exploration used to compute the resulting
nonsmooth projections.

I begin with the measurement problem. The ordered categories define a natural
threshold-crossing geometry: moving between two categories requires crossing
the ordinal thresholds that separate them. I show that norm inducibility,
threshold separability, and consistency with this category-level geometry
characterize a threshold-weighted cumulative-distribution metric for ordinal
distributions under their canonical rank representation. When there is no
substantive basis for weighting adjacent thresholds differently, threshold
neutrality and unit normalization yield the canonical discrepancy used
throughout the paper, with each observed ordinal threshold counting as one unit
of displacement. The construction uses only ordinal information and is
invariant to order-preserving relabelling.

The measurement contribution is related to an axiomatic literature on
ordinal distributions in which cumulative probabilities provide natural
coordinates for ordered categorical data. \citet{BlairLacy1996,BlairLacy2000},
for example, develop measures of ordinal variation using the cumulative
distribution, while \citet{Keska2017} studies transfers between neighboring
categories and exploits their effects on individual cumulative thresholds.
These approaches concern dispersion or inequality of a single distribution,
typically relative to a distinguished reference. The present problem instead
concerns the magnitude of change between two arbitrary ordinal marginals.
The recurrence of cumulative-$L_1$ forms in ordinal measurement is therefore
not accidental: adjacent-category movements isolate individual threshold
coordinates, while separability across distinct cuts makes their absolute
contributions additive. Here these principles determine how population
imbalances across thresholds aggregate into a probability metric for
distributional change. Among existing two-distribution comparisons,
\citet{Cuhadaroglu2023} axiomatically characterizes the Net Difference Index,
which measures relative rank advantage across groups, whereas the discrepancy
studied here is symmetric and measures the magnitude of change between the
marginals.

\cite{Cowell1985} also develops an axiomatic approach to distributional
change, motivated in part by income mobility and horizontal inequity, but takes
the joint distribution of individuals' old and new positions as the primitive
information and aggregates individual-level changes across that distribution.
The present problem reverses this informational direction: the joint transition
distribution is unavailable, and only the two ordinal marginals are observed.
The mobility literature likewise studies individual movements and therefore
takes joint or transition information as its natural starting point; see, for
example, \cite{Cowell_Flachaire_2018}.

A notable exception is \cite{RayGenicot2023}, who axiomatically derive a
panel-independent measure of upward mobility. Their object is directional and
cardinal, rewarding income growth among those initially worse off, whereas the
present paper studies symmetric change in ordinal distributions using only
category order and marginal probabilities. Both approaches show that
transition-relevant measurement need not require individual matching when the
quantity of interest can be defined from marginal information alone.

For ordinal outcomes, this leaves an intermediate empirical problem for the
practitioner. Only the ordinal marginals are observed, but the question is
transition-relevant: how much restructuring is necessarily implied by their
change, and how can that minimum restructuring be organized across category
pairs?

The paper answers these questions in two steps. The first is the measurement
problem just described: the ordinal structure determines the relevant notion
of displacement without imposing cardinal distances. Once this geometry is
fixed, the absence of individual matching turns the second step into an
optimization problem. In the probability theory literature, this is a
classical marginal problem: the observed marginals determine a Fr\'echet class
of admissible joint distributions, and sharp bounds on functionals of the
joint distribution are obtained by optimizing over such classes
\citep{Ruschendorf1991}. Here the relevant functional is aggregate ordinal
displacement. The least restructuring compatible with the marginals is
obtained by minimizing this displacement over the Fr\'echet class. This is
precisely an optimal-transport problem \citep{villani2009optimal}. Its minimum
gives the scalar discrepancy, while its optimizing plans describe the
alternative ways in which that minimum restructuring can be organized across
category pairs. I refer to these plans as \emph{conservative transition
benchmarks}.

The axiomatic construction also connects, from a different direction, to
the mathematical literature on probability metrics, which studies the
structure and problem-dependent choice of distances between probability
distributions; see, for example, \citet{Zolotarev1983} and
\citet{RachevKlebanovStoyanovFabozzi2013}. Of particular relevance, the
Kantorovich--Rubinstein norm extends a given ground metric on a general metric space to the space of signed
measures with zero total mass, and \citet{MellerayPetrovVershik2008} show
that it is the maximal norm compatible with the given ground metric. That
literature takes a prescribed ground geometry as its starting point and
studies its extensions to probability measures or signed measures. The
present construction proceeds in the opposite direction. Ordinal
measurement principles first restrict the admissible category-level geometry
and then determine the corresponding probability metric. Under threshold
neutrality and unit normalization, the ground geometry itself becomes
canonical. Thus the measurement problem determines the relevant probability
metric before any transport representation is invoked.

Once characterized, the discrepancy also admits a classical one-dimensional
transport representation. The underlying identity has a long history,
beginning with Gini's index of dissimilarity \citep{Gini1914} and early work
on its explicit representation, including \citet{DallAglio1956}, and
appearing later in the probability-metric literature in \citet{Vallender}.
Applied to the canonical rank distributions, it shows that the discrepancy
coincides with the Wasserstein--1 distance when transport cost is the number
of ordinal thresholds crossed. The discrepancy therefore measures the
minimum average ordinal displacement required to reconcile the two
marginals. Optimal transport enters as a representation and consequence of
the axiomatic construction, rather than as its starting point. The
characterization therefore provides an ordinal-measurement foundation for
Wasserstein--1 on the canonical rank distributions, rather than taking that
metric as given.

The classical identity delivers the minimum value, but the present analysis
also retains the full set of optimizing plans. This set-valued perspective is
important because the marginals generally do not determine a unique
least-displacement reconstruction. Rather than selecting an arbitrary
optimizer, I preserve all conservative transition benchmarks compatible with
the available information. With missing outcomes, this nonuniqueness is
compounded because the observable data need not determine the marginal
distributions themselves.

Missing outcomes are pervasive in sample surveys, so making the framework
empirically operational requires allowing the ordinal outcome to be only
partially observed. The same minimum-restructuring objects then become set
identified. Under unrestricted missingness (e.g.,~\citealp{manski2005}), the
observable data determine sets of admissible latent marginals. I derive the
sharp identified interval for the discrepancy and characterize the
conservative benchmark plans associated with its endpoints. Because both the
endpoint-attaining marginals and their optimal transport plans may be
nonunique, I also derive sharp cellwise bounds over the endpoint-conditioned
benchmark plans. These bounds distinguish reallocations that are unavoidable
in every least-displacement reconstruction from those that can be avoided by
some such reconstruction.

For inference, I exploit the fact that these identified objects are determined
by a finite-dimensional observable-data distribution. I first construct an
empirical-likelihood confidence region for the observable probabilities and
then project it through the optimization problems defining the discrepancy and
benchmark objects. In the finite categorical model, the empirical likelihood
ratio coincides with the multinomial likelihood ratio, so randomized Monte
Carlo calibration \citep{DUFOUR2006} gives finite-sample-valid confidence
regions, including at the boundary of the probability simplex. Projection then
transfers this validity to the partially identified optimal values and
optimizer sets without requiring differentiability of the relevant mappings
or uniqueness of their optimizers.

Computing these projections is nonstandard because, conditional on the Monte
Carlo draws, the inverted confidence region can be nonsmooth and disconnected.
I use derivative-free search with persistent global exploration of the
observable-parameter space. Under a Near-optimal accessibility condition, the
best feasible values generated by the search converge almost surely to the
exact projection extrema. This separates the statistical validity of the exact
projected confidence sets from the numerical problem of approximating their
extrema.

The empirical application uses repeated cross-sections from the
\emph{Arab Barometer} to study remittance receipt in Lebanon
before and during Lebanon's economic and financial crisis beginning 
in 2019. Although aggregate nominal remittance inflows remained
comparatively stable, remittances became increasingly important for household
support \citep{UNDP2023Remittances}. The data reveal a robust shift toward broader and more regular remittance
receipt. The projected discrepancy interval remains well separated from zero
after accounting for item nonresponse and sampling uncertainty. The projected
benchmark bounds further provide strong numerical evidence that minimum
restructuring can be organized through several movements toward more frequent
receipt, while excluding movement in the opposite direction and requiring some
reassignment from nonreceipt to recurrent receipt.

The paper therefore moves from observed marginal change to progressively
richer statements about what that change requires. The axiomatic analysis
determines how ordinal change should be measured; optimal transport interprets
the measure as minimum restructuring; partial identification accommodates
missing outcomes; and finite-sample projection-based inference incorporates sampling
uncertainty. Throughout, the objective is not to reconstruct an unobserved
panel, but to characterize what repeated cross-sectional marginal changes themselves
imply.  More broadly, the paper contributes to the growing use of optimal transport
in econometric methodology; see \citet{GalichonHenry2026} for a recent survey.
Here optimal transport is used to extract transition-relevant benchmark
information from changes in marginal distributions when the joint distribution is
unobserved.

The remainder of the paper is organized as follows.
Section~\ref{Section Measuring Dist Change} develops the ordinal metric and
conservative transition benchmarks. Section~\ref{Section PI} derives sharp
identified sets under partial observation. Section~\ref{Section Inference}
develops finite-sample projection-based inference.
Section~\ref{Section Implementation} presents the computational implementation
and global-search guarantee. Section~\ref{Section Discussion} discusses
interpretation and limitations. Section~\ref{Section Empirical Section}
presents the empirical application, and Section~\ref{Section Conclusion}
concludes.Longer proofs of the main results are collected in Appendix~A; additional
regularity results, computational details, and extensions are provided in
Appendices~\ref{appendix:projection-measurability}-\ref{appendix:further-discussion}.

\section{Conservative Transition Benchmarks}\label{Section Measuring Dist Change}
\subsection{Canonical Ordinal Geometry and the Discrepancy}\label{subsec:ordinal-geometry}

I consider repeated cross-sections with ordinal outcomes taking values in a
finite ordered set
\(
\Xi=\{\xi_1\prec\xi_2\prec\cdots\prec\xi_K\},
\)
where only the marginal distributions are observed. Since only the ordering
of the categories is meaningful, I begin with their canonical rank
representation. Specifically, let
$r:\Xi\rightarrow\{1,\ldots,K\}$, with $r(\xi_i)=i$ for each $i$, denote the
rank map. Because $\Xi$ is finite, $r$ is bijective. If $\mu$ is a probability
distribution on $\Xi$, its pushforward distribution $r_\#\mu$ satisfies
$(r_\#\mu)(\{i\})=\mu(\{\xi_i\})$, for $i=1,\ldots,K$. Thus the rank
representation preserves the probabilities and ordering of the substantive
categories while replacing their labels by canonical ordinal positions.

The rank representation identifies the positions of the categories on an
ordered chain, but ordering alone does not determine a metric on that chain.
To make explicit the additional structure used to measure ordinal
displacement, consider a metric $d$ on $\Xi$. I say that $d$ is
\emph{threshold additive} if, for every $i<j<\ell$,
\(
d(\xi_i,\xi_\ell)
=
d(\xi_i,\xi_j)+d(\xi_j,\xi_\ell).
\)
Threshold additivity formalizes the idea that moving between two ordered
categories consists of crossing the successive ordinal thresholds that
separate them.

\begin{proposition}[Threshold-additive ordinal metrics]
\label{prop:threshold_metric}
Let $d$ be a metric on the ordered support
$\Xi=\{\xi_1\prec\cdots\prec\xi_K\}$, and set
\(
w_k:=d(\xi_k,\xi_{k+1})>0,
\) k=1,\ldots,K-1. Then $d$ is threshold additive if and only if
\begin{align}\label{eq - Threshhold add metric}
d(\xi_i,\xi_j)
=
\sum_{k=\min\{i,j\}}^{\max\{i,j\}-1}w_k,
\quad i\neq j.
\end{align}
\end{proposition}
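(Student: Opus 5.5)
The plan is to prove the two directions separately. The forward direction will go by induction on the gap $j-i$, and the converse by splitting index sums.

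\emph{Necessity.} Assume $d$ is threshold additive. Since $d$ is symmetric and both sides of \eqref{eq - Threshhold add metric} depend only on $\min\{i,j\}$ and $\max\{i,j\}$, it suffices to treat $i<j$ and show $d(\xi_i,\xi_j)=\sum_{k=i}^{j-1}w_k$. I will induct on $m=j-i\ge 1$. The base case $m=1$ is the definition $w_i=d(\xi_i,\xi_{i+1})$. For $m\ge2$, threshold additivity applied to the triple $i<j-1<j$ gives
\begin{align*}
d(\xi_i,\xi_j)=d(\xi_i,\xi_{j-1})+d(\xi_{j-1},\xi_j).
\end{align*}
The first term equals $\sum_{k=i}^{j-2}w_k$ by the inductive hypothesis, and the second term is $w_{j-1}$, so the two together give $\sum_{k=i}^{j-1}w_k$.

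\emph{Sufficiency.} Assume \eqref{eq - Threshhold add metric} holds and fix $i<j<\ell$. Then
\begin{align*}
d(\xi_i,\xi_\ell)=\sum_{k=i}^{\ell-1}w_k=\sum_{k=i}^{j-1}w_k+\sum_{k=j}^{\ell-1}w_k=d(\xi_i,\xi_j)+d(\xi_j,\xi_\ell),
\end{align*}
which is threshold additivity. The proposition already takes $d$ to be a metric, so nothing further is required. I will also note, as a remark, that for any positive weights $w_k$ the formula does define a metric. It equals $|W(i)-W(j)|$ with $W(i)=\sum_{k<i}w_k$ strictly increasing, so it is the pullback of the absolute-value metric under an injective map. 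The same representation makes positivity of the $w_k$ immediate from the metric axioms.

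I do not expect a real obstacle. The only point needing care is the bookkeeping in the necessity step: threshold additivity is stated only for strictly ordered triples, so symmetry has to be invoked to cover $i>j$, and the induction must use triples $i<j-1<j$ with all indices distinct, which is why it starts at $m=2$.
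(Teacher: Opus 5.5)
Your proposal is correct and follows the paper's own proof. For $i<j$, the paper's ``repeated application of threshold additivity'' is exactly your induction on $j-i$ through the triples $i<j-1<j$; symmetry then covers $i>j$, and the converse is the same splitting of $\sum_{k=i}^{\ell-1}w_k$ at $j$.
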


\begin{proof}
See Appendix~\ref{Proof - prop -- threshold metric}.
\end{proof}

Proposition\ref{prop:threshold_metric} shows that a threshold-additive ordinal geometry is determined
entirely by the costs assigned to crossing the $K-1$ adjacent thresholds. This
separates the measurement problem into two steps. The ordinal structure first
restricts the admissible ground geometry to a weighted path metric; I next ask
how that category-level geometry should extend to probability distributions
over the rank support.

Let
\(
\Delta^K
\coloneqq
\left\{
\gamma\in\mathbb R^K:
\gamma_k\geq 0\ \text{for all }k,\;
\sum_{k=1}^K\gamma_k=1
\right\}
\)
denote the set of probability distributions on the rank support
$\{1,2,\ldots,K\}$. To describe distributional changes, define
\(
\mathcal M_0^K
:=
\left\{
\sigma\in\mathbb R^K:
\sum_{i=1}^K\sigma_i=0
\right\},
\)
the vector space of signed distributional changes on the rank support. In
particular, if $\mu,\nu\in\Delta^K$, then
$\mu-\nu\in\mathcal M_0^K$.\footnote{If
$\sigma\in\mathcal M_0^K$ is nonzero, let
$a=\sum_i\sigma_i^+=\sum_i\sigma_i^-$. Then
$\sigma=a(\mu-\nu)$, where $\mu=\sigma^+/a$ and
$\nu=\sigma^-/a$ are probability distributions. Thus
$\mathcal M_0^K$ is the natural linear space generated by feasible
differences of probability distributions.}

The ordered structure of the support provides a natural basis for this
vector space. For $k=1,\ldots,K-1$, let
\(
b_k:=\delta_k-\delta_{k+1},
\)
where $\delta_k$ is the degenerate distribution concentrated at rank $k$.
The vector $b_k$ represents an elementary signed redistribution across the
adjacent threshold separating categories $k$ and $k+1$. Since
$\mathcal M_0^K$ has dimension $K-1$, and
$b_1,\ldots,b_{K-1}$ are linearly independent, they form a basis of
$\mathcal M_0^K$. Hence every $\sigma\in\mathcal M_0^K$ admits a unique
representation
\(
\sigma
=
\sum_{k=1}^{K-1} c_k(\sigma)b_k.
\)

The coordinates in this adjacent-threshold basis have a particularly simple
form. Comparing components in the preceding representation gives
\(
c_k(\sigma)
=
\sum_{i=1}^k\sigma_i,
\;
k=1,\ldots,K-1.
\)
I therefore write
\(
C_k(\sigma)
:=
\sum_{i=1}^k\sigma_i
\)
and refer to $C_k(\sigma)$ as the \emph{threshold coordinate} of $\sigma$
at threshold $k$. Thus
\(
\sigma
=
\sum_{k=1}^{K-1}C_k(\sigma)b_k.
\)
For two probability distributions $\mu,\nu\in\Delta^K$,
\(
C_k(\mu-\nu)
=
F_\mu(k)-F_\nu(k).
\)
Hence the $k$th coordinate is the signed difference between the population
shares lying weakly below the $k$th ordinal threshold. The basis
decomposition therefore represents every distributional change uniquely by
its net elementary components across the successive thresholds of the
ordered outcome space. This decomposition motivates the aggregation
principle used below.

I restrict attention to probability metrics induced by norms on
$\mathcal M_0^K$. Given a norm $N$, define
\(
\Delta_N(\mu,\nu):=N(\mu-\nu).
\)
Norm inducibility has a useful population interpretation. For any
$\mu,\nu,\lambda\in\Delta^K$ and $t\in[0,1]$,
\(
\Delta_N\big((1-t)\lambda+t\mu,\,
             (1-t)\lambda+t\nu\big)
=
t\Delta_N(\mu,\nu).
\)
Thus a common population component contributes no discrepancy, while the
discrepancy generated by the remaining component scales proportionally with
its population share. In this sense, norm inducibility incorporates
cancellation of common population mass and linear replication of a given
distributional change.

The ordered structure supplies a second aggregation principle. For
$\sigma\in\mathcal M_0^K$, define its \emph{threshold support} by
\(
\mathcal{T}(\sigma)
:=
\left\{
k\in\{1,\ldots,K-1\}:C_k(\sigma)\neq0
\right\}.
\)
I say that $N$ satisfies \emph{threshold separability} if
\(
\mathcal{T}(\sigma)\cap \mathcal{T}(\tau)=\varnothing
\quad\Longrightarrow\quad
N(\sigma+\tau)=N(\sigma)+N(\tau).
\)
The restriction concerns distributional changes whose net effects operate
through disjoint ordinal cuts. Since such components involve distinct
elementary threshold directions in the decomposition above, threshold
separability requires their contributions to overall distributional change
to aggregate without interaction. It does not impose additivity when two
changes operate through the same threshold.

The final requirement connects the distribution-level discrepancy to the
category-level geometry. I impose \emph{local ground-metric consistency}:
\(
N(\delta_k-\delta_{k+1})=w_k,
\; k=1,\ldots,K-1.
\)
Thus the discrepancy associated with one unit of elementary movement across
threshold $k$ agrees with the cost $w_k$ assigned to crossing that threshold
in the underlying ordinal geometry. 

The second step is closely related to the metric-extension problem studied in
the broader theory of probability metrics; see, for example,
\citet{RachevKlebanovStoyanovFabozzi2013}. In the general
Kantorovich--Rubinstein framework, a ground metric is taken as given and one
asks how it extends to signed measures with zero total mass.
\citet{MellerayPetrovVershik2008} show that the Kantorovich--Rubinstein norm is
the maximal norm compatible with that given ground metric.

The characterization below takes a different, measurement-based route. On an
ordered support, the adjacent thresholds provide canonical elementary
directions of distributional change. Local ground-metric consistency fixes the
cost of each such direction, while threshold separability determines how
changes operating through distinct ordinal thresholds aggregate. Conditional
on the threshold weights, these requirements determine the norm completely,
without imposing maximality or beginning from a transport problem.

\begin{theorem}[Threshold-separable ordinal probability metrics]
\label{thm:ordinal_metric}
Let $d$ be a threshold-additive metric on $\Xi$ with threshold weights
$w_1,\ldots,w_{K-1}$ as in Proposition~\ref{prop:threshold_metric}. Let $N$
be a norm on $\mathcal M_0^K$ satisfying threshold separability and the local
ground-metric consistency conditions. Then, for every $\sigma\in\mathcal M_0^K$,
\(
N(\sigma)
=
\sum_{k=1}^{K-1}w_k|C_k(\sigma)|.
\)
Consequently, the induced probability metric is
\(
\Delta_N(\mu,\nu)
=
\sum_{k=1}^{K-1}
w_k|F_\mu(k)-F_\nu(k)|.
\)
Moreover, $\Delta_N$ extends the ground metric $d$ in the sense that
\(
\Delta_N(\delta_i,\delta_j)=d(\xi_i,\xi_j)
\)
for every $i,j$.

Conversely, the weighted cumulative-distribution metric above is induced by
a norm on $\mathcal M_0^K$, satisfies threshold separability, and is locally
consistent with $d$.
\end{theorem}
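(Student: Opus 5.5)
The plan is to decompose an arbitrary $\sigma\in\mathcal M_0^K$ along the adjacent-threshold basis and peel off one threshold at a time using threshold separability. Write
\[
\sigma=\sum_{k=1}^{K-1}C_k(\sigma)b_k .
\]
The key observation is that the threshold coordinates of a single basis vector are $C_j(b_k)=\mathbf 1\{j=k\}$. This is immediate from $C_j(\delta_k-\delta_{k+1})=\sum_{i\le j}(\delta_k-\delta_{k+1})_i$. Hence $\mathcal T(c\,b_k)=\{k\}$ whenever $c\neq0$, and empty otherwise. More generally, for any index set $S$ the vector $\sum_{k\in S}c_kb_k$ has threshold support $\{k\in S:c_k\neq0\}$, since the map $\sigma\mapsto(C_k(\sigma))_k$ is the coordinate map of this basis.

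The main step is an induction on the number of thresholds. Set $\sigma_m:=\sum_{k=1}^{m}C_k(\sigma)b_k$. Then $\mathcal T(\sigma_{m-1})\subseteq\{1,\dots,m-1\}$ and $\mathcal T(C_m(\sigma)b_m)\subseteq\{m\}$, which are disjoint. Threshold separability therefore gives
\[
N(\sigma_m)=N(\sigma_{m-1})+N\bigl(C_m(\sigma)b_m\bigr).
\]
When a coefficient vanishes, the summand is the zero vector and contributes $N(0)=0$, so this case needs no separate treatment. Absolute homogeneity of the norm and local ground-metric consistency give $N(C_m(\sigma)b_m)=|C_m(\sigma)|\,N(b_m)=w_m|C_m(\sigma)|$. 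Iterating from $m=1$ to $K-1$ yields $N(\sigma)=\sum_k w_k|C_k(\sigma)|$. Substituting $\sigma=\mu-\nu$ and using $C_k(\mu-\nu)=F_\mu(k)-F_\nu(k)$ gives the stated form of $\Delta_N$.

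For the extension property, take $i<j$ without loss of generality, by symmetry of both sides. Then $F_{\delta_i}(k)-F_{\delta_j}(k)=\mathbf 1\{i\le k<j\}$, so
\[
\Delta_N(\delta_i,\delta_j)=\sum_{k=i}^{j-1}w_k .
\]
By Proposition~\ref{prop:threshold_metric} this equals $d(\xi_i,\xi_j)$. The case $i=j$ is trivial.

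For the converse, define $N(\sigma):=\sum_k w_k|C_k(\sigma)|$. This is a weighted $\ell_1$ norm composed with the linear isomorphism $\sigma\mapsto(C_1(\sigma),\dots,C_{K-1}(\sigma))$ from $\mathcal M_0^K$ onto $\mathbb R^{K-1}$. Since every $w_k>0$, it is a norm, with definiteness following from injectivity of the coordinate map. Threshold separability holds because the $C_k$ are linear. If the supports of $\sigma$ and $\tau$ are disjoint, then at each $k$ at least one of $C_k(\sigma)$ and $C_k(\tau)$ is zero, so $|C_k(\sigma+\tau)|=|C_k(\sigma)|+|C_k(\tau)|$ termwise. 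Local consistency is the computation $C_j(b_k)=\mathbf 1\{j=k\}$.

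I expect no serious obstacle. The one point needing care is verifying that the partial sums $\sigma_{m-1}$ and the new term really have disjoint threshold supports, so that separability legitimately applies at each inductive step. This relies entirely on the threshold coordinates being exactly the basis coefficients, which the paper established just before the theorem.
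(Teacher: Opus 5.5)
Your proposal is correct and follows the paper's proof essentially step for step: the adjacent-threshold decomposition, then threshold separability with absolute homogeneity and local consistency for the forward direction, the indicator computation $C_k(\delta_i-\delta_j)=\mathbf 1\{i\le k<j\}$ for the extension property, and the weighted $\ell_1$ norm composed with the cumulative-coordinate isomorphism for the converse. Your induction on partial sums, which checks that $\mathcal T(\sigma_{m-1})\subseteq\{1,\ldots,m-1\}$ is disjoint from $\mathcal T(C_m(\sigma)b_m)$, simply makes explicit the ``repeated application of threshold separability'' that the paper leaves implicit.
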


\begin{proof}
See Appendix~\ref{Proof - Thm - ordinal-metric charac}.
\end{proof}

Theorem~\ref{thm:ordinal_metric} separates the choice of coordinates from the
rule used to aggregate them. The threshold coordinates are dictated by the
ordered category space: they record the net population imbalance across each
ordinal cut and give the unique decomposition of a signed distributional
change into elementary adjacent-threshold movements. Norm inducibility
determines how discrepancy behaves under common population components and
population scaling, while threshold separability determines how changes
operating through distinct ordinal cuts aggregate. Local ground-metric
consistency then fixes the cost of each elementary threshold direction.
The weighted cumulative-distribution form is therefore a consequence of
these measurement principles rather than an assumed functional form. Thus Theorem~\ref{thm:ordinal_metric}
 solves the distributional extension problem once a
threshold-additive ordinal geometry is specified. The remaining measurement
question concerns the threshold weights themselves.

Writing
\(
D_w(\mu,\nu)
:=
\sum_{k=1}^{K-1}
w_k|F_\mu(k)-F_\nu(k)|,
\)
Theorem~\ref{thm:ordinal_metric} leaves the threshold weights themselves
unrestricted. This is important: ordinality alone does not determine their
relative magnitudes. If substantive information is available that justifies
assigning different importance to different thresholds, the corresponding
weighted metric $D_w$ can be used directly. For example, if the ordered categories are associated with externally
justified cardinal locations $a_1<\cdots<a_K$, setting
$w_k=a_{k+1}-a_k$ gives
$d_w(\xi_i,\xi_j)=|a_i-a_j|$. Thus the canonical equal-threshold
geometry is the special case appropriate when the available information
provides only the ordering of the categories.

When no such additional information is available, I impose a separate
neutrality requirement. I say that the ordinal geometry satisfies
\emph{threshold neutrality} if
\(
d(\xi_k,\xi_{k+1})
=
d(\xi_\ell,\xi_{\ell+1})
\)
for every $k,\ell$. Threshold neutrality does not assert that successive
categories are equally separated on an underlying cardinal scale. Rather,
it treats each observed ordinal threshold symmetrically when the available
ordinal information provides no basis for assigning different costs to
different crossings.

Normalizing the common threshold cost to one yields the canonical ordinal
geometry
\(
d^{\mathrm{can}}(\xi_i,\xi_j)=|i-j|.
\)
Theorem~\ref{thm:ordinal_metric} then immediately gives the probability
metric used throughout the remainder of the paper.

\begin{corollary}[Canonical ordinal discrepancy]
\label{cor:canonical_discrepancy}
Under threshold neutrality and unit normalization, the unique norm-induced
probability metric that is threshold separable and locally consistent with
the canonical ordinal geometry is
\(
D(\mu,\nu)
:=
\sum_{k=1}^{K-1}|F_\mu(k)-F_\nu(k)|.
\)
\end{corollary}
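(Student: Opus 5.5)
The plan is to derive the corollary as a direct specialization of Theorem~\ref{thm:ordinal_metric}, with a short preliminary step that pins down the ground geometry. First I would show that threshold neutrality and unit normalization leave exactly one threshold-additive metric. Threshold neutrality says that all adjacent costs $w_k=d(\xi_k,\xi_{k+1})$ share a common value $w>0$, and unit normalization sets $w=1$. Proposition~\ref{prop:threshold_metric} then gives
\[
d(\xi_i,\xi_j)=\sum_{k=\min\{i,j\}}^{\max\{i,j\}-1}1=|i-j|,
\]
so $d=d^{\mathrm{can}}$. The threshold weights are therefore $w_1=\cdots=w_{K-1}=1$.

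Next comes \emph{uniqueness}. Let $N$ be any norm on $\mathcal M_0^K$ that is threshold separable and locally consistent with $d^{\mathrm{can}}$, meaning $N(\delta_k-\delta_{k+1})=1$ for each $k$. The first (direct) part of Theorem~\ref{thm:ordinal_metric} with unit weights gives $N(\sigma)=\sum_{k=1}^{K-1}|C_k(\sigma)|$ for every $\sigma\in\mathcal M_0^K$. Substituting $\sigma=\mu-\nu$ and using $C_k(\mu-\nu)=F_\mu(k)-F_\nu(k)$ yields $\Delta_N(\mu,\nu)=D(\mu,\nu)$. Hence any admissible metric coincides with $D$. The norm itself is also unique, because $N$ is determined on all of $\mathcal M_0^K$.

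Then comes \emph{existence}. The converse part of Theorem~\ref{thm:ordinal_metric}, again with $w_k\equiv1$, shows that $D$ is induced by the norm $\sigma\mapsto\sum_k|C_k(\sigma)|$, is threshold separable, and is locally consistent with $d^{\mathrm{can}}$. So the class of admissible metrics is nonempty and equals $\{D\}$.

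There is no real obstacle here, since all the substantive work sits in Theorem~\ref{thm:ordinal_metric}. The one point needing care is the first step. The hypotheses of that theorem are stated relative to a fixed threshold-additive metric $d$, so I must make clear that neutrality plus normalization selects $d^{\mathrm{can}}$ uniquely. Only then is ``the unique metric'' well defined, rather than unique for each possible choice of geometry.
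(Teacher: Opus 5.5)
Your proposal is correct and follows the paper's route: the paper's proof sets $w_k=1$ under threshold neutrality and then invokes Theorem~\ref{thm:ordinal_metric}. Your version makes explicit two steps that the one-line proof leaves implicit: that neutrality plus normalization yields $d^{\mathrm{can}}$ via Proposition~\ref{prop:threshold_metric}, and that the converse part of the theorem supplies existence, so the admissible class is exactly $\{D\}$.
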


\begin{proof}
Under threshold neutrality, $w_k=1$ for every $k$, and the result follows
immediately from Theorem~\ref{thm:ordinal_metric}.
\end{proof}

The canonical discrepancy is therefore not obtained by assigning cardinal
values to the substantive categories. It is selected by the canonical
ordinal geometry together with norm inducibility, threshold separability,
and local consistency. Equivalently, it measures the size of a signed
distributional change by adding the absolute net population imbalances
across the observed ordinal thresholds.

The construction is also invariant to order-preserving relabellings. Let
$h:\Xi\rightarrow\Xi'$ be an order-preserving bijection between finite
ordered category spaces, and let $r$ and $r'$ denote their canonical rank
maps. Since $h$ preserves order,
\(
r'\circ h=r.
\)
Hence, for any probability distribution $\mu$ on $\Xi$,
\(
r'_\#(h_\#\mu)
=
(r'\circ h)_\#\mu
=
r_\#\mu,
\)
and similarly for $\nu$. Thus an order-preserving relabelling leaves the
canonical rank distributions, and therefore the discrepancy, unchanged.
In particular, arbitrary strictly increasing numerical recodings of the
substantive categories have no effect on the measure.

\begin{example}[Credit ratings]
Consider the ordinal credit-rating scale
\[
\mathrm{CCC}\prec\mathrm{B}\prec\mathrm{BB}\prec
\mathrm{BBB}\prec\mathrm{A}\prec\mathrm{AA}\prec\mathrm{AAA}.
\]
The canonical rank map assigns the integers $1,\ldots,7$ while preserving
the ordering of the substantive categories. Under the canonical geometry,
moving from BBB to A crosses one observed ordinal threshold, whereas moving
from BB to A crosses two. These distances describe displacement on the
ordered category space; they do not assert that the underlying economic
differences between successive credit ratings are cardinally equal.

The same construction applies to verbally defined survey outcomes. If
\[
\mathrm{poor}\prec\mathrm{fair}\prec\mathrm{good}
\prec\mathrm{very\ good}\prec\mathrm{excellent}
\]
denotes self-reported health, the canonical rank representation again
records only the number of observed ordinal thresholds separating two
categories.
\end{example}

The characterization above selects the discrepancy from ordinal geometry
and distributional aggregation principles without appealing to any latent
transition structure. Optimal transport provides a separate representation
that gives the same metric an additional interpretation.

Because only the marginal distributions are observed, many joint
distributions, and hence many latent transition structures, are compatible
with them. A natural benchmark asks for the joint distribution that
reconciles the marginals using the least aggregate displacement under the
canonical ordinal geometry.

\begin{proposition}
\label{prop:OT_representation}
The canonical ordinal discrepancy admits the representation
\[
D(\mu,\nu)
=
\min_{\pi\in\Pi(\mu,\nu)}
\sum_{i=1}^K\sum_{j=1}^K |i-j|\pi_{ij},
\]
where $\Pi(\mu,\nu)$ denotes the set of joint distributions on
$\{1,\ldots,K\}^2$ with marginals $\mu$ and $\nu$.
\end{proposition}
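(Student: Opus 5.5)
The proof has two halves: a lower bound valid for every coupling, and an explicit coupling that attains it. Both rest on the telescoping identity $|i-j|=\sum_{k=1}^{K-1}\bigl|\mathbf 1\{i\le k\}-\mathbf 1\{j\le k\}\bigr|$. It holds because the $k$th summand equals one exactly when threshold $k$ separates $i$ and $j$, that is, when $\min\{i,j\}\le k<\max\{i,j\}$. This is the ground-level counterpart of Proposition~\ref{prop:threshold_metric} with unit weights.

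\emph{Lower bound.} Fix any $\pi\in\Pi(\mu,\nu)$ and let $(X,Y)\sim\pi$. Applying the identity and exchanging the finite sums gives
\[
\sum_{i,j}|i-j|\pi_{ij}=\sum_{k=1}^{K-1}E_\pi\bigl|\mathbf 1\{X\le k\}-\mathbf 1\{Y\le k\}\bigr|.
\]
For each $k$, Jensen's inequality (or simply $|E Z|\le E|Z|$) bounds the $k$th term below by
\[
\bigl|P(X\le k)-P(Y\le k)\bigr|=|F_\mu(k)-F_\nu(k)|.
\]
This step uses only the marginals of $\pi$. Summing over $k$ shows that the transport cost of every coupling is at least $D(\mu,\nu)$ from Corollary~\ref{cor:canonical_discrepancy}.

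\emph{Attainment.} I use the comonotone (quantile) coupling. Let $U\sim\mathrm{Unif}(0,1)$ and set $X=F_\mu^{-1}(U)$ and $Y=F_\nu^{-1}(U)$, where $F^{-1}(u)=\min\{k:F(k)\ge u\}$. Then $X\sim\mu$ and $Y\sim\nu$, so the joint law $\pi^\ast$ lies in $\Pi(\mu,\nu)$. Moreover $\{X\le k\}=\{U\le F_\mu(k)\}$ and $\{Y\le k\}=\{U\le F_\nu(k)\}$. These events are nested, so $\mathbf 1\{X\le k\}-\mathbf 1\{Y\le k\}$ has a constant sign for each $k$. Hence Jensen holds with equality in every term, and the cost of $\pi^\ast$ equals $D(\mu,\nu)$.

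Existence of the minimum is also direct, since $\Pi(\mu,\nu)$ is a nonempty compact polytope and the objective is linear. For a purely discrete argument, the north-west corner rule on cumulative masses produces the same $\pi^\ast$: put $\pi^\ast_{ij}$ equal to the length of $(F_\mu(i-1),F_\mu(i)]\cap(F_\nu(j-1),F_\nu(j)]$.

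The main point needing care is the attainment step: checking that the quantile construction has the right marginals and that the nested-event argument gives equality termwise. The lower bound is routine once the telescoping identity is written down.
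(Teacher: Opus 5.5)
Your argument is correct, but it takes a different route from the paper. The paper does not argue from first principles. It cites Vallender's real-line identity $W_1(P,Q)=\int_{\mathbb R}|F(x)-G(x)|\,dx$ for laws with finite first moments. It then evaluates the integral for step CDFs jumping at $1,\ldots,K$, where it collapses to $\sum_{k=1}^{K-1}|F_\mu(k)-F_\nu(k)|$.

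You instead reprove the identity inside the finite model. Your decomposition $|i-j|=\sum_k|\mathbf 1\{i\le k\}-\mathbf 1\{j\le k\}|$ is the discrete analogue of the layer-cake formula $|x-y|=\int_{\mathbb R}|\mathbf 1\{x\le t\}-\mathbf 1\{y\le t\}|\,dt$ that underlies Vallender's result. The comonotone coupling then supplies attainment.

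The paper's route is shorter and places the result in its historical lineage, but it rests on an external theorem and identifies only the optimal value, not an optimizer. Yours is self-contained, exhibits an explicit optimal plan, and mirrors the paper's threshold-crossing interpretation term by term.

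Your route also yields more than the cited identity. Equality in $|EZ_k|\le E|Z_k|$ forces $Z_k$ to have constant sign. Hence a coupling $\pi\in\Pi(\mu,\nu)$ is optimal if and only if $\bigl(\sum_{i\le k<j}\pi_{ij}\bigr)\bigl(\sum_{j\le k<i}\pi_{ij}\bigr)=0$ for every $k$, i.e., mass crosses each threshold in one direction only. This characterizes the entire set $\Pi^\star(\mu,\nu)$, not just one element of it. It shows at once that if $F_\nu(k)\ge F_\mu(k)$ for all $k$, then every conservative benchmark has $\pi_{ij}=0$ for all $i<j$. That is the mechanism behind the excluded above-diagonal cells in the empirical application.
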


\begin{proof}
See Appendix~\ref{Proof - Prop}.
\end{proof}

The one-dimensional identity underlying Proposition~\ref{prop:OT_representation} has a
substantially earlier history than the modern Wasserstein terminology.~\cite{Gini1914} introduced an index of dissimilarity that, on the
real line with absolute-distance cost, coincides with the corresponding
minimum-transport functional.~\cite{Salvemini1943} obtained
the integrated-CDF representation for discrete distributions, and~\cite{DallAglio1956} extended the representation to general
distributions; see also \cite{RachevKlebanovStoyanovFabozzi2013} for
historical discussion.~\cite{Vallender} later gave the
identity in the modern probability-metric literature.

In the present setting, Proposition~\ref{prop:OT_representation} specializes this identity to
the canonical rank distributions generated by the ordinal measurement
structure developed above. It shows that the axiomatically characterized
discrepancy coincides with the Wasserstein--1 distance under the canonical
ordinal metric. Consequently, $D(\mu,\nu)$ is the minimum expected number
of ordinal thresholds that must be crossed to reconcile the two marginal
distributions.

\begin{remark}[Relation to the Gini mean difference]\label{Remark: Gini}
The connection with the Gini mean difference is distinct from the
historical identification of the Kantorovich metric with Gini's index
of dissimilarity. If $X\sim\mu$ and $Y\sim\nu$ are represented on their
canonical ordinal ranks, then
\(
D(\mu,\nu)
=
\min_{\pi\in\Pi(\mu,\nu)}
E_\pi|X-Y|.
\)
By contrast, the Gini mean difference evaluates the same
absolute-difference criterion under independence when the two marginals
coincide. Thus, if $X$ and $X'$ are independent draws from $\mu$,
\(
E|X-X'|
\)
is the Gini mean difference of the canonical rank distribution. More generally, evaluating absolute ordinal displacement under
different couplings provides a useful comparison between minimum,
independent, and maximal reassignment. Appendix~\ref{appendix:gini-dependence} develops this
comparison.
\end{remark}

The transport representation provides information beyond the scalar
discrepancy. Any minimizer
\(
\pi^\star
\in
\Pi^\star(\mu,\nu)
:=
\arg\min_{\pi\in\Pi(\mu,\nu)}
\sum_{i=1}^K\sum_{j=1}^K|i-j|\pi_{ij}
\)
describes one way of organizing this minimum aggregate displacement across
the substantive categories. I interpret such minimizers as
\emph{conservative transition benchmarks}.

These benchmark plans are not estimates of realized individual
transitions, which are not identified from repeated cross-sections. Rather,
they are least-displacement reconstructions compatible with the given
marginals. For every admissible joint distribution
$\pi\in\Pi(\mu,\nu)$,
\(
\sum_{i=1}^K\sum_{j=1}^K|i-j|\pi_{ij}
\geq
D(\mu,\nu).
\)
Thus any latent transition structure consistent with the marginals must
involve at least $D(\mu,\nu)$ units of aggregate ordinal displacement.
Optimal transport plans attain this lower bound and characterize the ways
in which the minimum required restructuring can be distributed across
ordered categories.
\subsection{Illustrative Example}\label{subsec:illustrative-example}

Consider the four-category distributions
\(
\mu=(0.4,0.3,0.2,0.1),
\;
\nu=(0.2,0.3,0.3,0.2).
\)
Their canonical ordinal discrepancy is
\(
D(\mu,\nu)
=
\sum_{k=1}^{3}|F_\mu(k)-F_\nu(k)|
=
0.5.
\)

Figure~\ref{fig:optimal_nonoptimal_couplings} displays two optimal transport plans attaining
this value. Both reconcile the same marginal distributions with the minimum
aggregate ordinal displacement, but they allocate the required reassignment
differently across category pairs. Thus the conservative transition benchmark
need not be unique even when the marginals, and hence the minimum amount of
restructuring, are point identified.
\begin{figure}[htbp]
\centering
\begin{tikzpicture}
\begin{groupplot}[
    group style={
        group size=2 by 1,
        horizontal sep=4cm
    },
    width=0.3\textwidth,
    height=0.3\textwidth,
    colormap={whitered}{
        color(0cm)=(white);
        color(1cm)=(red)
    },
    point meta min=0,
    point meta max=0.3,
    xmin=0.5, xmax=4.5,
    ymin=0.5, ymax=4.5,
    xtick={1,2,3,4},
    ytick={1,2,3,4},
    xlabel={Destination category $j$},
    ylabel={Origin category $i$},
    enlargelimits=false,
    axis on top,
    y dir=reverse,
    nodes near coords,
    nodes near coords style={font=\scriptsize, text=black},
    every node near coord/.append style={
        /pgf/number format/fixed,
        /pgf/number format/precision=1
    },
    every axis plot/.append style={draw=black}
]

\nextgroupplot[
    title={Optimal transport plan $\pi^*$},
    colorbar,
    colorbar style={
        at={(1.18,0.5)},
        anchor=west,
        height=0.26\textwidth,
        width=0.12cm,
        ytick={0,0.1,0.2,0.3},
        y dir=normal,
        nodes near coords=false,
        every node near coord/.append style={opacity=0},
    }
]
\addplot[
    matrix plot*,
    mesh/rows=4,
    mesh/cols=4,
    point meta=explicit,
] coordinates {
    (1,1) [0.2] (2,1) [0.2] (3,1) [0.0] (4,1) [0.0]
    (1,2) [0.0] (2,2) [0.1] (3,2) [0.2] (4,2) [0.0]
    (1,3) [0.0] (2,3) [0.0] (3,3) [0.1] (4,3) [0.1]
    (1,4) [0.0] (2,4) [0.0] (3,4) [0.0] (4,4) [0.1]
};

\nextgroupplot[
    title={Optimal transport plan $\tilde{\pi}$},
    colorbar,
    colorbar style={
        at={(1.18,0.5)},
        anchor=west,
        height=0.26\textwidth,
        width=0.12cm,
        ytick={0,0.1,0.2,0.3},
        scaled ticks=false,
        y dir=normal,
        nodes near coords=false,
        every node near coord/.append style={opacity=0},
    }
]
\addplot[
    matrix plot*,
    mesh/rows=4,
    mesh/cols=4,
    point meta=explicit,
] coordinates {
    (1,1) [0.2] (2,1) [0.0] (3,1) [0.2] (4,1) [0.0]
    (1,2) [0.0] (2,2) [0.3] (3,2) [0.0] (4,2) [0.0]
    (1,3) [0.0] (2,3) [0.0] (3,3) [0.1] (4,3) [0.1]
    (1,4) [0.0] (2,4) [0.0] (3,4) [0.0] (4,4) [0.1]
};

\end{groupplot}
\end{tikzpicture}
\caption{Two optimal transport plans with common marginals. Both attain
$D(\mu,\nu)=0.5$ but organize the minimal reassignment differently.}
\label{fig:optimal_nonoptimal_couplings}
\end{figure}

This distinction motivates the set-valued treatment of benchmark plans below.
Rather than selecting an arbitrary optimizer, the analysis retains all
least-displacement reconstructions compatible with the available marginal
information.

\section{Conservative Transition Benchmarks under Partial Information}
\label{Section PI}

I now allow the ordinal outcomes to be missing for some observations. Let
$X\sim\mu$ and $Y\sim\nu$ denote the latent ordinal outcomes in the two
populations, and let $Z_X$ and $Z_Y$ indicate whether they are observed.
The observable distributions are summarized by
\[
\mathbf q_X=(q_{X0},q_{X1},\ldots,q_{XK})\in\Delta^{K+1},
\qquad
\mathbf q_Y=(q_{Y0},q_{Y1},\ldots,q_{YK})\in\Delta^{K+1},
\]
where
\(
q_{X0}=\Pr(Z_X=0),
\;
q_{Xk}=\Pr(Z_X=1,X=k),
\)
and analogously for $\mathbf q_Y$.

Under unrestricted missingness, the latent marginal distributions are
partially identified. Writing
$\mathbf q_X^+=(q_{X1},\ldots,q_{XK})$, the sharp identified set for
$\mu$ is
\[
\mathcal M_\mu
=
\left\{
\mathbf q_X^+ + q_{X0}r:r\in\Delta^K
\right\}
=
\left\{
\gamma\in\Delta^K:
q_{Xk}\leq\gamma_k\leq q_{Xk}+q_{X0},
\; k=1,\ldots,K
\right\}.
\]
Similarly,
\(
\mathcal M_\nu
=
\left\{
\eta\in\Delta^K:
q_{Yk}\leq\eta_k\leq q_{Yk}+q_{Y0},
\; k=1,\ldots,K
\right\}
\)
is the sharp identified set for $\nu$. Because missingness is unrestricted separately in the two populations,
there are no cross-population restrictions on the allocation of the
unobserved mass. Hence the sharp joint identified set for
$(\mu,\nu)$ is
\(
\mathcal M_\mu\times\mathcal M_\nu.
\)
Thus, for example,
\(
\sum_{\ell=1}^k q_{X\ell}
\leq
F_\mu(k)
\leq
\sum_{\ell=1}^k q_{X\ell}+q_{X0},
\; k=1,\ldots,K-1,
\)
with analogous sharp bounds for $F_\nu$.
Since the discrepancy depends on the latent marginals, missingness also
partially identifies the minimum ordinal restructuring implied by the two
populations.

\begin{theorem}
\label{thm:discrepancy_bounds}
The sharp identified set for $D(\mu,\nu)$ is the interval
$[\underline D,\overline D]$, where both endpoints are attained and
\(
\underline D
=
\min_{\gamma\in\mathcal M_\mu,\ \eta\in\mathcal M_\nu}
D(\gamma,\eta),
\;
\overline D
=
\max_{\gamma\in\mathcal M_\mu,\ \eta\in\mathcal M_\nu}
D(\gamma,\eta).
\)
The lower endpoint is the value of the linear program
\begin{align*}
\underline D
&=
\min_{\gamma,\eta,t}
\sum_{k=1}^{K-1}t_k,\;\text{subject to}\;\gamma\in\mathcal M_\mu,\;
\eta\in\mathcal M_\nu,\;\text{and}\\
&\qquad-t_k
\leq
F_\gamma(k)-F_\eta(k)
\leq
t_k,
\; k=1,\ldots,K-1.
\end{align*}

For the upper endpoint, let
$\mathcal S_K=\{-1,1\}^{K-1}$ and, for each
$s\in\mathcal S_K$, define
\[
\overline D_s
=
\max_{\gamma\in\mathcal M_\mu,\ \eta\in\mathcal M_\nu}
\sum_{k=1}^{K-1}
s_k\bigl(F_\gamma(k)-F_\eta(k)\bigr).
\]
Then
\(
\overline D
=
\max_{s\in\mathcal S_K}\overline D_s.
\)
Hence $\underline D$ is the value of one linear program, while
$\overline D$ is the maximum of the values of $2^{K-1}$ linear programs.
\end{theorem}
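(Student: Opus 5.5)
The plan is to establish the interval structure from continuity and convexity, and then to derive the two computational representations by standard linear-programming reformulations.

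\emph{Sharpness and interval form.} Since the sharp joint identified set for $(\mu,\nu)$ is $\mathcal M_\mu\times\mathcal M_\nu$, the sharp identified set for $D(\mu,\nu)$ is the image
\(
\{D(\gamma,\eta):\gamma\in\mathcal M_\mu,\eta\in\mathcal M_\nu\}.
\)
Each of $\mathcal M_\mu$ and $\mathcal M_\nu$ is the image of the compact convex simplex $\Delta^K$ under an affine map, so the product is compact, convex and therefore connected. The map $(\gamma,\eta)\mapsto D(\gamma,\eta)=\sum_k|F_\gamma(k)-F_\eta(k)|$ is continuous, because each $F_\gamma(k)$ is linear in $\gamma$. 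The image is therefore a compact connected subset of $\mathbb R$. That makes it a closed interval, and by Weierstrass both its endpoints are attained; these endpoints are exactly $\underline D$ (the minimum) and $\overline D$ (the maximum).

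\emph{Lower endpoint.} I will use the standard epigraph reformulation. For any feasible $(\gamma,\eta,t)$ we have $t_k\ge|F_\gamma(k)-F_\eta(k)|$, so the objective $\sum_k t_k$ is at least $D(\gamma,\eta)$, which in turn is at least $\underline D$. Conversely, given a minimizer $(\gamma^*,\eta^*)$ of $D$, setting $t_k=|F_{\gamma^*}(k)-F_{\eta^*}(k)|$ gives a feasible point achieving $\underline D$. The constraints defining $\mathcal M_\mu$ and $\mathcal M_\nu$ are linear box and simplex constraints, and $F_\gamma(k)$ is linear in $\gamma$, so the program is a linear program.

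\emph{Upper endpoint.} The key identity is $|x|=\max_{s\in\{-1,1\}}sx$, applied coordinatewise. This gives
\[
D(\gamma,\eta)=\max_{s\in\mathcal S_K}\sum_{k=1}^{K-1}s_k\bigl(F_\gamma(k)-F_\eta(k)\bigr),
\]
because the sign choices decouple across $k$. Taking the maximum over $(\gamma,\eta)$ and interchanging the two maxima, which is always permitted for suprema, yields $\overline D=\max_s\overline D_s$. Each $\overline D_s$ maximizes a linear objective over a compact polytope, so it is a linear program whose value is attained, and there are $|\mathcal S_K|=2^{K-1}$ of them.

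The only step needing care is the sharpness claim. There one must make sure the identified set for the discrepancy is the full image, with no gaps. The convexity of the product set (a rectangle of polytopes), combined with continuity of $D$, handles this; the intermediate value theorem could be applied along the segment joining a minimizer to a maximizer. Everything else is routine.
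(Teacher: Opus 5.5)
Your proposal is correct and follows essentially the same route as the paper. Sharpness comes from the continuous image of the compact connected product $\mathcal M_\mu\times\mathcal M_\nu$; $\underline D$ comes from the epigraph reformulation; and $\overline D$ comes from the sign representation of the $\ell_1$ norm with an interchange of maxima (your two-sided check that the linear program's value equals $\underline D$ is just a slightly more explicit version of the paper's step).
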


\begin{proof}
See Appendix~\ref{Proof - Thm PI}.
\end{proof}

Thus $[\underline D,\overline D]$ gives the sharp range of minimum ordinal
restructuring across all latent marginals compatible with the observed
data.

\subsection{Endpoint-Conditioned Transport Plans}

The discrepancy interval does not determine how the corresponding minimum
restructuring is allocated across category pairs. For
$(\gamma,\eta)\in\mathcal M_\mu\times\mathcal M_\nu$, let
\[
\Pi^*(\gamma,\eta)
:=
\arg\min_{\pi\in\Pi(\gamma,\eta)}
\sum_{i=1}^K\sum_{j=1}^K |i-j|\pi_{ij}.
\]
The benchmark-plan sets associated with the two discrepancy endpoints are
\[
\Pi_L^*
:=
\bigcup_{\substack{
\gamma\in\mathcal M_\mu,\ \eta\in\mathcal M_\nu\\
D(\gamma,\eta)=\underline D
}}
\Pi^*(\gamma,\eta),
\qquad
\Pi_U^*
:=
\bigcup_{\substack{
\gamma\in\mathcal M_\mu,\ \eta\in\mathcal M_\nu\\
D(\gamma,\eta)=\overline D
}}
\Pi^*(\gamma,\eta).
\]
For $e\in\{L,U\}$ and each cell $(i,j)$, define
\(
\underline\pi^{\,e}_{ij}
:=
\min_{\pi\in\Pi_e^*}\pi_{ij},
\;
\overline\pi^{\,e}_{ij}
:=
\max_{\pi\in\Pi_e^*}\pi_{ij}.
\)

\begin{theorem}
\label{thm - PI coupling endpoint bounds}
The endpoint-conditioned benchmark sets $\Pi_L^*$ and $\Pi_U^*$ are nonempty and compact. Consequently, for every
$(i,j)\in\{1,\ldots,K\}^2$ and $e\in\{L,U\}$, the endpoint-conditioned
cell bounds are attained:
\(
\underline\pi^{\,e}_{ij}
=
\min_{\pi\in\Pi_e^\ast}\pi_{ij},
\;
\overline\pi^{\,e}_{ij}
=
\max_{\pi\in\Pi_e^\ast}\pi_{ij}.
\)
\end{theorem}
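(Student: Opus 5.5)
The plan is to write each endpoint set as the projection of a compact set, to get nonemptiness from Theorem~\ref{thm:discrepancy_bounds}, and then to read off the cell bounds from continuity of the coordinate maps.

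\textbf{Reformulation.} For $e=L$ put $D_L:=\underline D$, and for $e=U$ put $D_U:=\overline D$. Define
\[
\mathcal G_e:=\Bigl\{(\gamma,\eta,\pi):\gamma\in\mathcal M_\mu,\ \eta\in\mathcal M_\nu,\ \pi\in\Pi(\gamma,\eta),\ D(\gamma,\eta)=D_e,\ \textstyle\sum_{i,j}|i-j|\pi_{ij}=D_e\Bigr\}.
\]
By Proposition~\ref{prop:OT_representation}, for a feasible pair $(\gamma,\eta)$ a coupling $\pi\in\Pi(\gamma,\eta)$ is optimal exactly when its cost equals $D(\gamma,\eta)$. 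Hence $\Pi_e^*$ is the image of $\mathcal G_e$ under the coordinate projection $(\gamma,\eta,\pi)\mapsto\pi$. In fact the constraint $D(\gamma,\eta)=D_e$ is redundant once the cost constraint holds, but keeping it does no harm.

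\textbf{Compactness.} The sets $\mathcal M_\mu$ and $\mathcal M_\nu$ are defined by finitely many linear equalities and inequalities inside simplices, so they are compact polytopes. The coupling constraint $\pi\in\Pi(\gamma,\eta)$ requires $\pi\ge0$, row sums equal to $\gamma$ and column sums equal to $\eta$. These are linear in $(\gamma,\eta,\pi)$ jointly, so they cut out a closed set, and $\pi$ lies in the compact simplex $\Delta^{K^2}$. The maps $D(\gamma,\eta)=\sum_k|F_\gamma(k)-F_\eta(k)|$ and $\pi\mapsto\sum|i-j|\pi_{ij}$ are continuous, so the two level-set conditions are closed. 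Therefore $\mathcal G_e$ is a closed subset of a compact set in a finite-dimensional space, hence compact. Its continuous image $\Pi_e^*$ is then compact.

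\textbf{Nonemptiness.} Theorem~\ref{thm:discrepancy_bounds} states that both endpoints are attained, so there is some $(\gamma_e,\eta_e)\in\mathcal M_\mu\times\mathcal M_\nu$ with $D(\gamma_e,\eta_e)=D_e$. The set $\Pi(\gamma_e,\eta_e)$ contains the product coupling $\gamma_e\eta_e^\top$, so it is nonempty. It is also a compact polytope, and the transport cost is linear, so a minimizer exists. Thus $\Pi^*(\gamma_e,\eta_e)\neq\varnothing$, and this set is contained in $\Pi_e^*$.

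\textbf{Cell bounds.} For each cell $(i,j)$ the map $\pi\mapsto\pi_{ij}$ is continuous on the nonempty compact set $\Pi_e^*$. By Weierstrass it attains its minimum and maximum there, which gives the displayed attained bounds.

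The only point needing care is compactness of the union over the continuum of endpoint-attaining pairs $(\gamma,\eta)$. The lifting to $\mathcal G_e$ handles this, because it replaces the argmin correspondence with explicit closed constraints via the optimal-value identity. An alternative is to invoke upper hemicontinuity of the argmin correspondence from Berge's theorem, but the lifting argument avoids that and is more direct.
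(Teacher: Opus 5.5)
Your proposal is correct and follows essentially the paper's proof: you lift $\Pi_e^\ast$ to the set of triples $(\gamma,\eta,\pi)$ defined by the admissibility, coupling, endpoint and cost constraints, show that this set is nonempty (endpoint attainment plus existence of an optimal plan) and compact (a closed subset of a compact set), project it onto $\pi$, and apply Weierstrass to the coordinate maps. One side remark is wrong, though harmless because you keep the constraint: the cost constraint implies $D(\gamma,\eta)=D_e$ only for $e=L$ (since $\sum_{i,j}|i-j|\pi_{ij}\ge D(\gamma,\eta)\ge\underline D$), not for $e=U$; for example, with all outcomes missing, take $\gamma=\eta=\tfrac12(\delta_1+\delta_K)$ and the coupling with mass $\tfrac12$ on cells $(1,K)$ and $(K,1)$, whose cost is $K-1=\overline D$ while $D(\gamma,\eta)=0$.
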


\begin{proof}
See Appendix~\ref{Proof -Thm 2}.
\end{proof}

The cell intervals retain the nonuniqueness of the conservative benchmark
plans. If $\underline\pi^{\,e}_{ij}>0$, every benchmark plan associated
with endpoint $e$ assigns positive mass to cell $(i,j)$; if
$\overline\pi^{\,e}_{ij}=0$, none does. Their widths measure the remaining
flexibility in how minimum restructuring can be allocated across category
pairs, while comparison of the lower- and upper-endpoint bounds reveals
its sensitivity to the missing-data uncertainty.

\section{Projection-Based Empirical Likelihood Inference}
\label{Section Inference}

The identified objects developed in Section~\ref{Section PI} are completely determined by the observable-data distribution. Consequently, sampling uncertainty enters the analysis through the finite-dimensional probabilities governing the observable outcomes and propagates to the parameters of interest through the optimization problems defining the discrepancy bounds and the endpoint-conditioned benchmark transition plans.

I exploit this structure using projection-based inference. I first construct the empirical likelihood ratio for the observable-data parameter and show that, in the present finite categorical setting, it coincides exactly with the ordinary multinomial likelihood ratio. This representation permits finite-sample Monte Carlo calibration of the empirical likelihood-ratio statistic. Under any fixed candidate value of the observable parameter, the multinomial sampling distribution is completely specified and can therefore be simulated exactly. Inverting the resulting Monte Carlo tests yields a confidence region with finite-sample coverage.

I combine this procedure with a projection principle. Any confidence region for the observable-data parameter can be propagated through the optimization problems defining the identified objects. Its coverage guarantee is then inherited by the resulting confidence regions for the discrepancy identified set and the endpoint-conditioned benchmark transition probabilities. This approach avoids requiring differentiability of the relevant value functions or uniqueness of the underlying optimizers.
\subsection{Observable-data parameterization}
\label{subsec:observable-parameterization}
Let
\(
\vartheta
=
(\mathbf q_X,\mathbf q_Y)
\in
\Theta
:=
\Delta^{K+1}\times\Delta^{K+1}
\)
denote the joint observable-data parameter. Each multinomial probability
vector has $K$ free components, so $\vartheta$ has dimension $2K$.
Let
\(
N_X=(N_{X0},\ldots,N_{XK}),
\;
N_Y=(N_{Y0},\ldots,N_{YK})
\)
denote the observable multinomial counts from independent samples of sizes
$n_X$ and $n_Y$, with empirical probability vectors
\(
\widehat{\mathbf q}_X=N_X/n_X,
\;
\widehat{\mathbf q}_Y=N_Y/n_Y.
\)

Throughout this section, let $n:=(n_X,n_Y)$ denote the sample-size index. Furthermore, let $\mathbb P_{\vartheta_0}^{n}$
denote the joint sampling distribution of the independent observable
samples $O_1^X,\ldots,O_{n_X}^X$ and $O_1^Y,\ldots,O_{n_Y}^Y$ under the true observable-data parameter $\vartheta_0\in\Theta$, where
\begin{align*}
O_i^X
=
\begin{cases}
* & \text{if } Z_{X_i}=0,\\
X_i & \text{if } Z_{X_i}=1,
\end{cases}
\qquad
\text{and}
\qquad
O_j^Y
=
\begin{cases}
* & \text{if } Z_{Y_j}=0,\\
Y_j & \text{if } Z_{Y_j}=1,
\end{cases}
\end{align*}
for $i=1,\ldots, n_X$ and $j=1,\ldots,n_Y$, with ``$*$'' denoting the missing-value code.

\subsection{Empirical likelihood and its multinomial representation}
\label{subsec:EL-multinomial}

Inference begins with the empirical likelihood for the observable-data parameter. In the present finite categorical setting, empirical likelihood has a particularly simple representation: its likelihood ratio coincides exactly with the ordinary multinomial likelihood ratio.

For the source sample, the likelihood associated with a candidate probability vector $\mathbf q_X$ is, up to a factor that does not depend on $\mathbf q_X$,
\(
L_X(\mathbf q_X)
=
\prod_{j=0}^K
q_{Xj}^{N_{Xj}},
\)
with the analogous expression
\(
L_Y(\mathbf q_Y)
=
\prod_{j=0}^K
q_{Yj}^{N_{Yj}}
\)
for the target sample. Independence of the two repeated cross-sections gives
\(
L(\vartheta)
=
L_X(\mathbf q_X)L_Y(\mathbf q_Y).
\)
Let
\(
\widehat{\vartheta}
=
(\widehat{\mathbf q}_X,\widehat{\mathbf q}_Y)
\)
denote the unrestricted maximum empirical likelihood estimator and define the empirical likelihood ratio by
\(
R_n(\vartheta)
=
L(\vartheta)/L(\widehat{\vartheta}).
\)

The following result gives the representation used for finite-sample Monte
Carlo inference.

\begin{proposition}[Empirical and multinomial likelihood ratios]
\label{Prop-ELR}
For every $\vartheta=(\mathbf q_X,\mathbf q_Y)\in\Theta$,
\[
-2\log R_n(\vartheta)
=
2n_X
D_{\mathrm{KL}}
\left(
\widehat{\mathbf q}_X
\,\Vert\,
\mathbf q_X
\right)
+
2n_Y
D_{\mathrm{KL}}
\left(
\widehat{\mathbf q}_Y
\,\Vert\,
\mathbf q_Y
\right),
\]
where
\(
D_{\mathrm{KL}}(a\Vert b)
=
\sum_{j=0}^K
a_j\log\left(\frac{a_j}{b_j}\right).
\)
The conventions $0\log(0/b_j)=0$ and
$D_{\mathrm{KL}}(a\Vert b)=+\infty$ whenever $a_j>0$ and $b_j=0$
are used.
\end{proposition}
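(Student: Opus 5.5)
The plan is a direct computation from the likelihood, using the multiplicative separability $L(\vartheta)=L_X(\mathbf q_X)L_Y(\mathbf q_Y)$ to reduce to a single multinomial sample. First I would confirm that $\widehat\vartheta=(\widehat{\mathbf q}_X,\widehat{\mathbf q}_Y)$ is the unrestricted maximizer, which justifies calling $R_n$ the empirical likelihood ratio. Two facts are needed.

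\begin{enumerate}
\item Empirical likelihood on a finite support places mass only on the $K+1$ observable values ($*$ and $1,\ldots,K$). Profiling the empirical likelihood over weights on the atoms therefore reduces it exactly to the multinomial likelihood $\prod_j q_{Xj}^{N_{Xj}}$.
\item Maximizing $\sum_j N_{Xj}\log q_{Xj}$ over $\Delta^{K+1}$ gives $q_{Xj}=N_{Xj}/n_X$. This follows from Gibbs' inequality, i.e.\ nonnegativity of KL divergence, which avoids a Lagrangian and handles zero counts automatically.
\end{enumerate}

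Next I would write
\[
-2\log R_n(\vartheta)=-2\bigl[\log L_X(\mathbf q_X)-\log L_X(\widehat{\mathbf q}_X)\bigr]-2\bigl[\log L_Y(\mathbf q_Y)-\log L_Y(\widehat{\mathbf q}_Y)\bigr].
\]
For the $X$ term, substituting $N_{Xj}=n_X\widehat q_{Xj}$ gives
\[
\log L_X(\widehat{\mathbf q}_X)-\log L_X(\mathbf q_X)=n_X\sum_{j}\widehat q_{Xj}\log\bigl(\widehat q_{Xj}/q_{Xj}\bigr)=n_X D_{\mathrm{KL}}(\widehat{\mathbf q}_X\Vert\mathbf q_X).
\]
The $Y$ term is identical. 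Adding the two gives the displayed identity.

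The main care point is the boundary of the simplex, where logarithms are undefined, so the steps have to be checked against the stated conventions. I would use the convention $0^0=1$ in $L_X$.

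\begin{itemize}
\item \emph{Cells with $N_{Xj}=0$.} These contribute the factor $1$ to both $L_X(\mathbf q_X)$ and $L_X(\widehat{\mathbf q}_X)$, whatever $q_{Xj}$ is. This matches $0\log(0/b_j)=0$.
\item \emph{Cells with $N_{Xj}>0$ but $q_{Xj}=0$.} Here $L_X(\mathbf q_X)=0$ while $L_X(\widehat{\mathbf q}_X)>0$. Hence $R_n=0$ and $-2\log R_n=+\infty$, which matches $D_{\mathrm{KL}}=+\infty$. The extended-real sum on the right is then $+\infty$ as well, since both terms are nonnegative.
\end{itemize}

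I would also note that $L(\widehat\vartheta)>0$ always, because every $\widehat q_{Xj}$ with $N_{Xj}>0$ is positive. Therefore $R_n$ is well defined on all of $\Theta$.
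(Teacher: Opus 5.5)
Your proposal is correct and follows essentially the same route as the paper. Both factorize the likelihood across the two samples, identify the empirical frequencies as the unrestricted maximizer, and substitute $N_{Xj}=n_X\widehat q_{Xj}$ into the log ratio to obtain the two KL terms; your explicit boundary cases ($N_{Xj}=0$, and $N_{Xj}>0$ with $q_{Xj}=0$) simply spell out what the paper covers by a one-line appeal to the stated conventions.
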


\begin{proof}
See Appendix~\ref{Proof - Prop ELR}.
\end{proof}

For the remainder of the section, write
$T_n(\vartheta):=-2\log R_n(\vartheta)$. Before developing its finite-sample
Monte Carlo calibration, I state the projection principle that maps a
confidence region for the observable-data parameter into confidence regions
for the optimization-based objects of interest.

\subsection{Projection-based inference through optimization correspondences}
\label{subsec:projection-general}

Let $E$ be a separable metric space and let
\(
\mathcal Z:\Theta\rightrightarrows E
\)
assign to each observable parameter $\vartheta$ a nonempty feasible set $\mathcal Z(\vartheta)$. Let
\(
f:E\rightarrow\mathbb R
\)
be a continuous objective function and consider the parameterized optimization problem
\(
v(\vartheta)
=
\inf_{z\in\mathcal Z(\vartheta)}
f(z),
\)
with solution correspondence
\(
S(\vartheta)
=
\argmin_{z\in\mathcal Z(\vartheta)}
f(z).
\)

The optimization problem generates two types of inferential objects. The first is the value function $v(\vartheta)$, which arises when the parameter of interest is an optimal value. The second is the solution correspondence $S(\vartheta)$, which arises when the entire collection of optimal solutions is economically relevant.

Following \cite{AliprantisBorder2006}, a correspondence
\(
\mathcal Z:\Theta\rightrightarrows E
\)
is said to be \emph{weakly measurable} if, for every open set
$O\subseteq E$,
the inverse image
\(
\left\{
\vartheta\in\Theta:
\mathcal Z(\vartheta)\cap O\neq\varnothing
\right\}
\)
is measurable. I impose the following regularity condition.
\begin{assumption}
\label{ass:projection}
The correspondence
\(
\mathcal Z:\Theta\rightrightarrows E
\)
is weakly measurable with nonempty compact values, and
$f:E\rightarrow\mathbb R$ is continuous.
\end{assumption}

Under Assumption~\ref{ass:projection}, Proposition~\ref{prop:supp-measurable-maximum} establishes that $v$ is measurable and that $S$ is weakly measurable with nonempty compact values. Appendix~\ref{appendix:projection-measurability} verifies these properties for the latent-marginal, identification-endpoint, and endpoint-conditioned benchmark correspondences used in this paper.

Let $\mathcal C_n\subseteq\Theta$ be any confidence region for the observable-data parameter. Define its projection through the value function by
\(
\mathcal V_n(\mathcal C_n)
=
\left\{
v(\vartheta):
\vartheta\in\mathcal C_n
\right\},
\)
and its projection through the solution correspondence by
\(
\mathcal S_n(\mathcal C_n)
=
\bigcup_{\vartheta\in\mathcal C_n}
S(\vartheta).
\)

The following result is the basis of all subsequent inference.

\begin{theorem}[Projection principle]
\label{thm:projection}
Suppose Assumption~\ref{ass:projection} holds and let $\alpha^*\in(0,1)$. Then
\(
\left\{
\vartheta_0\in\mathcal C_n
\right\}
\subseteq
\left\{
v(\vartheta_0)
\in
\mathcal V_n(\mathcal C_n)
\right\},
\)
and
\(
\left\{
\vartheta_0\in\mathcal C_n
\right\}
\subseteq
\left\{
S(\vartheta_0)
\subseteq
\mathcal S_n(\mathcal C_n)
\right\}.
\)
Consequently, for any probability measure $\mathbb{P}$ governing the randomness of $\mathcal{C}_n$, if
\(
\mathbb{P}
\left(
\vartheta_0\in\mathcal C_n
\right)
\geq
1-\alpha^\star,
\)
then
\(
\mathbb{P}
\left(
v(\vartheta_0)
\in
\mathcal V_n(\mathcal C_n)
\right)
\geq
1-\alpha^\star
\)
and
\(
\mathbb{P}
\left(
S(\vartheta_0)
\subseteq
\mathcal S_n(\mathcal C_n)
\right)
\geq
1-\alpha^\star
\)

\end{theorem}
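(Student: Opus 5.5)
The argument is essentially set-theoretic: establish the two event inclusions pointwise on the sample space, then use monotonicity of probability. The only technical work is checking that the events are measurable so that $\mathbb P$ can be applied to them. I would handle that by appealing to Assumption~\ref{ass:projection} and Proposition~\ref{prop:supp-measurable-maximum}, or else state the conclusion in terms of inner probability.

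\emph{Step 1 (value-function inclusion).} Fix an outcome $\omega$ with $\vartheta_0\in\mathcal C_n(\omega)$. Under Assumption~\ref{ass:projection}, $\mathcal Z(\vartheta_0)$ is nonempty and compact and $f$ is continuous. By Weierstrass, $v(\vartheta_0)$ is a finite real number attained on $\mathcal Z(\vartheta_0)$, so $S(\vartheta_0)\neq\varnothing$. By the definition $\mathcal V_n(\mathcal C_n)=\{v(\vartheta):\vartheta\in\mathcal C_n\}$, taking $\vartheta=\vartheta_0$ gives $v(\vartheta_0)\in\mathcal V_n(\mathcal C_n)(\omega)$. This proves the first inclusion of events.

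\emph{Step 2 (solution-correspondence inclusion).} On the same event, $S(\vartheta_0)$ is one of the sets in the union $\bigcup_{\vartheta\in\mathcal C_n}S(\vartheta)$. Hence $S(\vartheta_0)\subseteq\mathcal S_n(\mathcal C_n)(\omega)$, which proves the second inclusion.

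\emph{Step 3 (probability bounds).} Given $\mathbb P(\vartheta_0\in\mathcal C_n)\geq 1-\alpha^\star$, monotonicity of $\mathbb P$ applied to the two inclusions yields both coverage statements. This is where I expect the \textbf{main (minor) obstacle}: the events $\{v(\vartheta_0)\in\mathcal V_n(\mathcal C_n)\}$ and $\{S(\vartheta_0)\subseteq\mathcal S_n(\mathcal C_n)\}$ need not be measurable for an arbitrary random set $\mathcal C_n$. I would treat this in one of two ways. The first is to interpret the bounds with inner probability $\mathbb P_*$. Each event contains the measurable event $\{\vartheta_0\in\mathcal C_n\}$, so $\mathbb P_*\geq 1-\alpha^\star$ holds regardless of measurability. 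The second is to note that for the confidence regions actually used, obtained by inverting $T_n$ against Monte Carlo critical values, $\mathcal C_n$ takes finitely or countably many values as a function of the data and draws. Combined with the measurability of $v$ and the weak measurability of $S$ from Proposition~\ref{prop:supp-measurable-maximum}, this makes the events measurable and gives genuine probability statements. Neither route requires continuity of $v$ or uniqueness of optimizers, consistent with the remarks preceding the theorem.
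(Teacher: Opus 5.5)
Your Steps 1--3 are the paper's proof. On $\{\vartheta_0\in\mathcal C_n\}$ the point $\vartheta_0$ itself witnesses both $v(\vartheta_0)\in\mathcal V_n(\mathcal C_n)$ and $S(\vartheta_0)\subseteq\mathcal S_n(\mathcal C_n)$, and monotonicity of $\mathbb P$ then gives both coverage bounds. The Weierstrass step is harmless but is not needed for either inclusion.

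The paper does not address measurability at all. Your inner-probability fix is sound and suffices. You should drop the second route, however, because its premise is false. Under the common-random-number construction the count-cell boundaries, which solve $F_{m,z}(p)=u$, move continuously with the simulation uniforms. So $\mathcal C_{n,B}^{\mathrm{MC}}$ takes a continuum of values, not finitely or countably many.
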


\begin{proof}
On the event $\{\vartheta_0\in\mathcal C_n\}$,
\(
v(\vartheta_0)
\in
\left\{
v(\vartheta):
\vartheta\in\mathcal C_n
\right\}
\)
and
\(
S(\vartheta_0)
\subseteq
\bigcup_{\vartheta\in\mathcal C_n}
S(\vartheta).
\)
The stated event inclusions and probability inequalities follow immediately.
\end{proof}

Theorem~4.1 is a finite-sample result and does not depend on the method used to
construct $\mathcal C_n$. An exact or conservative finite-sample confidence
region for $\vartheta_0$ therefore produces finite-sample-valid projected
inference. No differentiability of the value or solution mappings and no
uniqueness of the optimizers is required.

\subsection{Finite-sample Monte Carlo calibration}
\label{subsec:MC-inference}

The multinomial representation in Proposition~\ref{Prop-ELR} makes it possible to calibrate the empirical likelihood-ratio statistic without using an asymptotic approximation. For each candidate value
\(
\vartheta
=
(\mathbf q_X,\mathbf q_Y)
\in\Theta,
\)
consider the simple null hypothesis
\(
H_0:\vartheta=\vartheta_0.
\)
Under this null, the joint finite-sample distribution of the observable counts is completely specified.

For a fixed candidate $\vartheta$, generate $B$ independent Monte Carlo samples
\[
N_X^{(b)}
\sim
\operatorname{Multinomial}
(n_X,\mathbf q_X),
\qquad
N_Y^{(b)}
\sim
\operatorname{Multinomial}
(n_Y,\mathbf q_Y),
\qquad
b=1,\ldots,B,
\]
independently across $b$ and across the two repeated cross-sections. Let
$\widehat{\mathbf q}_X^{(b)}$ and
$\widehat{\mathbf q}_Y^{(b)}$ denote the corresponding simulated empirical probability vectors and define
\[
T_n^{(b)}(\vartheta)
=
2n_X
D_{\mathrm{KL}}
\left(
\widehat{\mathbf q}_X^{(b)}
\,\Vert\,
\mathbf q_X
\right)
+
2n_Y
D_{\mathrm{KL}}
\left(
\widehat{\mathbf q}_Y^{(b)}
\,\Vert\,
\mathbf q_Y
\right).
\]
Write
\(
T_n^{(0)}(\vartheta)
=
T_n(\vartheta)
\)
for the statistic computed from the observed samples.

Because the multinomial likelihood-ratio statistic is discrete, ties occur with positive probability. I therefore use randomized tie-breaking. Let
\(
U_0,U_1,\ldots,U_B
\)
be independent $\operatorname{Uniform}(0,1)$ random variables, independent of both the observed and simulated samples, and order the pairs
\(
\left(
T_n^{(b)}(\vartheta),U_b
\right),
\) for
$b=0,\ldots,B$,
lexicographically, with larger pairs regarded as more extreme. Equivalently, define
\[
I_b(\vartheta)
=
\mathbf 1
\left\{
T_n^{(b)}(\vartheta)>T_n^{(0)}(\vartheta)
\right\}
+
\mathbf 1
\left\{
T_n^{(b)}(\vartheta)=T_n^{(0)}(\vartheta)
\right\}
\mathbf 1
\left\{
U_b>U_0
\right\},
\]
and let
\[
p_{n,B}(\vartheta)
=
\frac{
1+\sum_{b=1}^B I_b(\vartheta)
}{
B+1
}.
\]
Thus small values of $p_{n,B}(\vartheta)$ correspond to unusually large values of the observed empirical likelihood-ratio statistic.

Under the simple null, the observed and simulated statistics have the same finite-sample distribution. Randomized tie-breaking makes their ranks exactly exchangeable.

\begin{theorem}[Finite-sample Monte Carlo test]
\label{thm:MC-test}
For every
$\vartheta_0\in\Theta$
and every finite pair of sample sizes
$(n_X,n_Y)$,
\(
\mathbb P_{\vartheta_0}^{n,B}
\left(
p_{n,B}(\vartheta_0)\leq\alpha
\right)
=
\frac{
\lfloor\alpha(B+1)\rfloor
}{
B+1
},
\)
where
$\mathbb P_{\vartheta_0}^{n,B}$
denotes probability with respect to both the observed samples and the auxiliary Monte Carlo randomization. Consequently, the test that rejects
$H_0:\vartheta=\vartheta_0$
when
$p_{n,B}(\vartheta_0)\leq\alpha$
has finite-sample level no greater than $\alpha$. If
$\alpha(B+1)$ is an integer, its rejection probability is exactly $\alpha$.
\end{theorem}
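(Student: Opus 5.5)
The plan is the standard exchangeability argument for Monte Carlo tests with randomized tie-breaking. Fix $\vartheta_0$ and $(n_X,n_Y)$. Under $\mathbb P_{\vartheta_0}^{n,B}$, the observed counts $(N_X,N_Y)$ and the simulated counts $(N_X^{(b)},N_Y^{(b)})$, $b=1,\ldots,B$, are i.i.d. draws from the product of $\operatorname{Multinomial}(n_X,\mathbf q_X)$ and $\operatorname{Multinomial}(n_Y,\mathbf q_Y)$. By Proposition~\ref{Prop-ELR}, $T_n^{(b)}(\vartheta_0)$ is the same deterministic function of the $b$th count vector for every $b=0,\ldots,B$. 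Hence $T_n^{(0)},\ldots,T_n^{(B)}$ are i.i.d. and take values in $[0,\infty]$. Since the $U_b$ are i.i.d. uniform and independent of the counts, the pairs $W_b:=(T_n^{(b)}(\vartheta_0),U_b)$ are i.i.d. random elements of $[0,\infty]\times(0,1)$.

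\textbf{Step 1: almost surely no ties in the lexicographic order.} Two pairs coincide only if $U_b=U_{b'}$ for some $b\neq b'$, and this has probability zero because the $U$'s are continuous and independent. Consequently the lexicographic order is almost surely a strict total order on $\{W_0,\ldots,W_B\}$.

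\textbf{Step 2: uniform rank.} Let $R=1+\sum_{b=1}^B I_b(\vartheta_0)$. By construction, $R$ is the rank of $W_0$ among the $B+1$ pairs in decreasing lexicographic order: $R=1$ when $W_0$ is the largest. By i.i.d.-ness the joint law of $(W_0,\ldots,W_B)$ is invariant under permutations. Combined with Step 1, this makes every index equally likely to occupy each position, so $R$ is uniform on $\{1,\ldots,B+1\}$. Concretely, $\Pr(R=r)=\Pr(W_{\sigma(0)}\text{ has rank } r)$ for any permutation $\sigma$; these probabilities sum over the $B+1$ indices to $1$, so each equals $1/(B+1)$.

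\textbf{Step 3: conclusion.} Since $p_{n,B}(\vartheta_0)=R/(B+1)$,
\[
\mathbb P_{\vartheta_0}^{n,B}\bigl(p_{n,B}(\vartheta_0)\le\alpha\bigr)=\Pr\bigl(R\le\alpha(B+1)\bigr)=\frac{\lfloor\alpha(B+1)\rfloor}{B+1}.
\]
This is at most $\alpha$, with equality when $\alpha(B+1)$ is an integer.

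The only delicate point is the boundary case where some $q_{Xj}=0$ or $T_n$ takes the value $+\infty$. Even there, the statistics remain i.i.d. extended-real random variables, so the argument goes through. In fact, when the data are generated from $\vartheta_0$, a cell with $q_{Xj}=0$ has zero counts almost surely, so $T_n<\infty$ almost surely. The main step requiring care is Step 2: one must state that exchangeability of the pairs, together with the almost-sure absence of ties, gives exact uniformity of the rank, regardless of the discreteness of $T_n$.
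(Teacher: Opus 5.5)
Your proposal is correct and follows essentially the same route as the paper: exchangeability of the pairs $(T_n^{(b)}(\vartheta_0),U_b)$, almost-sure absence of lexicographic ties because the tie-breakers are continuous, and hence a uniform rank for the observed pair on $\{1,\ldots,B+1\}$. You also spell out steps the paper treats as immediate: the i.i.d.\ structure via Proposition~\ref{Prop-ELR}, the counting argument for uniformity of the rank, the floor computation, and the boundary case where $T_n$ could be infinite.
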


\begin{proof}
Under $H_0$, the $B+1$ pairs
\(
\left(
T_n^{(b)}(\vartheta_0),U_b
\right),
\)
for $b=0,\ldots,B$, are exchangeable. Since the $U_b$ are continuously distributed, the lexicographic ordering produces no ties. The rank of the observed pair is therefore uniformly distributed over
$\{1,\ldots,B+1\}$. The result follows immediately.
\end{proof}

Inverting the Monte Carlo tests gives the confidence region
\[
\mathcal C_{n,B}^{\mathrm{MC}}(1-\alpha)
:=
\left\{
\vartheta\in\Theta:
p_{n,B}(\vartheta)>\alpha
\right\}.
\]
Define
\(
\alpha_B
:=
\frac{
\lfloor\alpha(B+1)\rfloor
}{
B+1
}.
\)
Theorem~\ref{thm:MC-test} implies
\(
\mathbb P_{\vartheta_0}^{n,B}
\left(
\vartheta_0
\in
\mathcal C_{n,B}^{\mathrm{MC}}(1-\alpha)
\right)
=
1-\alpha_B
\geq
1-\alpha.
\)
If $\alpha(B+1)$ is an integer, the primitive Monte Carlo confidence region therefore has exact finite-sample coverage $1-\alpha$.

An important feature of Monte Carlo testing is that finite-sample validity
does not require the number of replications $B$ to be large. When the
nominal level lies on the Monte Carlo grid, the randomized test has the
stated finite-sample level for any such $B$. The choice of $B$ nevertheless
matters for other properties of the procedure. As emphasized by
\citet[Section~3]{DUFOUR2006}, the number of Monte Carlo replications affects
power and the probability that the Monte Carlo decision differs from that of
the corresponding infeasible fundamental test. Increasing $B$ can therefore
improve the power and concordance properties of the Monte Carlo test, at the
cost of greater computation.

The Monte Carlo test requires no restriction of $\vartheta_0$ to the relative
interior of $\Theta$. It remains valid at boundary points because, for every
fixed candidate $\vartheta$, the multinomial distribution under the null can
be simulated directly.

In the numerical implementation, common random numbers are used across candidate values of $\vartheta$ when the inverted confidence region is searched numerically. At the true candidate value, this construction preserves the exchangeability argument underlying Theorem~\ref{thm:MC-test}. Details of the simulation and optimization procedure are given in Section~\ref{Section Implementation}.

Combining Theorems~\ref{thm:projection} and~\ref{thm:MC-test} gives the following immediate consequence.

\begin{corollary}[Finite-sample projection-based inference]
\label{cor:MC-projection}
Let
$\mathcal C_{n,B}^{\mathrm{MC}}(1-\alpha)$
be the Monte Carlo confidence region defined above. Then any projection of
$\mathcal C_{n,B}^{\mathrm{MC}}(1-\alpha)$
through an optimization value or solution correspondence satisfying Assumption~\ref{ass:projection} has coverage probability at least
$1-\alpha_B$.
In particular, if
$\alpha(B+1)$
is an integer, its finite-sample coverage probability is at least
$1-\alpha$.
\end{corollary}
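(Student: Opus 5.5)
The plan is to combine Theorem~\ref{thm:projection} with Theorem~\ref{thm:MC-test}, applying the projection principle with $\mathcal C_n=\mathcal C_{n,B}^{\mathrm{MC}}(1-\alpha)$ and with $\mathbb P=\mathbb P_{\vartheta_0}^{n,B}$, the joint law of the observed samples and the auxiliary Monte Carlo randomization. The argument has three steps, carried out in this order.

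First, fix the true parameter $\vartheta_0\in\Theta$. By the definition of the inverted region,
\[
\{\vartheta_0\in\mathcal C_{n,B}^{\mathrm{MC}}(1-\alpha)\}=\{p_{n,B}(\vartheta_0)>\alpha\}.
\]
Theorem~\ref{thm:MC-test} then gives
\[
\mathbb P_{\vartheta_0}^{n,B}\bigl(\vartheta_0\in\mathcal C_{n,B}^{\mathrm{MC}}(1-\alpha)\bigr)=1-\alpha_B.
\]
This holds for every finite $(n_X,n_Y)$ and every $B$, including when $\vartheta_0$ lies on the boundary of $\Theta$. Only the single event at the true value $\vartheta_0$ is needed here. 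Common random numbers across candidates do not affect this step, because at $\vartheta_0$ the draws remain exchangeable, as the text notes.

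Second, let $v$ and $S$ be generated by any $(\mathcal Z,f)$ satisfying Assumption~\ref{ass:projection}. Theorem~\ref{thm:projection} supplies the deterministic, pointwise inclusions
\[
\{\vartheta_0\in\mathcal C_n\}\subseteq\{v(\vartheta_0)\in\mathcal V_n(\mathcal C_n)\},
\qquad
\{\vartheta_0\in\mathcal C_n\}\subseteq\{S(\vartheta_0)\subseteq\mathcal S_n(\mathcal C_n)\}.
\]
Applying monotonicity of $\mathbb P_{\vartheta_0}^{n,B}$ gives coverage at least $1-\alpha_B$ for both the value projection and the solution projection.

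Third, note that $\alpha_B=\lfloor\alpha(B+1)\rfloor/(B+1)\le\alpha$, so $1-\alpha_B\ge 1-\alpha$. When $\alpha(B+1)$ is an integer, $\alpha_B=\alpha$ exactly, which gives the ``in particular'' clause.

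The main obstacle is measurability. The projected events $\{v(\vartheta_0)\in\mathcal V_n\}$ and $\{S(\vartheta_0)\subseteq\mathcal S_n\}$ are unions over a random, possibly disconnected set, and they need not obviously be measurable. I would handle this as follows:
\begin{itemize}
\item Interpret the coverage statement as a bound on inner probability, or equivalently observe that each projected event contains the measurable event $\{p_{n,B}(\vartheta_0)>\alpha\}$. So the lower bound holds whenever the projected events are measurable, and holds in the inner-measure sense in any case.
\item For measurability of $v$ itself, and of the events when needed, invoke Proposition~\ref{prop:supp-measurable-maximum} and Appendix~\ref{appendix:projection-measurability}.
\end{itemize}
Apart from this point, the proof is an immediate composition of the two earlier theorems.
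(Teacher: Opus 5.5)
Your proposal is correct and follows the paper's argument: the paper states the corollary as the immediate combination of Theorem~\ref{thm:projection}, applied with $\mathcal C_n=\mathcal C_{n,B}^{\mathrm{MC}}(1-\alpha)$ and $\mathbb P=\mathbb P_{\vartheta_0}^{n,B}$, with the exact coverage $1-\alpha_B$ from Theorem~\ref{thm:MC-test}, together with $\alpha_B\le\alpha$ and equality when $\alpha(B+1)\in\mathbb N$. Your measurability caveat, reading coverage as an inner probability because each projected event contains the measurable event $\{p_{n,B}(\vartheta_0)>\alpha\}$, is a sensible refinement that the paper leaves implicit.
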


\subsection{Projected inference for the ordinal distributional discrepancy}
\label{subsec:projected-D}

Section~\ref{Section PI} established that the identified set for the ordinal distributional discrepancy is
$
\left[
\underline D(\vartheta),
\overline D(\vartheta)
\right],
$
where
$\underline D(\vartheta)
=
\min_{(\mu,\nu)\in\mathcal Z(\vartheta)}
D(\mu,\nu)
$ and
$
\overline D(\vartheta)
=
\max_{(\mu,\nu)\in\mathcal Z(\vartheta)}
D(\mu,\nu),
$
with
\(
\mathcal Z(\vartheta)
=
\mathcal M(\mathbf q_X)
\times
\mathcal M(\mathbf q_Y).
\)
Let $\mathcal C_n\subseteq\Theta$ denote a generic confidence region for the observable parameter. Define the projected lower and upper bounds by
$$
\underline D_n(\mathcal C_n)
=
\inf_{\vartheta\in\mathcal C_n}
\underline D(\vartheta)
\quad\text{and}
\quad
\overline D_n(\mathcal C_n)
=
\sup_{\vartheta\in\mathcal C_n}
\overline D(\vartheta).
$$
The corresponding projected confidence interval for the identified set is
\(
\left[
\underline D_n(\mathcal C_n),
\overline D_n(\mathcal C_n)
\right].
\)

\begin{corollary}[Projected inference for the discrepancy]
\label{cor:ordinalCI}
For any probability measure $\mathbb{P}$ governing the randomness of $\mathcal{C}_n$, if
\(
\mathbb{P}
\left(
\vartheta_0\in\mathcal C_n
\right)
\geq
1-\alpha^\star,
\)
then
\[
\mathbb{P}
\left(
\left[
\underline D(\vartheta_0),
\overline D(\vartheta_0)
\right]
\subseteq
\left[
\underline D_n(\mathcal C_n),
\overline D_n(\mathcal C_n)
\right]
\right)
\geq
1-\alpha^\star.
\]
\end{corollary}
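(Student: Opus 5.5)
The argument is a direct event inclusion, in the same spirit as the proof of Theorem~\ref{thm:projection}, and does not require measurability of the projected endpoints beyond what is needed to make the events well defined. I will show that
\[
\{\vartheta_0\in\mathcal C_n\}
\subseteq
\left\{
\left[\underline D(\vartheta_0),\overline D(\vartheta_0)\right]
\subseteq
\left[\underline D_n(\mathcal C_n),\overline D_n(\mathcal C_n)\right]
\right\}.
\]
Monotonicity of $\mathbb P$ then gives the stated probability bound from $\mathbb P(\vartheta_0\in\mathcal C_n)\ge 1-\alpha^\star$.

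For the inclusion, I fix an outcome with $\vartheta_0\in\mathcal C_n$. Since $\vartheta_0$ is one of the points over which the infimum and supremum defining the projected bounds are taken,
\[
\underline D_n(\mathcal C_n)=\inf_{\vartheta\in\mathcal C_n}\underline D(\vartheta)\le \underline D(\vartheta_0)
\quad\text{and}\quad
\overline D_n(\mathcal C_n)=\sup_{\vartheta\in\mathcal C_n}\overline D(\vartheta)\ge \overline D(\vartheta_0).
\]
By Theorem~\ref{thm:discrepancy_bounds}, applied at $\vartheta_0$, the identified set is the nonempty interval $[\underline D(\vartheta_0),\overline D(\vartheta_0)]$ with both endpoints attained, so $\underline D(\vartheta_0)\le\overline D(\vartheta_0)$. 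Any $x$ in this interval therefore satisfies $\underline D_n(\mathcal C_n)\le \underline D(\vartheta_0)\le x\le \overline D(\vartheta_0)\le \overline D_n(\mathcal C_n)$, which proves the inclusion.

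Alternatively, the two endpoint inequalities can be read off Theorem~\ref{thm:projection}, applied separately to the value functions $\underline D$ and $\overline D$. Assumption~\ref{ass:projection} for these holds because $D$ is continuous and $\mathcal Z(\vartheta)$ is compact and weakly measurable, as verified in Appendix~\ref{appendix:projection-measurability}. Membership $\underline D(\vartheta_0)\in\mathcal V_n$ implies $\inf\mathcal V_n\le\underline D(\vartheta_0)$, and similarly for the upper endpoint. The only step needing care is ensuring the event on the right is measurable so that $\mathbb P$ of it is defined. If it is not, I would state the conclusion in terms of inner probability, or simply rely on the inclusion holding pointwise. I expect no substantive obstacle here.
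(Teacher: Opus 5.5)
Your proposal is correct and is essentially the paper's proof. On $\{\vartheta_0\in\mathcal C_n\}$ you obtain the chain $\underline D_n(\mathcal C_n)\le\underline D(\vartheta_0)\le\overline D(\vartheta_0)\le\overline D_n(\mathcal C_n)$ because $\vartheta_0$ is among the points over which the infimum and supremum are taken, and then monotonicity of $\mathbb P$ gives the bound; your alternative route through Theorem~\ref{thm:projection} and your measurability remark are optional additions.
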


\begin{proof}
On the event
$\{\vartheta_0\in\mathcal C_n\}$,
\(
\underline D_n(\mathcal C_n)
\leq
\underline D(\vartheta_0)
\leq
\overline D(\vartheta_0)
\leq
\overline D_n(\mathcal C_n).
\)
The result follows by taking probabilities.
\end{proof}

With
\(
\mathcal C_n
=
\mathcal C_{n,B}^{\mathrm{MC}}(1-\alpha),
\) and $\mathbb{P}=\mathbb P_{\vartheta_0}^{n,B}$,
Corollary~\ref{cor:ordinalCI} gives
\[
\mathbb P_{\vartheta_0}^{n,B}
\left(
\left[
\underline D(\vartheta_0),
\overline D(\vartheta_0)
\right]
\subseteq
\left[
\underline D_n
\left(
\mathcal C_{n,B}^{\mathrm{MC}}
\right),
\overline D_n
\left(
\mathcal C_{n,B}^{\mathrm{MC}}
\right)
\right]
\right)
\geq
1-\alpha_B
\]
for every
$\vartheta_0\in\Theta$
and every finite pair of sample sizes. If
$\alpha(B+1)$
is an integer, the right-hand side is $1-\alpha$.

\subsection{Projected inference for endpoint-conditioned benchmark transition probabilities}
\label{subsec:projected-benchmarks}

I next consider the endpoint-conditioned benchmark transition probabilities introduced in Section~\ref{Section PI}. Unlike the preceding subsection, where the inferential objects are optimal values, inference here concerns optimization sets generated by potentially nonunique endpoint-attaining latent marginals and potentially nonunique optimal transport plans.

Let
$
S_{\underline D}(\vartheta)
=
\argmin_{(\mu,\nu)\in\mathcal Z(\vartheta)}
D(\mu,\nu)
$ and
$
S_{\overline D}(\vartheta)
=
\argmax_{(\mu,\nu)\in\mathcal Z(\vartheta)}
D(\mu,\nu)
$
denote the solution correspondences associated with the lower and upper identification endpoints. For each pair
$(\mu,\nu)$,
let
\(
\Pi^\star(\mu,\nu)
\)
denote the set of optimal transport plans between $\mu$ and $\nu$ under the canonical ordinal transport cost.

Define the lower endpoint-conditioned benchmark correspondence by
\[
\mathcal Z_{\underline D}(\vartheta)
=
\left\{
(\mu,\nu,\pi):
(\mu,\nu)\in S_{\underline D}(\vartheta),
\;
\pi\in\Pi^\star(\mu,\nu)
\right\},
\]
and analogously define the upper endpoint-conditioned benchmark correspondence by
\[
\mathcal Z_{\overline D}(\vartheta)
=
\left\{
(\mu,\nu,\pi):
(\mu,\nu)\in S_{\overline D}(\vartheta),
\;
\pi\in\Pi^\star(\mu,\nu)
\right\}.
\]
Appendix~\ref{appendix:projection-measurability} establishes that these correspondences are suitably measurable with nonempty compact values.

For a generic confidence region
$\mathcal C_n$,
define
\begin{align*}
\mathcal A_{\underline D}(\mathcal C_n)
&:=
\left\{
(\vartheta,\mu,\nu,\pi):
\vartheta\in\mathcal C_n,\;
(\mu,\nu,\pi)\in
\mathcal Z_{\underline D}(\vartheta)
\right\}
\quad\text{and}\\
\mathcal A_{\overline D}(\mathcal C_n)
& :=
\left\{
(\vartheta,\mu,\nu,\pi):
\vartheta\in\mathcal C_n,\;
(\mu,\nu,\pi)\in
\mathcal Z_{\overline D}(\vartheta)
\right\}.
\end{align*}

For each cell
$(i,j)\in\{1,\ldots,K\}^2$,
define the projected lower-endpoint bounds
\(\displaystyle
\underline{\pi}^{L}_{ij}(\mathcal C_n)
=
\inf_{
(\vartheta,\mu,\nu,\pi)
\in
\mathcal A_{\underline D}(\mathcal C_n)
}
\pi_{ij},
\;\text{and}\;
\overline{\pi}^{L}_{ij}(\mathcal C_n)
=
\sup_{
(\vartheta,\mu,\nu,\pi)
\in
\mathcal A_{\underline D}(\mathcal C_n)
}
\pi_{ij},
\)
and the projected upper-endpoint bounds
\(\displaystyle
\underline{\pi}^{U}_{ij}(\mathcal C_n)
=
\inf_{
(\vartheta,\mu,\nu,\pi)
\in
\mathcal A_{\overline D}(\mathcal C_n)
}
\pi_{ij},
\;\text{and}\;
\overline{\pi}^{U}_{ij}(\mathcal C_n)
=
\sup_{
(\vartheta,\mu,\nu,\pi)
\in
\mathcal A_{\overline D}(\mathcal C_n)
}
\pi_{ij}.
\)

These projected bounds cover the entire identified interval for each benchmark transition probability. More strongly, because all of the bounds are generated from the same joint confidence region for $\vartheta$, the coverage statement holds simultaneously across all cells and both discrepancy endpoints.

\begin{corollary}[Projected inference for benchmark transition probabilities]
\label{cor:benchmarkCI}
Let $\alpha^\star\in(0,1)$ and for any probability measure $\mathbb{P}$ governing the randomness of $\mathcal{C}_n$, if
\(
\mathbb{P}
\left(
\vartheta_0\in\mathcal C_n
\right)
\geq
1-\alpha^\star,
\)
then
\begin{align*}
\mathbb{P}
\Bigg(
&
\left[
\underline{\pi}^{L}_{ij}(\vartheta_0),
\overline{\pi}^{L}_{ij}(\vartheta_0)
\right]
\subseteq
\left[
\underline{\pi}^{L}_{ij}(\mathcal C_n),
\overline{\pi}^{L}_{ij}(\mathcal C_n)
\right]
\quad\text{for all }(i,j),
\\
&
\left[
\underline{\pi}^{U}_{ij}(\vartheta_0),
\overline{\pi}^{U}_{ij}(\vartheta_0)
\right]
\subseteq
\left[
\underline{\pi}^{U}_{ij}(\mathcal C_n),
\overline{\pi}^{U}_{ij}(\mathcal C_n)
\right]
\quad\text{for all }(i,j)
\Bigg)
\geq
1-\alpha^\star.
\end{align*}
\end{corollary}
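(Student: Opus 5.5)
The plan is to argue pathwise on the coverage event $\{\vartheta_0\in\mathcal C_n\}$, in the same way as the proofs of Theorem~\ref{thm:projection} and Corollary~\ref{cor:ordinalCI}. On that event I will show that every one of the $4K^2$ interval inclusions holds at once. Simultaneity across all cells and both endpoints is then automatic, because all of the inclusions are deterministic consequences of a single event.

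First I identify the population objects with fibres of the projected sets. By definition, $\Pi_L^*$ is the union of $\Pi^*(\gamma,\eta)$ over endpoint-attaining pairs $(\gamma,\eta)\in S_{\underline D}(\vartheta_0)$. Hence $\Pi_L^*(\vartheta_0)$ equals the set of $\pi$ such that $(\mu,\nu,\pi)\in\mathcal Z_{\underline D}(\vartheta_0)$ for some $(\mu,\nu)$. The analogous statement holds for $\Pi_U^*$ and $\mathcal Z_{\overline D}(\vartheta_0)$. Theorem~\ref{thm - PI coupling endpoint bounds} gives that these sets are nonempty and compact. It follows that $\underline\pi^{e}_{ij}(\vartheta_0)$ and $\overline\pi^{e}_{ij}(\vartheta_0)$ are attained: there exist plans $\pi^{\min},\pi^{\max}\in\Pi_e^*(\vartheta_0)$ achieving them, together with associated marginals $(\mu,\nu)$.

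Next, on $\{\vartheta_0\in\mathcal C_n\}$, the tuple $(\vartheta_0,\mu,\nu,\pi^{\min})$ lies in $\mathcal A_{\underline D}(\mathcal C_n)$ when $e=L$, and in $\mathcal A_{\overline D}(\mathcal C_n)$ when $e=U$. Therefore
\[
\underline\pi^{e}_{ij}(\mathcal C_n)\le \pi^{\min}_{ij}=\underline\pi^{e}_{ij}(\vartheta_0),
\]
and symmetrically
\[
\overline\pi^{e}_{ij}(\mathcal C_n)\ge \overline\pi^{e}_{ij}(\vartheta_0).
\]
Since $\underline\pi^{e}_{ij}(\vartheta_0)\le\overline\pi^{e}_{ij}(\vartheta_0)$, the interval inclusion follows for each $(i,j)$ and each $e$. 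Equivalently, this is Theorem~\ref{thm:projection} applied to the solution-type correspondence $\vartheta\mapsto\mathcal Z_{e}(\vartheta)$ followed by the coordinate map $\pi\mapsto\pi_{ij}$. Either route gives the event inclusion
\[
\{\vartheta_0\in\mathcal C_n\}\subseteq\bigcap_{e,i,j}\{\text{inclusion}_{e,ij}\},
\]
and monotonicity of $\mathbb P$ yields the bound $1-\alpha^\star$.

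The only delicate point is measurability: the probability of the intersection event must be well defined. I would avoid needing it by reading the conclusion as a lower bound through the measurable subset $\{\vartheta_0\in\mathcal C_n\}$, using inner probability if necessary. Alternatively, I would invoke the measurability of $\mathcal Z_{\underline D}$ and $\mathcal Z_{\overline D}$ established in Appendix~\ref{appendix:projection-measurability}. Attainment, which the argument also uses, is already supplied by Theorem~\ref{thm - PI coupling endpoint bounds}.
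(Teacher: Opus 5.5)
Your proposal is correct and is essentially the paper's proof. On $\{\vartheta_0\in\mathcal C_n\}$, the triples in $\mathcal Z_{\underline D}(\vartheta_0)$ and $\mathcal Z_{\overline D}(\vartheta_0)$, paired with $\vartheta_0$, lie in $\mathcal A_{\underline D}(\mathcal C_n)$ and $\mathcal A_{\overline D}(\mathcal C_n)$, so every projected extremum brackets the corresponding population cell bound simultaneously and monotonicity of $\mathbb P$ finishes. Your appeal to attainment via Theorem~\ref{thm - PI coupling endpoint bounds} is harmless but unnecessary, since an infimum (supremum) over a superset is automatically no larger (no smaller) than over the subset.
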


\begin{proof}
On the event
$\{\vartheta_0\in\mathcal C_n\}$,
the parameter value $\vartheta_0$ is among those over which each projected infimum and supremum is computed. Hence, simultaneously for every $(i,j)$,
\begin{align*}
\underline{\pi}^{L}_{ij}(\mathcal C_n)
 \leq
\underline{\pi}^{L}_{ij}(\vartheta_0)
\leq
\overline{\pi}^{L}_{ij}(\vartheta_0)
\leq
\overline{\pi}^{L}_{ij}(\mathcal C_n), \quad \text{and}\quad
\underline{\pi}^{U}_{ij}(\mathcal C_n)
 \leq
\underline{\pi}^{U}_{ij}(\vartheta_0)
\leq
\overline{\pi}^{U}_{ij}(\vartheta_0)
\leq
\overline{\pi}^{U}_{ij}(\mathcal C_n).
\end{align*}
The result follows by taking probabilities.
\end{proof}

Taking
\(
\mathcal C_n
=
\mathcal C_{n,B}^{\mathrm{MC}}(1-\alpha) 
\)
and $\mathbb{P}=\mathbb{P}_{\vartheta_0}^{n,B}$
therefore gives simultaneous finite-sample-valid projected confidence intervals for all endpoint-conditioned benchmark transition probabilities, with coverage probability at least
$1-\alpha_B$.
If
$\alpha(B+1)$
is an integer, this lower bound is $1-\alpha$.

The projection construction therefore accounts jointly for three distinct sources of uncertainty: sampling variability in the observable repeated cross-sections, partial identification of the latent marginal distributions under missingness, and nonuniqueness of the endpoint-attaining optimal transport plans. The finite-sample Monte Carlo calibration accomplishes this without relying on a large-sample approximation.


\section{Computational Implementation}
\label{Section Implementation}

The projection procedures in Section~\ref{Section Inference} require optimization over the finite-sample Monte Carlo confidence region for the observable-data parameter
\(
\vartheta=(\mathbf q_X,\mathbf q_Y)\in\Theta.
\)
For brevity in this section, write
\(
\mathcal C_{n,B}^{\mathrm{MC}}
:=
\mathcal C_{n,B}^{\mathrm{MC}}(1-\alpha)
=
\{\vartheta\in\Theta:p_{n,B}(\vartheta)>\alpha\}.
\)
The confidence region need not be constructed explicitly. Instead, the observable probabilities are retained as decision variables and membership in \(\mathcal C_{n,B}^{\mathrm{MC}}\) is imposed directly as a constraint in the outer optimization problems. Conditional on the observed data and the auxiliary random numbers used to construct the Monte Carlo test, this is a deterministic constraint on \(\vartheta\). Its geometry is generally nonsmooth and locally discontinuous because changes in \(\vartheta\) can change the simulated multinomial counts and hence the Monte Carlo rank. Conditional on \(\vartheta\), however, the missing-data and transport calculations remain finite-dimensional linear programs.

\subsection{Projected discrepancy bounds}
\label{subsec:comp-D}

For \(k=1,\ldots,K-1\), recall that 
\(
C_k(\mu-\nu)
=
F_\mu(k)-F_\nu(k)
=
\sum_{j=1}^k(\mu_j-\nu_j).
\)
The projected lower confidence bound defined in Section~\ref{subsec:projected-D} can be computed as
\begin{equation}
\label{eq:mc-projected-D-lower}
\begin{aligned}
\underline D_n\!\left(\mathcal C_{n,B}^{\mathrm{MC}}\right)
=
\inf_{\mathbf q_X,\mathbf q_Y,\mu,\nu,t}\quad
& \sum_{k=1}^{K-1} t_k\\
\text{subject to}\quad
& (\mathbf q_X,\mathbf q_Y)\in \mathcal C_{n,B}^{\mathrm{MC}}, \quad\mu,\nu\in\Delta^K,\\
& q_{Xj}\leq \mu_j\leq q_{Xj}+q_{X0},
    \; j=1,\ldots,K,\\
& q_{Yj}\leq \nu_j\leq q_{Yj}+q_{Y0},
    \; j=1,\ldots,K,\\
&  C_k(\mu-\nu)\leq t_k,\quad
  - C_k(\mu-\nu)\leq t_k,
    \; k=1,\ldots,K-1,\\
& t_k\geq0,\; k=1,\ldots,K-1.
\end{aligned}
\end{equation}
The auxiliary variables \(t_k\) linearize the absolute values in
\(D(\mu,\nu)=\sum_{k=1}^{K-1} |C_k(\mu-\nu)|\). Hence, apart from the Monte Carlo confidence-region restriction on \((\mathbf q_X,\mathbf q_Y)\), all restrictions in \eqref{eq:mc-projected-D-lower} are linear.

For the upper bound use
\(
\sum_{k=1}^{K-1}|C_k(\mu-\nu)|
=
\max_{s\in\mathcal S_K}
\sum_{k=1}^{K-1}s_k\,C_k(\mu-\nu),
\;
\mathcal S_K=\{-1,1\}^{K-1}.
\)
For each \(s\in\mathcal S_K\), define
\begin{equation}
\label{eq:mc-projected-D-upper-sign}
\begin{aligned}
\overline D_{n,s}\!\left(\mathcal C_{n,B}^{\mathrm{MC}}\right)
=
\sup_{\mathbf q_X,\mathbf q_Y,\mu,\nu}\quad
& \sum_{k=1}^{K-1}s_k\,C_k(\mu-\nu)\\
\text{subject to}\quad
& (\mathbf q_X,\mathbf q_Y)\in \mathcal C_{n,B}^{\mathrm{MC}},\\
& \mu,\nu\in\Delta^K,\\
& q_{Xj}\leq \mu_j\leq q_{Xj}+q_{X0},
    \; j=1,\ldots,K,\\
& q_{Yj}\leq \nu_j\leq q_{Yj}+q_{Y0},
    \; j=1,\ldots,K.
\end{aligned}
\end{equation}
Then
\(
\overline D_n\!\left(\mathcal C_{n,B}^{\mathrm{MC}}\right)
=
\max_{s\in\mathcal S_K}
\overline D_{n,s}\!\left(\mathcal C_{n,B}^{\mathrm{MC}}\right).
\)
Thus the projected upper endpoint requires \(2^{K-1}\) sign-specific outer problems; in the four-category application there are eight.

The candidate-specific endpoint functions used in these outer problems are exactly the functions already defined in Section~\ref{subsec:projected-D},
\[
\underline D(\vartheta)
=
\min_{(\mu,\nu)\in\mathcal Z(\vartheta)}D(\mu,\nu),
\qquad
\overline D(\vartheta)
=
\max_{(\mu,\nu)\in\mathcal Z(\vartheta)}D(\mu,\nu),
\]
where
\(\mathcal Z(\vartheta)=\mathcal M(\mathbf q_X)\times\mathcal M(\mathbf q_Y)\).
By Proposition~\ref{prop:endpoint-correspondences}, \(\underline D\) and \(\overline D\) are continuous on \(\Theta\). Consequently,
\(
\underline D_n\!\left(\mathcal C_{n,B}^{\mathrm{MC}}\right)
=
\inf_{\vartheta\in\mathcal C_{n,B}^{\mathrm{MC}}}\underline D(\vartheta),
\;
\overline D_n\!\left(\mathcal C_{n,B}^{\mathrm{MC}}\right)
=
\sup_{\vartheta\in\mathcal C_{n,B}^{\mathrm{MC}}}\overline D(\vartheta).
\)

\subsection{Endpoint-conditioned benchmark transition probabilities}
\label{subsec:comp-benchmarks}

The endpoint-conditioned benchmark bounds have a nested structure. For fixed \(\vartheta=(\mathbf q_X,\mathbf q_Y)\), the admissible latent marginals are determined by \(\mathcal M(\mathbf q_X)\) and \(\mathcal M(\mathbf q_Y)\). Conditional on these observable probabilities, the discrepancy endpoints and associated transport-cell extrema are finite-dimensional linear programs.

For each endpoint \(e\in\{L,U\}\) and each cell \((i,j)\), let
\(\underline\pi_{ij}^{\,e}(\vartheta)\) and \(\overline\pi_{ij}^{\,e}(\vartheta)\) denote the candidate-specific cell extrema defined by the endpoint-conditioned benchmark correspondences in Section~\ref{subsec:projected-benchmarks}. The Monte Carlo projected cell bounds are therefore
\begin{align*}
\underline\pi_{ij}^{L}\!\left(\mathcal C_{n,B}^{\mathrm{MC}}\right)
&=
\inf_{\vartheta\in\mathcal C_{n,B}^{\mathrm{MC}}}
\underline\pi_{ij}^{L}(\vartheta),
&
\overline\pi_{ij}^{L}\!\left(\mathcal C_{n,B}^{\mathrm{MC}}\right)
&=
\sup_{\vartheta\in\mathcal C_{n,B}^{\mathrm{MC}}}
\overline\pi_{ij}^{L}(\vartheta),
\\
\underline\pi_{ij}^{U}\!\left(\mathcal C_{n,B}^{\mathrm{MC}}\right)
&=
\inf_{\vartheta\in\mathcal C_{n,B}^{\mathrm{MC}}}
\underline\pi_{ij}^{U}(\vartheta),
&
\overline\pi_{ij}^{U}\!\left(\mathcal C_{n,B}^{\mathrm{MC}}\right)
&=
\sup_{\vartheta\in\mathcal C_{n,B}^{\mathrm{MC}}}
\overline\pi_{ij}^{U}(\vartheta).
\end{align*}
Thus four outer optimization problems are required for each transition cell. Conditional on the observable probabilities, all inner endpoint and transport-cell calculations are linear programs, and the cell calculations are mutually independent and can therefore be performed in parallel.

Unlike \(\underline D\) and \(\overline D\), the candidate-specific cell-extremal value functions need not be continuous at every observable parameter because the endpoint-conditioned optimal face can change at degenerate parameter values. They are continuous on regions on which the relevant endpoint branch, sign configuration, and optimal linear-programming structure are locally unchanged. Appendix~\ref{appendix:global-search} formalizes the regular-region representation used below.

\subsection{Monte Carlo implementation and persistent global exploration}
\label{subsec:mc-global-search}

The Monte Carlo confidence region is evaluated using common random numbers. Before optimization, I generate and fix the auxiliary random numbers required for all \(B\) simulated samples and for randomized tie-breaking. For each candidate observable parameter \(\vartheta\), these fixed random numbers are transformed into multinomial counts having exactly the distribution implied by \(\vartheta\). I use a sequential conditional-binomial inverse transform with fixed auxiliary uniforms. Consequently, repeated evaluations of the same candidate return the same Monte Carlo \(p\)-value, while for every fixed candidate the simulated counts retain the exact null distribution required by Theorem~\ref{thm:MC-test}.

To separate statistical and computational randomness, let \(\mathcal G_{n,B}\) denote the sigma-field generated by the observed samples together with the auxiliary Monte Carlo random variables used to generate the simulated counts and randomized tie breakers. Conditional on \(\mathcal G_{n,B}\), the realized confidence region \(\mathcal C_{n,B}^{\mathrm{MC}}\) and all candidate-specific projection objectives are deterministic. Let
\(
(\Omega_{\mathrm{alg}},\mathcal F_{\mathrm{alg}},\mathbb P_{\mathrm{alg}})
\)
govern the random numbers used only by the outer optimization algorithm, independently of \(\mathcal G_{n,B}\). All almost-sure convergence statements below are with respect to \(\mathbb P_{\mathrm{alg}}\), for a fixed realization of \(\mathcal G_{n,B}\).

I search the nonsmooth outer problems using simulated annealing augmented by persistent global exploration. Let \(\mathcal F^{\mathrm{alg}}_{m-1}\) denote the algorithmic search history through proposal \(m-1\). Conditional on this history, the \(m\)th proposal has law
\begin{equation}
\label{eq:mixture-proposal}
\mathbb P_{\mathrm{alg}}
\!\left(
\vartheta_m^{\mathrm{prop}}\in A
\mid
\mathcal F^{\mathrm{alg}}_{m-1}
\right)
=
(1-\rho)\,
Q_m^{\mathrm{loc}}
\!\left(
A\mid\mathcal F^{\mathrm{alg}}_{m-1}
\right)
+
\rho\,\Gamma(A),
\qquad
0<\rho<1,
\end{equation}
for every Borel set \(A\subseteq\Theta\). Here \(Q_m^{\mathrm{loc}}\) is the local simulated-annealing proposal law and \(\Gamma\) is a fixed global proposal measure on \(\Theta\). In the implementation, \(\Gamma\) is induced by a uniform global proposal in the bounded simplex parameterization and has full support on \(\Theta\). The local component retains the exponential cooling schedule used for efficient finite-budget search; the global component is used persistently with probability \(\rho\).

The convergence argument requires only that the global component assign positive probability to arbitrarily near-optimal feasible candidates. To state this condition, fix a reported outer objective \(h:\Theta\to\mathbb R\). If \(h\) is minimized, let
\(
h^\star
=
\inf_{\vartheta\in\mathcal C_{n,B}^{\mathrm{MC}}}h(\vartheta),
\;\text{and}\;
A_\varepsilon(h)
=
\left\{
\vartheta\in\mathcal C_{n,B}^{\mathrm{MC}}:
h(\vartheta)<h^\star+\varepsilon
\right\}.
\)
If \(h\) is maximized, let
\(
h^\star
=
\sup_{\vartheta\in\mathcal C_{n,B}^{\mathrm{MC}}}h(\vartheta),
\;\text{and}\;
A_\varepsilon(h)
=
\left\{
\vartheta\in\mathcal C_{n,B}^{\mathrm{MC}}:
h(\vartheta)>h^\star-\varepsilon
\right\}.
\)

\begin{assumption}
\label{ass:near-optimal-accessibility}
\textbf{Near-optimal accessibility.}
Conditional on \(\mathcal G_{n,B}\), for every reported outer projection objective \(h\) and every \(\varepsilon>0\),
\(
\Gamma\!\left(A_\varepsilon(h)\right)>0.
\)
\end{assumption}

Assumption~\ref{ass:near-optimal-accessibility} is a condition on the realized projection problem and the global proposal measure, not on the true observable parameter. It requires neither attainment nor uniqueness of the outer extremum. Rather, it rules out the computational pathology in which all candidates arbitrarily close to the exact projected extremum lie in a set that is null under the global proposal measure. Equivalently, the ordinary infimum or supremum over \(\mathcal C_{n,B}^{\mathrm{MC}}\) agrees with the corresponding \(\Gamma\)-essential extremum. Appendix~\ref{appendix:global-search} gives a simple regular-region condition that is sufficient for near-optimal accessibility and relates it to the structure of the Monte Carlo acceptance rule and the candidate-specific optimization problems.

Starting from any feasible incumbent, let \(\widehat h_N\) denote the best feasible objective value encountered after the first \(N\) proposals generated by \eqref{eq:mixture-proposal}, with minimization and maximization interpreted in the appropriate direction.

\begin{proposition}[Almost-sure global convergence of the outer projection search]
\label{prop:persistent-global-convergence}
Fix a realization of \(\mathcal G_{n,B}\) for which Assumption~\ref{ass:near-optimal-accessibility} holds, and suppose that the proposal law satisfies \eqref{eq:mixture-proposal} with \(\rho>0\). For every reported outer minimization problem,
\[
\mathbb P_{\mathrm{alg}}\!\left(
\lim_{N\to\infty}\widehat h_N
=
\inf_{\vartheta\in\mathcal C_{n,B}^{\mathrm{MC}}}h(\vartheta)
\right)
=1,
\]
and for every reported outer maximization problem,
\[
\mathbb P_{\mathrm{alg}}\!\left(
\lim_{N\to\infty}\widehat h_N
=
\sup_{\vartheta\in\mathcal C_{n,B}^{\mathrm{MC}}}h(\vartheta)
\right)
=1.
\]
The result does not require convexity or differentiability of the Monte Carlo confidence region, continuity of \(h\), uniqueness of any inner optimizer, or a particular cooling schedule for the local simulated-annealing component.
\end{proposition}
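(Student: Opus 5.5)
The argument is a standard Borel--Cantelli argument driven by the persistent global component of the proposal law. I treat the minimization case; the maximization case follows by replacing $h$ with $-h$. Write $h^\star=\inf_{\vartheta\in\mathcal C_{n,B}^{\mathrm{MC}}}h(\vartheta)$ and fix the realization of $\mathcal G_{n,B}$, so that $\mathcal C_{n,B}^{\mathrm{MC}}$ and $h$ are deterministic.

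\emph{Monotonicity and the lower bound.} The sequence $\widehat h_N$ records the best feasible value seen so far, so it is nonincreasing in $N$. Every feasible candidate lies in $\mathcal C_{n,B}^{\mathrm{MC}}$, and the initial incumbent is feasible, so $\widehat h_N\geq h^\star$ for all $N$. Hence $\lim_N\widehat h_N$ exists in $[h^\star,\infty)$. It remains to show that, almost surely, this limit is below $h^\star+\varepsilon$ for every $\varepsilon>0$.

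\emph{Hitting the near-optimal set.} Fix $\varepsilon>0$ and set $p_\varepsilon:=\rho\,\Gamma(A_\varepsilon(h))$. By Assumption~\ref{ass:near-optimal-accessibility}, $p_\varepsilon>0$. The mixture form \eqref{eq:mixture-proposal} gives, for every $m$,
\[
\mathbb P_{\mathrm{alg}}\!\left(\vartheta_m^{\mathrm{prop}}\in A_\varepsilon(h)\mid\mathcal F^{\mathrm{alg}}_{m-1}\right)\geq p_\varepsilon .
\]
This bound holds whatever the local law $Q_m^{\mathrm{loc}}$ is. Consequently no property of the cooling schedule or of the acceptance rule is used.

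Iterating conditional expectations over the filtration then gives
\[
\mathbb P_{\mathrm{alg}}\!\left(\vartheta_m^{\mathrm{prop}}\notin A_\varepsilon(h)\ \text{for } m=M+1,\ldots,M+N\right)\leq(1-p_\varepsilon)^N .
\]
Letting $N\to\infty$, with probability one some proposal falls in $A_\varepsilon(h)$. One could equivalently invoke the conditional (L\'evy) Borel--Cantelli lemma, since $\sum_m p_\varepsilon=\infty$, which yields infinitely many hits.

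A hit in $A_\varepsilon(h)$ is feasible, because $A_\varepsilon(h)\subseteq\mathcal C_{n,B}^{\mathrm{MC}}$, and it has objective value below $h^\star+\varepsilon$. Its value is therefore counted in $\widehat h_N$, whether or not the annealing chain accepts the move. This is the key bookkeeping point: $\widehat h_N$ tracks all feasible proposals evaluated, not only the current state of the chain. Hence, almost surely, $\widehat h_N<h^\star+\varepsilon$ for all large $N$.

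\emph{Conclusion.} Intersect the probability-one events over $\varepsilon=1/r$, $r\in\mathbb N$. Combined with the lower bound $\widehat h_N\geq h^\star$, this gives $\lim_N\widehat h_N=h^\star$ almost surely.

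\emph{Remaining checks.} The only delicate point is that $A_\varepsilon(h)$ must be measurable, so that $\Gamma(A_\varepsilon(h))$ and the conditional probabilities are well defined. This holds because $h$ is measurable (by Proposition~\ref{prop:supp-measurable-maximum} and the appendix results) and $\mathcal C_{n,B}^{\mathrm{MC}}$ is Borel, since $p_{n,B}$ is measurable in $\vartheta$ for fixed auxiliary draws. Also note that no continuity, convexity or attainment is used anywhere above, which matches the final sentence of the statement.
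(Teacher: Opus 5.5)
Your proposal is correct and follows essentially the same route as the paper. Both arguments use the mixture lower bound $\rho\,\Gamma(A_{1/r}(h))>0$ on each conditional hitting probability, the geometric bound $(1-\rho\Gamma(A))^N$ obtained by repeated conditioning (the paper's Lemma~\ref{lem:persistent-hit}), monotonicity of $\widehat h_N$ together with $\widehat h_N\ge h^\star$, and the intersection of the probability-one events over $\varepsilon=1/r$; the only differences are cosmetic: you note that a single hit suffices where the paper records infinitely many, and the paper states explicitly that $h^\star$ is finite because the reported objectives are bounded, which you use implicitly when defining $A_\varepsilon(h)$.
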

\begin{proof}
See  Appendix~\ref{proof:global-search-proposition}.
\end{proof}
For every \(\varepsilon>0\), Assumption~\ref{ass:near-optimal-accessibility} gives \(\Gamma(A_\varepsilon(h))>0\). The persistent global component therefore proposes a point in \(A_\varepsilon(h)\) with probability one as the computational budget increases. Taking a countable sequence \(\varepsilon\downarrow0\) yields convergence of the best feasible value to the exact projected extremum. The argument concerns computational approximation conditional on the realized Monte Carlo confidence region and is separate from the finite-sample coverage result in Section~\ref{Section Inference}.

\begin{corollary}[Projected discrepancy bounds]
\label{cor:global-D}
Under the conditions of Proposition~\ref{prop:persistent-global-convergence},
\(
\widehat{\underline D}_{n,N}
\rightarrow
\underline D_n\!\left(\mathcal C_{n,B}^{\mathrm{MC}}\right)
\;\text{and}\;
\widehat{\overline D}_{n,N}
\rightarrow
\overline D_n\!\left(\mathcal C_{n,B}^{\mathrm{MC}}\right),
\;
\mathbb P_{\mathrm{alg}}\text{-almost surely}.
\)
\end{corollary}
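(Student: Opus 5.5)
The corollary should follow by applying Proposition~\ref{prop:persistent-global-convergence} to the specific outer objectives used for the discrepancy bounds. The two quantities $\widehat{\underline D}_{n,N}$ and $\widehat{\overline D}_{n,N}$ are the best feasible values after $N$ proposals. So the work is to identify the reported objective $h$ in each case and check that the proposition's hypotheses hold for it.

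For the lower bound, take $h=\underline D(\vartheta)$, minimized over $\mathcal C_{n,B}^{\mathrm{MC}}$. For fixed $\vartheta$, its value is the linear program of Theorem~\ref{thm:discrepancy_bounds}, which is attained. By Section~\ref{subsec:comp-D}, the projected lower bound \eqref{eq:mc-projected-D-lower} equals $\inf_{\vartheta\in\mathcal C_{n,B}^{\mathrm{MC}}}\underline D(\vartheta)$. This uses the fact that jointly infimizing over $(\vartheta,\mu,\nu,t)$ equals infimizing over $\vartheta$ the inner optimal value.

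Fix a realization of $\mathcal G_{n,B}$ for which Assumption~\ref{ass:near-optimal-accessibility} holds. Since $h$ is one of the reported outer objectives, Proposition~\ref{prop:persistent-global-convergence} gives $\widehat{\underline D}_{n,N}\to\underline D_n(\mathcal C_{n,B}^{\mathrm{MC}})$ $\mathbb P_{\mathrm{alg}}$-almost surely. Continuity of $\underline D$ (Proposition~\ref{prop:endpoint-correspondences}) is not needed here, since the proposition does not require it.

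The upper bound needs more care, because the reported computation runs $2^{K-1}$ sign-specific outer problems \eqref{eq:mc-projected-D-upper-sign} and takes their maximum. There are two routes, and I would present the first with the second as a remark.

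\textbf{Route 1.} Apply the proposition to each sign-specific objective
\[
h_s(\vartheta)=\max_{(\mu,\nu)\in\mathcal Z(\vartheta)}\sum_k s_kC_k(\mu-\nu).
\]
Its supremum over $\mathcal C_{n,B}^{\mathrm{MC}}$ is $\overline D_{n,s}$. Let $\widehat h_{s,N}$ be its best feasible value. Each $\widehat h_{s,N}$ converges almost surely. Intersecting the $2^{K-1}$ almost-sure events, which is a finite intersection and hence has probability one, and using continuity of the finite maximum, gives
\[
\max_s\widehat h_{s,N}\to\max_s\overline D_{n,s}=\overline D_n .
\]
The last equality holds because suprema commute:
\[
\sup_\vartheta\max_s h_s=\max_s\sup_\vartheta h_s ,
\]
and $\max_s h_s(\vartheta)=\overline D(\vartheta)$ by Theorem~\ref{thm:discrepancy_bounds}.

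\textbf{Route 2.} Apply the proposition directly with $h=\overline D$.

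The main obstacle is confirming that Assumption~\ref{ass:near-optimal-accessibility} is available for each objective actually reported. If the assumption is imposed only for $\overline D$ and not for each $h_s$, Route~1 needs the following argument. For each $s$ achieving $\max_s\overline D_{n,s}$, the near-optimal set satisfies $A_\varepsilon(h_s)\subseteq A_\varepsilon(\overline D)$, but this inclusion goes the wrong direction for deducing accessibility of $h_s$. So I would instead argue as follows. The running value $\max_s\widehat h_{s,N}$ is always at least the best $\overline D$-value among proposals landing in $A_\varepsilon(\overline D)$, because at any such proposal some sign $s$ attains $\overline D$. It is also at most $\overline D_n$, because every $\widehat h_{s,N}$ is a feasible value. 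Since $\Gamma(A_\varepsilon(\overline D))>0$, these two bounds squeeze the limit to $\overline D_n$ almost surely. This requires only that the sign-specific searches share the global proposals, or that each searches independently with its own persistent global component, which suffices for the same conclusion.
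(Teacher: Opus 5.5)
Your proof is correct. For the lower endpoint, and for your Route~2, it is exactly the paper's argument: the paper applies Proposition~\ref{prop:persistent-global-convergence} with $h=\underline D$ (minimized) and $h=\overline D$ (maximized), takes Assumption~\ref{ass:near-optimal-accessibility} to hold for these two objectives, and stops there.

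Your Route~1 treats the upper endpoint genuinely differently. It proves convergence of $\max_s\widehat h_{s,N}$, which is the value actually produced by the $2^{K-1}$ sign-specific outer searches of Section~\ref{subsec:comp-D}, rather than convergence of the best feasible $\overline D$-value. This buys fidelity to the implementation. The cost is that you work with objectives $h_s$ for which the paper verifies nothing, so you should record two facts:
\begin{itemize}
\item each $h_s$ is bounded in absolute value by $K-1$;
\item each $h_s$ is continuous by Berge's theorem, exactly as in Proposition~\ref{prop:endpoint-correspondences}, and hence Borel.
\end{itemize}
The proof of the proposition needs measurability so that $A_\varepsilon(h_s)$ is a Borel set in Lemma~\ref{lem:persistent-hit}. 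For the same reason, continuity of $\underline D$ is not idle: the paper uses it to obtain measurability. Note also that when the sign-specific searches share all their proposals, $\max_s\widehat h_{s,N}$ coincides with the best feasible $\overline D$-value, so your squeeze is Route~2 in disguise.

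The one step you assert without proof is that independent sign-specific searches ``suffice for the same conclusion.'' Your squeeze uses the fact that a proposal landing in $A_\varepsilon(\overline D)$ is evaluated by the search of the sign attaining $\overline D$ there. That is false when the searches are independent. The fix is a pigeonhole argument over the finitely many signs:
\begin{itemize}
\item $A_\varepsilon(\overline D)=\bigcup_s\{\vartheta\in\mathcal C_{n,B}^{\mathrm{MC}}:h_s(\vartheta)>\overline D_n-\varepsilon\}$ has positive $\Gamma$-mass for every $\varepsilon>0$.
\item Hence for each $r$ some sign $s_r$ gives its set positive mass at $\varepsilon=1/r$, and some sign $s$ recurs infinitely often.
\item These sets shrink as $\varepsilon$ decreases, so that single $s$ gives positive mass for every $\varepsilon>0$.
\item This forces $\overline D_{n,s}=\overline D_n$, and therefore $\Gamma(A_\varepsilon(h_s))>0$ for all $\varepsilon$.
\end{itemize}
Now apply Proposition~\ref{prop:persistent-global-convergence} to that sign's own search, with the same $\Gamma$ and $\rho>0$. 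This gives $\widehat h_{s,N}\to\overline D_n$ almost surely. Since $\widehat h_{s',N}\le\overline D_n$ for every $s'$, the maximum converges to $\overline D_n$. The same argument shows that Route~1 needs Near-optimal accessibility only for $\overline D$, not separately for every $h_s$.
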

\begin{proof}
Apply Proposition~\ref{prop:persistent-global-convergence} with \(h=\underline D\) for the lower projected discrepancy endpoint and \(h=\overline D\) for the upper projected discrepancy endpoint. The former is minimized and the latter is maximized over \(\mathcal C_{n,B}^{\mathrm{MC}}\). Both convergence statements therefore hold \(\mathbb P_{\mathrm{alg}}\)-almost surely.
\end{proof}

\begin{corollary}[Endpoint-conditioned benchmark bounds]
\label{cor:global-benchmarks}
Under the conditions of Proposition~\ref{prop:persistent-global-convergence}, for every endpoint \(e\in\{L,U\}\) and every cell \((i,j)\in\{1,\dots,K\}^2\), 
\(
\widehat{\underline\pi}_{ij,N}^{\,e}
\rightarrow
\underline\pi_{ij}^{e}\!\left(\mathcal C_{n,B}^{\mathrm{MC}}\right)
\;\text{and}\;
\widehat{\overline\pi}_{ij,N}^{\,e}
\rightarrow
\overline\pi_{ij}^{e}\!\left(\mathcal C_{n,B}^{\mathrm{MC}}\right),
\;
\mathbb P_{\mathrm{alg}}\text{-almost surely}.
\)
Because the collection of cells and endpoints is finite, these convergences hold simultaneously on a single event of \(\mathbb P_{\mathrm{alg}}\)-probability one.
\end{corollary}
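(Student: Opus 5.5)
The plan is to apply Proposition~\ref{prop:persistent-global-convergence} once for each reported cell objective, and then take the intersection of the resulting probability-one events. There are four outer problems per cell and endpoint pair. For $e\in\{L,U\}$ and $(i,j)$, minimizing $h=\underline\pi^{\,e}_{ij}(\cdot)$ over $\mathcal C_{n,B}^{\mathrm{MC}}$ has infimum $\underline\pi^{e}_{ij}(\mathcal C_{n,B}^{\mathrm{MC}})$. Maximizing $h=\overline\pi^{\,e}_{ij}(\cdot)$ has supremum $\overline\pi^{e}_{ij}(\mathcal C_{n,B}^{\mathrm{MC}})$. These identities are exactly the representations in Section~\ref{subsec:comp-benchmarks}. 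They follow from the definitions of the projected bounds as infima and suprema over $\mathcal A_{\underline D}$ and $\mathcal A_{\overline D}$: one takes the infimum first over $(\mu,\nu,\pi)\in\mathcal Z_e(\vartheta)$ and then over $\vartheta$. For each fixed $\vartheta$, the inner extremum is attained by Theorem~\ref{thm - PI coupling endpoint bounds}, so $h$ is a well-defined real-valued function on $\Theta$.

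Next I check that the hypotheses of Proposition~\ref{prop:persistent-global-convergence} hold for each such $h$. The convergence statement is conditional on a realization of $\mathcal G_{n,B}$ for which Assumption~\ref{ass:near-optimal-accessibility} holds. That assumption is imposed for \emph{every} reported outer projection objective, so it covers all the $\underline\pi^{\,e}_{ij}$ and $\overline\pi^{\,e}_{ij}$. This matters because these cell functions may be discontinuous at degenerate parameter values. The proposition, however, explicitly requires no continuity of $h$ and no uniqueness of inner optimizers. Hence, for each of the $4K^2$ problems, the best feasible value $\widehat{\underline\pi}^{\,e}_{ij,N}$ or $\widehat{\overline\pi}^{\,e}_{ij,N}$ converges to the corresponding projected bound on an event $E_{e,ij,\pm}$ with $\mathbb P_{\mathrm{alg}}(E_{e,ij,\pm})=1$.

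Finally, the intersection of these finitely many events has probability one, since its complement is a finite union of null sets. This gives the simultaneous statement. The same holds if separate search runs are used for different cells, because all runs live on the same $(\Omega_{\mathrm{alg}},\mathcal F_{\mathrm{alg}},\mathbb P_{\mathrm{alg}})$.

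The only delicate step is the first: the projected bound defined over the lifted set $\mathcal A_e(\mathcal C_n)$ must coincide with the outer extremum of the candidate-specific cell function. This is an iterated-infimum identity, which holds for arbitrary sets. Within the corollary itself, no measurability of $h$ is needed beyond what Appendix~\ref{appendix:projection-measurability} supplies, because the events in Assumption~\ref{ass:near-optimal-accessibility} are assumed to be $\Gamma$-measurable with positive mass.
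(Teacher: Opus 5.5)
Your proposal is correct and follows the paper's proof: apply Proposition~\ref{prop:persistent-global-convergence} to $h=\underline\pi^{e}_{ij}$ (minimized) and $h=\overline\pi^{e}_{ij}$ (maximized) for each endpoint and cell, then intersect the finitely many $\mathbb P_{\mathrm{alg}}$-probability-one events. Your extra check that the definitions over $\mathcal A_{\underline D}$ and $\mathcal A_{\overline D}$ equal the outer extrema of the candidate-specific cell functions is a harmless iterated-extremum identity that the paper takes as given. The only slip is cosmetic: there are two outer problems per cell--endpoint pair, hence four per cell and $4K^2$ in total, as you later count.
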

\begin{proof}
For each \((i,j)\) and \(e\in\{L,U\}\), apply Proposition~\ref{prop:persistent-global-convergence} to \(h=\underline\pi_{ij}^{e}\) for the projected lower cell bound and to \(h=\overline\pi_{ij}^{e}\) for the projected upper cell bound. Each convergence event has \(\mathbb P_{\mathrm{alg}}\)-probability one. Because there are finitely many cells and endpoints, the intersection of these events also has \(\mathbb P_{\mathrm{alg}}\)-probability one. Hence all reported endpoint-conditioned benchmark bounds converge simultaneously to their exact Monte Carlo projection extrema as \(N\to\infty\).
\end{proof}

Proposition~\ref{prop:persistent-global-convergence} is an asymptotic
computational guarantee: it does not assert that a finite run has exactly
attained the global optimum. Numerical optimization error is therefore still
assessed separately from statistical validity. I use multiple starting values,
retain the best feasible objective across all starts, and record the stability
of the attained extrema as the computational budget increases. Solver exit
flags, Monte Carlo acceptance, validity of every inner linear program, simplex
and transport feasibility, ordering of lower and upper benchmark bounds, and
membership of reported probabilities in \([0,1]\) are also checked. Increasing
the number of Monte Carlo replications \(B\) has a distinct role: it refines
the Monte Carlo rank resolution and can affect the power and concordance
properties of the randomized finite-sample test, whereas increasing the
optimization budget refines the numerical approximation to the projection
extrema for a fixed realization of the Monte Carlo confidence region.


All discrepancy results are reported in units of ordinal ranks. Because \(D(\mu,\nu)\) coincides with the Wasserstein--1 distance by Proposition~\ref{prop:OT_representation}, it is the smallest average absolute rank displacement compatible with the two marginal distributions. Its maximum possible value is \(K-1\), so that \(D/(K-1)\) expresses the discrepancy as a proportion of the maximum possible rank displacement. The benchmark matrices report transported probability mass between source and target ranks and are not interpreted as realized individual transitions.

\section{Discussion}
\label{Section Discussion}

The framework characterizes what marginal changes in repeated ordinal cross-sections imply about the
minimum restructuring needed to reconcile them. This section clarifies the interpretation and limitations
of that information and discusses the scope of the inferential and computational approach. Additional interpretations and extensions---including maximal reassignment,
panel-data comparisons, connections to Fr\'echet classes, binary poverty
transitions, Gini mean differences and dependence, and higher-order quantile
discrepancies---are developed in Appendix~\ref{appendix:further-discussion}.

\subsection{Interpretation and Limitations}

As discussed in the Introduction, the measurement characterization connects
two different axiomatic perspectives. The ordinal-measurement literature
motivates cumulative probabilities and adjacent-category movements as natural
ways of representing change on an ordered categorical scale. The
probability-metrics literature, and in particular work on norm extensions of
a prescribed ground metric, asks how distances between probability laws
inherit structure from the underlying state space. The present approach
connects these perspectives: ordinal thresholds determine the elementary
coordinates of distributional change, while norm inducibility, separability
across distinct ordinal cuts, and consistency with the ground geometry
determine how those coordinates aggregate into a probability metric between
two arbitrary marginals.

The measurement characterization also clarifies the scope of
Theorem~\ref{thm:ordinal_metric}. The result does not assert that every
meaningful comparison of ordinal distributions must take a cumulative
absolute-difference form. It characterizes that form within the class of
norm-induced probability metrics that are compatible with a threshold-additive
ordinal geometry and separable across distinct ordinal cuts. The ordinal
structure alone does not determine the relative threshold weights: these remain
unrestricted in the general characterization and become equal only after
imposing threshold neutrality and a normalization. Alternative questions---for
example, directional welfare comparisons, dispersion relative to a reference
distribution, or higher-order notions of displacement---can therefore lead
naturally to different axioms and different functionals.

Familiar distributional metrics provide useful comparisons because they
embody different ways of measuring differences between ordinal
distributions. On the ordered support, the Kolmogorov--Smirnov distance is
\(
D_{\mathrm{KS}}(\mu,\nu)
=
\max_{1\leq k<K}|F_\mu(k)-F_\nu(k)|,
\)
so it uses the same cumulative threshold coordinates but aggregates them by
their largest imbalance rather than additively. It therefore does not satisfy
the threshold-separability principle used here. Total variation,
\(
D_{\mathrm{TV}}(\mu,\nu)
=
\frac{1}{2}\sum_{i=1}^K|\mu_i-\nu_i|,
\)
instead compares category probabilities directly and is invariant to
permutations of the category labels, so it does not exploit their ordering.
Total variation also admits a coupling representation as the minimum
probability of assigning different categories, equivalently as optimal
transport under the zero--one cost. Under that geometry, however, every
category change has the same cost regardless of how many ordinal thresholds
are crossed. The three measures therefore emphasize different features of
distributional change. Kolmogorov--Smirnov records the largest imbalance
across ordinal thresholds, total variation measures the overall discrepancy
in category probabilities without regard to ordinal distance, and the
discrepancy studied here aggregates displacement across the successive
ordinal thresholds. Its optimal plans consequently have a specifically
ordinal interpretation as least-threshold-crossing transition benchmarks.

The weighted formulation of the discrepancy also permits additional cardinal information to be
incorporated when it is substantively justified. For example, if the ordered
categories are associated with known cardinal locations, unequal threshold
weights can reflect the corresponding differences in adjacent spacing. What
the category-level framework does not use is information about latent
locations within a category. If such information were available---for
example, distances of individuals from a threshold within an income
interval---the outcome space or information structure could be enriched to
incorporate it. The canonical equal-threshold specification is therefore
appropriate to the setting in which only the ordering of the observed
categories is taken as informative.

The conservative benchmark is not an estimate of the actual transition
structure behind observed marginal change. Many different latent transitions
can produce the same marginals, including those with offsetting movements.
The discrepancy instead measures the minimum aggregate ordinal restructuring
compatible with the marginals, and its optimal transport plans show alternative
ways this minimum can be organized across categories.

Nonuniqueness in optimal plans is substantive, not merely computational. Repeated cross-sections may
reveal the minimum restructuring needed but cannot specify how it is allocated across individual flows.
Endpoint-conditioned cell bounds preserve this ambiguity under partial observation by reporting the full
range of benchmark reallocations associated with each discrepancy endpoint.

The limitation of the approach is clearest when the marginal distributions coincide. In that case,
$D(\mu,\nu)=0$, because no restructuring is required to reconcile the marginals, even though potentially
large offsetting individual transitions may have occurred. More generally, the conservative benchmark
identifies movement that is required by marginal change, not all movement that the marginals are capable
of hiding. This distinction is why its optimal plans are interpreted as conservative transition benchmarks
rather than reconstructed transitions.

There is nevertheless complementary information in asking how much aggregate
movement the same marginals can accommodate.
Appendix~\ref{appendix:maximal-reassignment} defines the maximal-reassignment
functional $M(\mu,\nu)$ and shows that
\(
[D(\mu,\nu),M(\mu,\nu)]
\)
is the sharp identified set for the aggregate ordinal reassignment cost when
only the two marginals are known. The two endpoints answer different questions:
$D$ measures the restructuring that the marginal change \emph{requires}, whereas
$M$ measures the greatest restructuring those marginals \emph{fail to rule out}.
Accordingly, $M$ is not an alternative measure of ordinal distributional
change, but an extremal functional of the admissible couplings after the ordinal
geometry has been fixed. Appendix~\ref{appendix:frechet} develops
this relationship further, distinguishes these aggregate bounds from classical
cellwise Fr\'echet bounds, and considers the case in which population transition
data are observed directly.
%

The framework can also be applied to counterfactual or covariate-adjusted
marginal distributions. For example, one could construct counterfactual
ordinal distributions that hold fixed the distribution of observed
covariates while allowing the conditional outcome distribution to change,
or vice versa, and then use the ordinal discrepancy and conservative
benchmarks to measure the restructuring associated with these alternative
marginal changes. This suggests a connection with distributional
decomposition methods that separate changes in covariate composition from
changes in conditional outcome relationships, as in the distributional 
decomposition literature; see, for example,
\citet{FirpoFortinLemieux2018}. Developing such
decompositions within the present ordinal framework is left for future work.

\subsection{Inferential and Computational Scope}

The projection approach is useful here because the observable-data
distribution is finite dimensional while the objects of interest are optimal
values and generally nonunique optimizer sets. Rather than relying on local
approximations to these optimization mappings, inference is conducted for the
observable probabilities and then propagated to the discrepancy and benchmark
objects by projection. In the multinomial setting considered here, randomized
Monte Carlo calibration provides the primitive confidence region with
finite-sample validity. The projection step then preserves that validity
without requiring differentiability of the optimization mappings or uniqueness
of their optimizers.

The underlying projection logic is more general than the ordinal transport
application. Whenever a finite-dimensional observable parameter determines a
feasible correspondence and a valid confidence region is available for that
parameter, the same inclusion argument can be used to obtain confidence sets
for optimization values and solution correspondences, subject to the requisite
measurability conditions. What is specific to the present setting is the exact
multinomial Monte Carlo calibration and the tractability of the
candidate-specific missing-data and transport problems as finite-dimensional
linear programs.

Statistical validity and numerical approximation should nevertheless be kept
distinct. The coverage results concern the exact projection of the primitive
confidence region, whereas a finite numerical search produces an approximation
to its extrema. Persistent global exploration provides almost-sure convergence
to those extrema under Near-optimal accessibility, but not a finite-run
certificate of global optimality. The discrepancy projections are comparatively
regular because their candidate-specific endpoint functions are continuous,
whereas the endpoint-conditioned benchmark extrema can change when optimal
linear-programming faces change. Verification of Near-optimal accessibility can
therefore be more problem-specific for the benchmark projections. Detailed
sufficient conditions and the geometry of the realized Monte Carlo acceptance
region are given in Appendix~\ref{appendix:global-search}.

More generally, the practical cost of projection depends on the dimension of
the observable-parameter space and on the complexity of the inner optimization
problems. These computational considerations affect how accurately the exact
projection can be approximated with a finite numerical budget; they do not
alter the finite-sample coverage statement itself.
\section{Empirical Illustration}
\label{Section Empirical Section}

\subsection{Remittances and the repeated-cross-section setting}

I apply the framework to changes in the frequency of remittance receipt among Lebanese survey respondents before and during the economic and financial crisis that
began in 2019. Remittances have long been an important source of household
support and foreign-currency inflows in Lebanon. UNDP reports that nominal
inflows remained comparatively stable at approximately US\$6--7 billion per
year over 2011--2021, even as the domestic economy contracted sharply, and
that remittances reached 37.8\% of GDP in 2022
\citep{UNDP2023Remittances}. Aggregate stability, however,
need not imply stability in the distribution of remittance support across
households.

This distinction became particularly important during the crisis. UNDP
describes remittances as increasingly serving the role of a private social
safety net and emphasizes improved information on the reach and frequency of
remittance flows as an input into policies concerning diaspora resources and
household welfare. The repeated-cross-section setting is therefore well suited
to the question studied here: whether the population distribution shifted
toward broader or more regular remittance receipt, and what minimum
restructuring is necessarily implied by such a shift.

I use repeated cross-sections from the \emph{Arab Barometer} for Lebanon and
focus on question Q1017:

\begin{quote}
``Does your family receive remittances from any immediate or extended family
member living abroad?''
\end{quote}

Responses are recorded as 1 = Yes, monthly; 2 = Yes, a few times a year;
3 = Yes, once a year; and 4 = We do not receive anything. Responses 98 =
Don't know and 99 = Refused to answer are treated as item nonresponse. I
compare Wave 4, conducted in 2016--17, with Wave 7, conducted in 2021--22.
Wave 4 is treated as the source distribution and Wave 7 as the target
distribution.

The analysis concerns the respondent population in each wave. In particular,
the distributions studied below are distributions of reported family
remittance receipt among survey respondents; they should not be interpreted
as distributions for the entire Lebanese household population without
additional assumptions concerning survey participation. The missing-data
analysis allows for unrestricted item nonresponse to Q1017 among respondents,
but does not address unit nonresponse into the survey. In what follows,
``population'' refers to this respondent population unless stated otherwise.

The comparison is descriptive rather than causal: the repeated
cross-sections do not identify the causal effect of the crisis on remittance
behaviour. The objective is instead to monitor how the population distribution
of remittance receipt changed across the two periods, particularly its reach
and frequency. This is closely aligned with the policy motivation emphasized
by UNDP, which highlights the importance of improved information on the
distribution and regularity of remittance flows for understanding their role
in household support.

The substantive response categories already coincide with their canonical
ranks. The three nontrivial ordinal thresholds also have direct economic
interpretations. For either wave, the cumulative probability through rank 3
is the respondent-population share receiving any remittances; the cumulative
probability through rank 2 is the share receiving remittances at least a few
times a year; and the cumulative probability through rank 1 is the share
receiving them monthly. The ordering therefore captures both the extensive
margin of remittance receipt and progressively more regular forms of support
without assigning cardinal monetary values to the categories.

The respective sample sizes are 1500 and 2399. There are 12 item-nonresponse
observations in Wave 4 and 6 in Wave 7, corresponding to nonresponse rates of
0.8\% and approximately 0.25\%. Under the unrestricted missing-data model,
the latent categories of these observations may be allocated arbitrarily
across the four ordered outcomes. The resulting identified sets therefore
remain agnostic about remittance receipt among respondents who did not provide
a valid answer to Q1017. Accordingly, the partial-identification analysis addresses uncertainty about
Q1017 within the respondent population; it does not extrapolate to individuals
who did not participate in the survey.

The numerical procedures used in the empirical application were implemented
in MATLAB R2022b; the projection calculations were carried out using parallel
processing on Monash University's M3 high-performance computing facility.

\subsection{Distributional change and minimum restructuring}

Figure~\ref{Fig-CDF WC Bounds} reports the plug-in worst-case CDF bounds implied by
unrestricted item nonresponse; they do not account for sampling uncertainty. The bounds are separated at every nontrivial
ordinal threshold. Even under allocations of nonresponse least favourable to
a difference between the waves, the share receiving remittances monthly is at
least 6.1 percentage points higher in Wave 7, the share receiving them at
least a few times a year is at least 13.3 percentage points higher, and the
share receiving any remittances is at least 15.1 percentage points higher.

\begin{figure}[pt]
\centering
\includegraphics[width=0.75\textwidth]{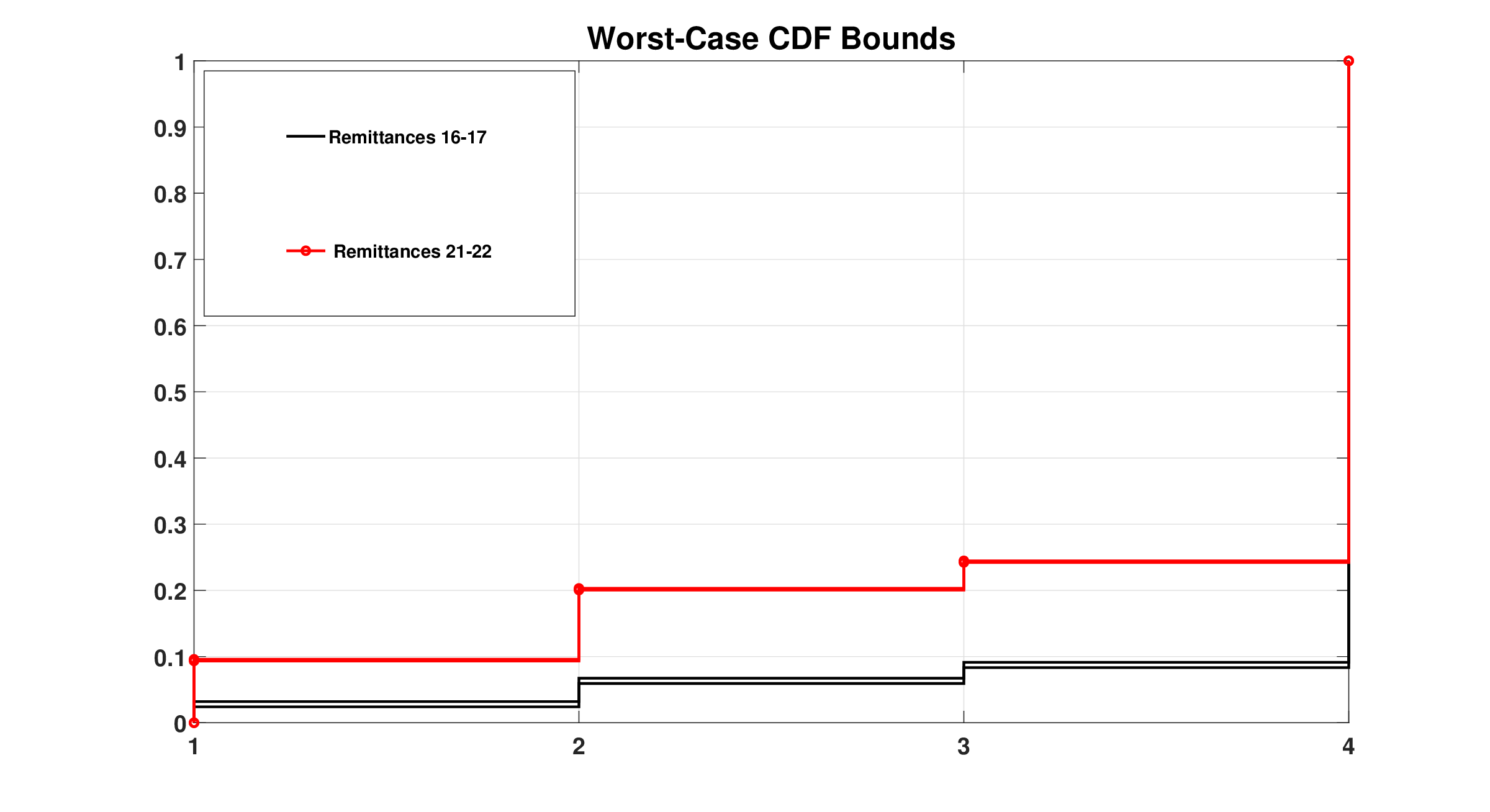}
\caption{Plug-in worst-case cumulative distribution function bounds under
unrestricted item nonresponse for remittance-receipt categories in Lebanon.}
\label{Fig-CDF WC Bounds}
\end{figure}

The respondent-population distribution therefore shifts toward both broader
and more regular remittance support, rather than only along the extensive
margin between receipt and nonreceipt. At every threshold the lower Wave 7
CDF bound exceeds the upper Wave 4 bound, so no admissible allocation of item
nonresponse can reverse the direction of the change. Under the canonical ordering, the separated worst-case CDF bounds therefore
provide graphical evidence that every admissible Wave 7 distribution is
first-order stochastically smaller than every admissible Wave 4 distribution.

The CDF comparison describes the direction and location of the empirical
change but not the amount of ordinal restructuring required to reconcile the
two distributions. The plug-in identified set for the discrepancy is
\(
[\,0.3454,\;0.3769\,].
\)
By Proposition~\ref{prop:OT_representation}, these values measure minimum
average absolute rank displacement over couplings of the corresponding latent
marginals. Thus, even under the allocation of item nonresponse that makes the
two waves as similar as possible, reconciling the empirical distributions
requires approximately 0.345 ordinal ranks of average displacement. Since the
maximum possible rank displacement is $K-1=3$, the identified interval
corresponds to approximately 11.5--12.6\% of the maximum possible average
rank displacement. This normalization should not be interpreted as the
percentage of households that changed remittance status.

Finite-sample inference additionally accounts for sampling variation in the
observable multinomial probabilities. Using the randomized Monte Carlo
calibration of Section~\ref{subsec:MC-inference} with $B=999$ replications at
the 5\% level gives the 95\%  confidence interval
\(
[\,0.2401,\;0.4768\,],
\)
with unrounded endpoints 0.2400862378 and 0.4767587822. The projection
procedure provides simultaneous 95\% coverage of the entire discrepancy
identified interval, rather than of a particular discrepancy selected from
that set. Consequently, a strictly positive lower confidence bound excludes zero from
the entire discrepancy identified interval at that confidence level. The numerically computed lower endpoint remains well
above zero, providing strong numerical evidence that positive minimum ordinal
restructuring is required even after accounting jointly for sampling
uncertainty and unrestricted item nonresponse.\footnote{The reported numerical
projections are finite-run approximations to the exact projection extrema.
Section~\ref{Section Implementation} gives the persistent-global convergence
guarantee and the numerical diagnostics used to assess the approximation.}

As a sensitivity check, repeating the projection with $B=9999$ Monte Carlo
replications and refining the lower endpoint gives
\(
[\,0.2400,\;0.4778\,].
\)
The lower endpoint is virtually unchanged, and the substantive conclusion is
therefore insensitive to a tenfold increase in the number of replications.
Appendix~\ref{appendix:MC-B-sensitivity} reports the detailed calculation and
discusses the associated power--concordance--computation trade-off.

\subsection{Conservative benchmark restructuring}

The discrepancy establishes how much minimum ordinal restructuring is
required, but not how it may be organized across categories. The
endpoint-conditioned conservative benchmarks provide this second layer of
information. Their cellwise bounds describe how much probability
mass can or must be assigned to particular reallocations among
least-displacement reconstructions associated with each discrepancy endpoint
and compatible with the Monte Carlo confidence region.

As in the discrepancy analysis, these projections are finite-run numerical
approximations to the exact projection extrema and are interpreted subject to
the computational qualification in Section~\ref{Section Implementation}.

Figure~\ref{fig:benchmark_bounds} reports the confidence bounds and summarizes
their qualitative directional implications. Rows correspond to Wave 4 source
categories and columns to Wave 7 target categories. At the 95\% confidence
level, a positive lower confidence bound implies that every benchmark in the
relevant endpoint-conditioned family assigns positive mass to that cell,
whereas a zero upper confidence bound excludes the cell from every such
benchmark. A zero lower confidence bound with a positive upper confidence
bound means that the reassignment cannot be established as necessary or
excluded at that confidence level. These statements hold simultaneously
across all cells and both discrepancy endpoints. Cellwise upper bounds are
attained separately and therefore need not be jointly attainable by a single
benchmark plan.

Movement toward more frequent receipt corresponds to cells below the main
diagonal because lower category numbers represent more frequent receipt;
movement toward less frequent receipt corresponds to cells above the diagonal.
The lower and upper discrepancy endpoints generate the same qualitative
directional pattern.

\begin{figure}[pt]
\centering

\includegraphics[width=.85\textwidth]
{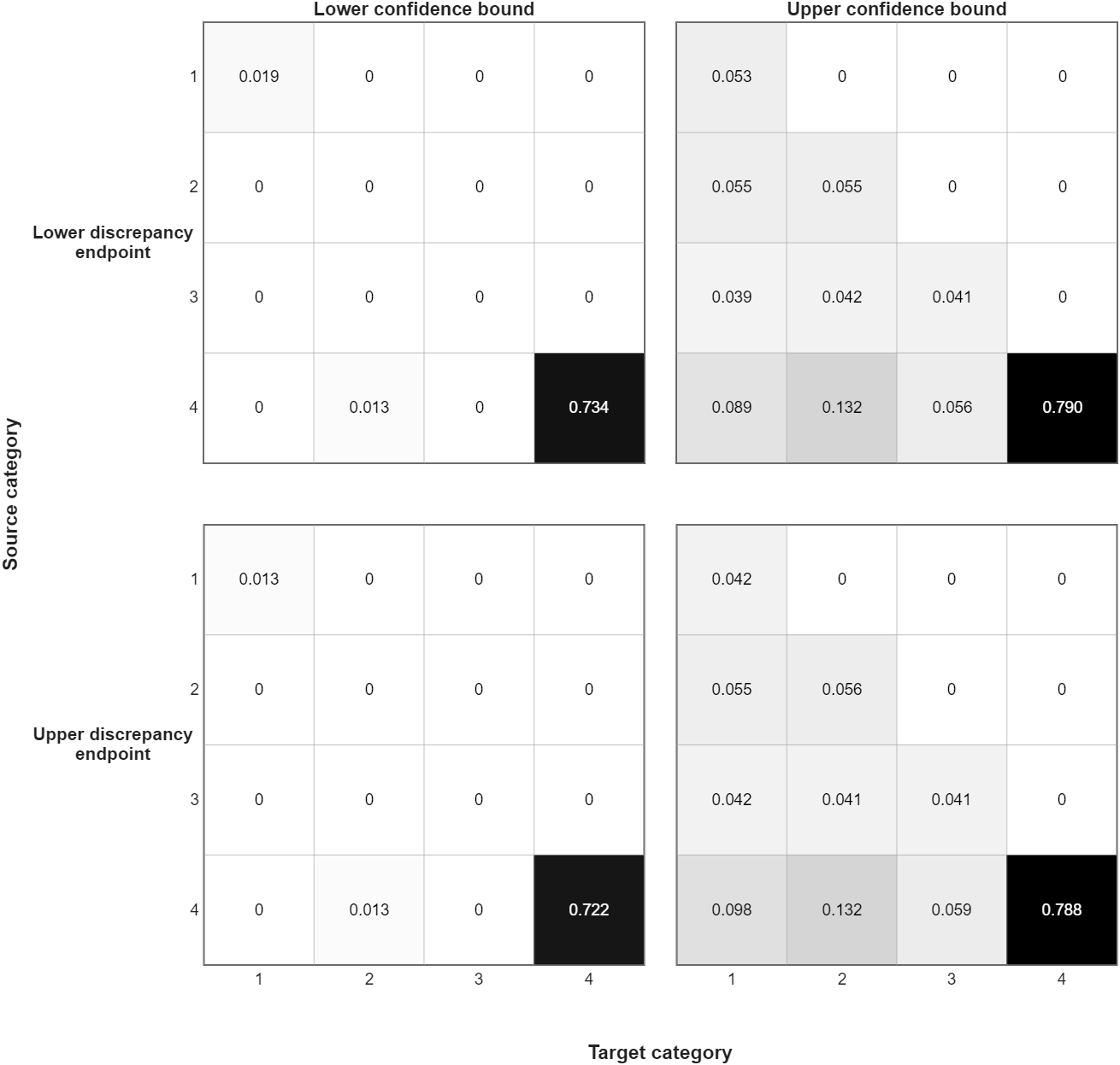}

\vspace{0.35cm}

\begin{minipage}{0.72\textwidth}
\centering
\textbf{Endpoint-conditioned benchmark directions}\\[0.15cm]

{\renewcommand{\arraystretch}{1.30}
\begin{tabular}{c|cccc}
 & \multicolumn{4}{c}{target rank $j$}\\
source $i$ & 1 & 2 & 3 & 4\\
\hline
1 & $\square$ & $\times$ & $\times$ & $\times$\\
2 & $\circ$ & $\square$ & $\times$ & $\times$\\
3 & $\circ$ & $\circ$ & $\square$ & $\times$\\
4 & $\circ$ & $\mathbf{\star}$ & $\circ$ & $\square$\\
\end{tabular}
\qquad
\begin{tabular}{cl}
$\times$ & excluded\\
$\circ$ & permitted but not required\\
$\mathbf{\star}$ & required\\
$\square$ & diagonal mass may remain
\end{tabular}
}
\end{minipage}

\caption{Cellwise confidence bounds and qualitative directional summary for
endpoint-conditioned conservative transition benchmarks. Rows denote Wave 4
source categories and columns denote Wave 7 target categories.}
\label{fig:benchmark_bounds}

\end{figure}

The symbols summarize the off-diagonal implications of the projected bounds;
the squares indicate only that positive diagonal benchmark mass can remain.
All projected upper bounds above the main diagonal are zero, so no
endpoint-conditioned least-displacement benchmark reallocates mass toward
less frequent receipt. By contrast, every off-diagonal cell below the diagonal
has a positive projected upper bound. Minimum restructuring can therefore be
organized through several movements toward more frequent receipt: from a few
times a year to monthly receipt, from annual receipt to either recurrent
category, and from nonreceipt to monthly, annual, or a few-times-a-year
receipt.

The upper bounds also show that this flexibility is quantitatively important.
Among reallocations that are permitted but not required, benchmark mass from
nonreceipt to monthly receipt can reach approximately 8.9 percentage points
at the lower discrepancy endpoint and 9.8 percentage points at the upper
endpoint; the corresponding bounds for reassignment from nonreceipt to annual
receipt are approximately 6.6 and 5.9 percentage points. Movements from receipt
a few times a year to monthly receipt, and from annual receipt to either
recurrent category, can each account for roughly 4--6 percentage points.
Because these cellwise extrema are attained separately, they should not be
added or interpreted as entries of a single benchmark transition matrix.

Cell $(4,2)$ is distinguished by its positive lower confidence bound. It
represents benchmark reassignment from nonreceipt in Wave 4 to receipt a few
times a year in Wave 7, crossing both the receipt/nonreceipt threshold and the
threshold separating annual from more recurrent receipt. Its confidence set is
\(
[\,0.0133,\;0.1322\,]
\)
at the lower discrepancy endpoint and
\(
[\,0.0130,\;0.1318\,]
\)
at the upper endpoint. The positive computed lower bounds provide strong
numerical evidence that this reassignment is required across both
endpoint-conditioned benchmark families, although its contribution to minimum
restructuring remains substantially undetermined.

The fourth row sharpens the distinction between necessity and possibility.
Reassignment from nonreceipt need not occur directly into either monthly or
annual receipt, since the projected lower bounds for cells $(4,1)$ and $(4,3)$
are zero, although both movements are feasible in some least-displacement
reconstructions. By contrast, movement from nonreceipt into receipt a few times
a year is required, while substantial benchmark mass can also remain in cell
$(4,4)$.

The qualitative pattern is stable across the two discrepancy endpoints despite
their allowing different allocations of item nonresponse and potentially
different optimal transport plans. Movements toward less frequent receipt are
excluded, several reallocations toward more frequent receipt remain feasible,
and cell $(4,2)$ retains a positive projected lower bound. Thus the detailed
allocation of benchmark mass remains nonunique, but its direction is sharply
restricted.

Taken together, the benchmark analysis distinguishes reallocations that are
required in every least-displacement reconstruction, those that are permitted
but not individually necessary, and those that are excluded. In the Lebanon
application, repeated cross-sections permit several ways of organizing the
expansion toward more frequent receipt, but rule out least-displacement
restructuring in the opposite direction and require some reassignment from
nonreceipt to recurrent receipt.

The application therefore illustrates the progression of information provided
by the framework. The plug-in CDF bounds provide graphical evidence on the direction
of distributional change; the ordinal discrepancy measures the least aggregate
restructuring it requires; and the endpoint-conditioned benchmarks describe
how that minimum can be organized and which features are unavoidable. These
objects do not identify actual household transitions, but characterize what the
repeated cross-sectional distributions themselves imply.

\section{Conclusion}
\label{Section Conclusion}

This paper develops a framework for learning the minimum ordinal restructuring
required by changes in repeated cross-sectional distributions when individual
transitions are unobserved. An axiomatic characterization yields a canonical
ordinal discrepancy whose optimal-transport representation measures the
minimum average number of thresholds crossed, with optimizing plans providing
conservative transition benchmarks. With missing outcomes, the discrepancy
and endpoint-conditioned benchmark plans are partially identified; I derive
their sharp identified sets and finite-sample-valid projection-based inference.

Applied to remittance receipt in Lebanon, the framework documents a shift
toward broader and more regular support. The computed confidence interval for
the discrepancy remains well separated from zero after accounting for missing
outcomes and sampling uncertainty, providing strong numerical evidence of
positive minimum ordinal restructuring. The conservative benchmark bounds
further provide strong numerical evidence that restructuring toward less
frequent receipt is excluded and that some reassignment from nonreceipt to
recurrent receipt is required, while allowing several alternative ways of
organizing the minimum change. They do not recover actual household
transitions, but instead characterize what restructuring the observed marginal
changes require and how that restructuring can be organized across categories.

\bibliographystyle{chicago}
\bibliography{mcgilletd}

\begin{thebibliography}{}

\bibitem[\protect\citeauthoryear{Aliprantis and Border}{Aliprantis and
  Border}{2006}]{AliprantisBorder2006}
Aliprantis, C.~D. and K.~C. Border (2006).
\newblock {\em Infinite Dimensional Analysis: A Hitchhiker's Guide\/} (2nd
  ed.).
\newblock Berlin: Springer.

\bibitem[\protect\citeauthoryear{Blair and Lacy}{Blair and
  Lacy}{1996}]{BlairLacy1996}
Blair, J. and M.~G. Lacy (1996).
\newblock Measures of variation for ordinal data as functions of the cumulative
  distribution.
\newblock {\em Perceptual and Motor Skills\/}~{\em 82\/}(2), 411--418.

\bibitem[\protect\citeauthoryear{Blair and Lacy}{Blair and
  Lacy}{2000}]{BlairLacy2000}
Blair, J. and M.~G. Lacy (2000).
\newblock Statistics of ordinal variation.
\newblock {\em Sociological Methods \& Research\/}~{\em 28\/}(3), 251--280.

\bibitem[\protect\citeauthoryear{Cowell}{Cowell}{1985}]{Cowell1985}
Cowell, F.~A. (1985).
\newblock Measures of distributional change: An axiomatic approach.
\newblock {\em The Review of Economic Studies\/}~{\em 52\/}(1), 135--151.

\bibitem[\protect\citeauthoryear{Cowell and Flachaire}{Cowell and
  Flachaire}{2018}]{Cowell_Flachaire_2018}
Cowell, F.~A. and E.~Flachaire (2018).
\newblock Measuring mobility.
\newblock {\em Quantitative Economics\/}~{\em 9\/}(2), 865--901.

\bibitem[\protect\citeauthoryear{Cuhadaroglu}{Cuhadaroglu}{2023}]{Cuhadaroglu2023}
Cuhadaroglu, T. (2023).
\newblock Evaluating ordinal inequalities between groups.
\newblock {\em The Journal of Economic Inequality\/}~{\em 21\/}(1), 219--231.

\bibitem[\protect\citeauthoryear{Dall'Aglio}{Dall'Aglio}{1956}]{DallAglio1956}
Dall'Aglio, G. (1956).
\newblock Sugli estremi dei momenti delle funzioni di ripartizione doppia.
\newblock {\em Annali della Scuola Normale Superiore di Pisa, Classe di
  Scienze\/}~{\em 10\/}(1--2), 35--74.

\bibitem[\protect\citeauthoryear{Dang and Lanjouw}{Dang and
  Lanjouw}{2023}]{DangLanjouw2023}
Dang, H.-A.~H. and P.~F. Lanjouw (2023).
\newblock Measuring poverty dynamics with synthetic panels based on
  cross-sections.
\newblock {\em World Bank Economic Review\/}~{\em 37\/}(1), 1--26.

\bibitem[\protect\citeauthoryear{Dang, Lanjouw, Luoto, and McKenzie}{Dang
  et~al.}{2014}]{DangEtAl2014}
Dang, H.-A.~H., P.~F. Lanjouw, J.~Luoto, and D.~McKenzie (2014).
\newblock Using repeated cross-sections to explore movements into and out of
  poverty.
\newblock {\em Journal of Development Economics\/}~{\em 107}, 112--128.

\bibitem[\protect\citeauthoryear{Deaton}{Deaton}{1985}]{Deaton1985}
Deaton, A. (1985).
\newblock Panel data from time series of cross-sections.
\newblock {\em Journal of Econometrics\/}~{\em 30\/}(1-2), 109--126.

\bibitem[\protect\citeauthoryear{Dufour}{Dufour}{2006}]{DUFOUR2006}
Dufour, J.-M. (2006).
\newblock Monte {C}arlo tests with nuisance parameters: A general approach to
  finite-sample inference and nonstandard asymptotics.
\newblock {\em Journal of Econometrics\/}~{\em 133\/}(2), 443--477.

\bibitem[\protect\citeauthoryear{Fakih, Makdissi, Marrouch, Tabri, and
  Yazbeck}{Fakih et~al.}{2022}]{Tabri2021}
Fakih, A., P.~Makdissi, W.~Marrouch, R.~V. Tabri, and M.~Yazbeck (2022).
\newblock A stochastic dominance test under survey nonresponse with an
  application to comparing trust levels in {L}ebanese public institutions.
\newblock {\em Journal of Econometrics\/}~{\em 228\/}(2), 342--358.

\bibitem[\protect\citeauthoryear{Firpo, Fortin, and Lemieux}{Firpo
  et~al.}{2018}]{FirpoFortinLemieux2018}
Firpo, S.~P., N.~M. Fortin, and T.~Lemieux (2018).
\newblock Decomposing wage distributions using recentered influence function
  regressions.
\newblock {\em Econometrics\/}~{\em 6\/}(2), 28.

\bibitem[\protect\citeauthoryear{Fr{\'e}chet}{Fr{\'e}chet}{1935}]{Frechet1935}
Fr{\'e}chet, M. (1935).
\newblock G{\'e}n{\'e}ralisations du th{\'e}or{\`e}me des probabilit{\'e}s
  totales.
\newblock {\em Fundamenta Mathematicae\/}~{\em 25}, 379--387.

\bibitem[\protect\citeauthoryear{Fr{\'e}chet}{Fr{\'e}chet}{1951}]{Frechet1951}
Fr{\'e}chet, M. (1951).
\newblock Sur les tableaux de corr{\'e}lation dont les marges sont donn{\'e}es.
\newblock {\em Annales de l'Universit{\'e} de Lyon. Section A: Sciences
  math{\'e}matiques et astronomie\/}~{\em 9}, 53--77.

\bibitem[\protect\citeauthoryear{Galichon and Henry}{Galichon and
  Henry}{2026}]{GalichonHenry2026}
Galichon, A. and M.~Henry (2026).
\newblock An econometrician's guide to optimal transport.
\newblock Technical Report arXiv:2604.04227, arXiv.
\newblock Working paper.

\bibitem[\protect\citeauthoryear{Gini}{Gini}{1914}]{Gini1914}
Gini, C. (1914).
\newblock Di una misura della dissomiglianza tra due gruppi di quantit{\`a} e
  delle sue applicazioni allo studio delle relazioni statistiche.
\newblock {\em Atti del Reale Istituto Veneto di Scienze, Lettere ed
  Arti\/}~{\em 74\/}(2), 185--213.

\bibitem[\protect\citeauthoryear{H{\'e}rault and Jenkins}{H{\'e}rault and
  Jenkins}{2019}]{HeraultJenkins2019}
H{\'e}rault, N. and S.~P. Jenkins (2019).
\newblock How valid are synthetic panel estimates of poverty dynamics?
\newblock {\em Journal of Economic Inequality\/}~{\em 17\/}(1), 51--76.

\bibitem[\protect\citeauthoryear{Jenkins}{Jenkins}{2020}]{Jenkins}
Jenkins, S.~P. (2020, March).
\newblock Comparing distributions of ordinal data.
\newblock IZA Discussion Paper No.13057.

\bibitem[\protect\citeauthoryear{K{\k e}ska}{K{\k e}ska}{2017}]{Keska2017}
K{\k e}ska, A. (2017).
\newblock Axiomatic determination of a class of ordinal variation measures.
\newblock {\em Studies in Logic, Grammar and Rhetoric\/}~{\em 50\/}(1), 45--65.

\bibitem[\protect\citeauthoryear{Manski}{Manski}{2005}]{manski2005}
Manski, C.~F. (2005).
\newblock Partial identification with missing data: concepts and findings.
\newblock {\em International Journal of Approximate Reasoning\/}~{\em
  39\/}(2-3), 151--165.

\bibitem[\protect\citeauthoryear{Melleray, Petrov, and Vershik}{Melleray
  et~al.}{2008}]{MellerayPetrovVershik2008}
Melleray, J., F.~V. Petrov, and A.~M. Vershik (2008).
\newblock Linearly rigid metric spaces and the embedding problem.
\newblock {\em Fundamenta Mathematicae\/}~{\em 199\/}(2), 177--194.

\bibitem[\protect\citeauthoryear{Rachev, Klebanov, Stoyanov, and
  Fabozzi}{Rachev et~al.}{2013}]{RachevKlebanovStoyanovFabozzi2013}
Rachev, S.~T., L.~B. Klebanov, S.~V. Stoyanov, and F.~J. Fabozzi (2013).
\newblock {\em The Methods of Distances in the Theory of Probability and
  Statistics}.
\newblock New York: Springer.

\bibitem[\protect\citeauthoryear{Ray and Genicot}{Ray and
  Genicot}{2023}]{RayGenicot2023}
Ray, D. and G.~Genicot (2023).
\newblock Measuring upward mobility.
\newblock {\em American Economic Review\/}~{\em 113\/}(11), 3044--3089.

\bibitem[\protect\citeauthoryear{R{\"u}schendorf}{R{\"u}schendorf}{1991}]{Ruschendorf1991}
R{\"u}schendorf, L. (1991).
\newblock Fr{\'e}chet bounds and their applications.
\newblock In G.~Dall'Aglio, S.~Kotz, and G.~Salinetti (Eds.), {\em Advances in
  Probability Distributions with Given Marginals}, pp.\  151--187. Dordrecht:
  Kluwer Academic Publishers.

\bibitem[\protect\citeauthoryear{Salvemini}{Salvemini}{1943}]{Salvemini1943}
Salvemini, T. (1943).
\newblock Sul calcolo degli indici di concordanza tra due caratteri
  quantitativi.
\newblock In {\em Atti della VI Riunione della Societ{\`a} Italiana di
  Statistica}, Roma.

\bibitem[\protect\citeauthoryear{{United Nations Development
  Programme}}{{United Nations Development
  Programme}}{2023}]{UNDP2023Remittances}
{United Nations Development Programme} (2023, May).
\newblock The increasing role and importance of remittances in {L}ebanon.
\newblock Technical report, United Nations Development Programme, {L}ebanon.

\bibitem[\protect\citeauthoryear{Vallender}{Vallender}{1974}]{Vallender}
Vallender, S.~S. (1974).
\newblock Calculation of the {W}asserstein distance between probability
  distributions on the line.
\newblock {\em Theory of Probability \& Its Applications\/}~{\em 18\/}(4),
  784--786.

\bibitem[\protect\citeauthoryear{Villani}{Villani}{2009}]{villani2009optimal}
Villani, C. (2009).
\newblock {\em Optimal Transport: Old and New}.
\newblock Grundlehren der mathematischen Wissenschaften. Springer Berlin
  Heidelberg.

\bibitem[\protect\citeauthoryear{Zolotarev}{Zolotarev}{1984}]{Zolotarev1983}
Zolotarev, V.~M. (1984).
\newblock Probability metrics.
\newblock {\em Theory of Probability and Its Applications\/}~{\em 28\/}(1),
  278--302.

\end{thebibliography}

\appendix
\section{Proofs of Results}

\subsection{Proposition~\ref{prop:threshold_metric}}\label{Proof - prop -- threshold metric}
\begin{proof}
Suppose first that $d$ is threshold additive. For $i<j$, repeated
application of threshold additivity gives
\(
d(\xi_i,\xi_j)
=
\sum_{k=i}^{j-1}d(\xi_k,\xi_{k+1})
=
\sum_{k=i}^{j-1}w_k.
\)
Symmetry gives the stated representation for $i>j$.
Conversely, if the representation~(\ref{eq - Threshhold add metric}) holds, then for
$i<j<\ell$,
\(
d(\xi_i,\xi_\ell)
=
\sum_{k=i}^{\ell-1}w_k
=
d(\xi_i,\xi_j)+d(\xi_j,\xi_\ell),
\)
so $d$ is threshold additive.
\end{proof}

\subsection{Theorem~\ref{thm:ordinal_metric}}\label{Proof - Thm - ordinal-metric charac}
\begin{proof}
Recall that every $\sigma\in\mathcal M_0^K$ admits the decomposition
\(
\sigma=\sum_{k=1}^{K-1} C_k(\sigma)b_k,
\)
where $\mathcal{T}(b_k)=\{k\}$. Hence the nonzero summands
$C_k(\sigma)b_k$ have pairwise disjoint threshold supports. Repeated
application of threshold separability therefore gives the form
\(
N(\sigma)
=
\sum_{k=1}^{K-1}N\!\left(C_k(\sigma)b_k\right).
\)
By absolute homogeneity of the norm and local ground-metric consistency,
\(
N\!\left(C_k(\sigma)b_k\right)
=
|C_k(\sigma)|N(b_k)
=
w_k|C_k(\sigma)|.
\)
Thus
\(
N(\sigma)
=
\sum_{k=1}^{K-1}w_k|C_k(\sigma)|.
\)
Because
\(
C_k(\mu-\nu)=F_\mu(k)-F_\nu(k),
\)
the stated expression for $\Delta_N$ follows.

For $i<j$,
\(
C_k(\delta_i-\delta_j)
=
\mathbf 1\{i\le k<j\}.
\)
Hence
\(
\Delta_N(\delta_i,\delta_j)
=
\sum_{k=i}^{j-1}w_k
=
d(\xi_i,\xi_j),
\)
where the final equality follows from Proposition~\ref{prop:threshold_metric}. Symmetry gives
the result for $j<i$, while the case $i=j$ is immediate.

For the converse, consider the probability metric
\(
\Delta(\mu,\nu)
=
\sum_{k=1}^{K-1}w_k|F_\mu(k)-F_\nu(k)|.
\)
Since
\(
F_\mu(k)-F_\nu(k)=C_k(\mu-\nu)\;\forall k,
\)
it can be written as
\(
\Delta(\mu,\nu)
=
\sum_{k=1}^{K-1}w_k|C_k(\mu-\nu)|.
\)
Now define the cumulative-coordinate map
\(
C:\mathcal M_0^K\to\mathbb R^{K-1},
\;\text{with}\;
C(\sigma):=
\bigl(C_1(\sigma),\ldots,C_{K-1}(\sigma)\bigr).
\)
By the adjacent-threshold basis representation of elements of
$\mathcal M_0^K$, $C$ is a linear bijection. Define
\(
N(\sigma)
:=
\sum_{k=1}^{K-1}w_k|C_k(\sigma)|,
\;\text{for }
\sigma\in\mathcal M_0^K.
\)
Because $C$ is linear and injective and $w_k>0$ for every $k$,
$N(\sigma)=0$ if and only if $\sigma=0$, while absolute homogeneity
and the triangle inequality follow from the corresponding properties
of the weighted $\ell_1$ norm on $\mathbb R^{K-1}$. Thus $N$ is a norm
on $\mathcal M_0^K$.

Moreover,
\(
N(\mu-\nu)=\Delta(\mu,\nu),
\)
so $\Delta$ is induced by $N$. Its weighted $\ell_1$ form implies
threshold separability, and
\(
N(b_k)=w_k,
\; k=1,\ldots,K-1,
\)
so $N$ is locally consistent with $d$.
\end{proof}

\subsection{Proposition~\ref{prop:OT_representation}}\label{Proof - Prop}
\begin{proof}
The result follows from the one-dimensional characterization of the Wasserstein--$1$ distance as the $L_1$ distance between cumulative distribution functions described in~\cite{Vallender}. For any two probability measures $P$ and $Q$ on $\mathbb{R}$ with finite first moments, 
$$W_1(P,Q)\coloneqq\min_{\pi\in\Pi(P,Q)}\int_{\mathbb{R}^2}|x-y|\,d\pi(x,y)$$ is the Wasserstein--$1$ distance between them, where $\Pi(P,Q)$ is the set of all joint probability measures with marginals $P$ and $Q$. Vallender's result is that $W_1(P,Q)=\int_{\mathbb{R}} |F(x)-G(x)|\,dx$, where $F$ and $G$ are the cumulative distribution functions of $P$ and $Q$, respectively.
 
Now I specialize this result to my setting: $P=\mu$ and $Q=\nu$ have common finite support on $\{1,\dots,K\}$, yielding $W_1(\mu,\nu)=\min_{\pi \in \Pi(\mu,\nu)}
\sum_{i=1}^K \sum_{j=1}^K |i-j| \, \pi_{ij}$, where $\Pi(\mu,\nu)$ is now the set of joint probability distributions on $\{1,\dots,K\}^2$. Furthermore, using the specialization $F=F_\mu$ and $G=F_\nu$, and noting that these cumulative distribution functions are step functions with jumps at the integer support points, we have
\[
W_1(\mu,\nu)
=
\int_{\mathbb R}\left|F_\mu(x)-F_\nu(x)\right|\,dx
=
\sum_{k=1}^{K-1}
\int_k^{k+1}
\left|F_\mu(x)-F_\nu(x)\right|\,dx.
\]
Indeed, for $x<1$, both cumulative distribution functions are zero, while for $x\geq K$, both are equal to one. Moreover, for each $k=1,\ldots,K-1$, the cumulative distribution functions are constant on $[k,k+1)$, with
\(
F_\mu(x)=F_\mu(k)
\)
and
\(
F_\nu(x)=F_\nu(k).
\)
Hence,
\(
\int_k^{k+1}
\left|F_\mu(x)-F_\nu(x)\right|\,dx
=
\left|F_\mu(k)-F_\nu(k)\right|,
\)
and therefore the Wasserstein--1 distance is given by
\(
W_1(\mu,\nu)
=
\sum_{k=1}^{K-1}
\left|F_\mu(k)-F_\nu(k)\right|
=
D(\mu,\nu),
\)
which proves the result.
\end{proof}

\subsection{Theorem~\ref{thm:discrepancy_bounds}}\label{Proof - Thm PI}

\begin{proof}
Since $\mathcal M_\mu\times\mathcal M_\nu$ is the sharp joint
identified set for $(\mu,\nu)$, the sharp identified set for
$D(\mu,\nu)$ is
\(
\mathcal D
=
\left\{
D(\gamma,\eta):
(\gamma,\eta)\in
\mathcal M_\mu\times\mathcal M_\nu
\right\}.
\)
The sets $\mathcal M_\mu$ and $\mathcal M_\nu$ are nonempty compact convex polytopes. Their product $\mathcal M_\mu\times\mathcal M_\nu$ is therefore nonempty, compact, and connected. Since $(\gamma,\eta)\mapsto D(\gamma,\eta)$ is continuous, its image $\mathcal D$ is a compact and connected subset of $\mathbb R$. Consequently, $\mathcal D$ is a closed interval. Compactness and continuity also imply that its endpoints are attained, so $\mathcal D=[\underline D,\overline D]$, with $\underline D=\min_{\gamma\in\mathcal M_\mu,\ \eta\in\mathcal M_\nu}D(\gamma,\eta)$ and $\overline D=\max_{\gamma\in\mathcal M_\mu,\ \eta\in\mathcal M_\nu}D(\gamma,\eta)$.

To obtain the representation of the lower endpoint, fix $(\gamma,\eta)\in\mathcal M_\mu\times\mathcal M_\nu$ and observe that for each $k$,
\(
|C_k(\gamma-\eta)|
=
\min_{t_k\in\mathbb R}
\left\{
t_k:
-t_k
\leq
C_k(\gamma-\eta)
\leq
t_k
\right\}.
\)
Applying this representation simultaneously over
$k=1,\ldots,K-1$ and minimizing over
$(\gamma,\eta)\in\mathcal M_\mu\times\mathcal M_\nu$
gives the stated linear program. Since each
$C_k(\gamma-\eta)$ is linear in $(\gamma,\eta)$ and the sets
$\mathcal M_\mu$ and $\mathcal M_\nu$ are polytopes, this is a
finite-dimensional linear program.

For the upper endpoint, use the finite sign representation of the $\ell_1$ norm. For every $z=(z_1,\ldots,z_{K-1})\in\mathbb R^{K-1}$,
\(
\sum_{k=1}^{K-1}|z_k|
=
\max_{s\in\mathcal S_K}
\sum_{k=1}^{K-1}s_k z_k,
\;\text{where}\;
\mathcal S_K=\{-1,1\}^{K-1}.
\)
Applying this identity with $z_k=C_k(\gamma-\eta)$ yields
\begin{align*}
\overline D
&=
\max_{\gamma\in\mathcal M_\mu,\ \eta\in\mathcal M_\nu}
\max_{s\in\mathcal S_K}
\sum_{k=1}^{K-1}
s_k\,C_k(\gamma-\eta)
=
\max_{s\in\mathcal S_K}
\max_{\gamma\in\mathcal M_\mu,\ \eta\in\mathcal M_\nu}
\sum_{k=1}^{K-1}
s_k\,C_k(\gamma-\eta)
\nonumber\
=
\max_{s\in\mathcal S_K}\overline D_s.
\end{align*}
The interchange of the two maxima is immediate because $\mathcal S_K$ is finite. For each fixed $s\in\mathcal S_K$, the objective function is linear in $(\gamma,\eta)$ and the feasible set $\mathcal M_\mu\times\mathcal M_\nu$ is a polytope. Therefore, each $\overline D_s$ is the optimal value of a linear program. Since $\mathcal S_K$ contains $2^{K-1}$ sign vectors, $\overline D$ is obtained as the maximum of the optimal values of $2^{K-1}$ linear programs.
\end{proof}

\subsection{Theorem~\ref{thm - PI coupling endpoint bounds}}\label{Proof -Thm 2}
\begin{proof}
For $e\in\{L,U\}$, write
\(
D_L:=\underline D,
\;
D_U:=\overline D,
\)
and define the auxiliary set
\(
\mathcal C_e
:=
\left\{
(\pi,\gamma,\eta):
\gamma\in\mathcal M_\mu,\;
\eta\in\mathcal M_\nu,\;
D(\gamma,\eta)=D_e,\;
\pi\in\Pi(\gamma,\eta),\;
c(\pi)=D_e
\right\},
\)
where
\(
c(\pi)
:=
\sum_{r=1}^K\sum_{s=1}^K |r-s|\pi_{rs}.
\)

By Theorem~\ref{thm:discrepancy_bounds}, for each $e\in\{L,U\}$ there exists
$(\gamma^e,\eta^e)\in\mathcal M_\mu\times\mathcal M_\nu$
such that
\(
D(\gamma^e,\eta^e)=D_e.
\)
The transport polytope $\Pi(\gamma^e,\eta^e)$ is nonempty and compact,
and $c$ is continuous. Hence there exists
$\pi^e\in\Pi(\gamma^e,\eta^e)$ attaining its minimum. By
Proposition~\ref{prop:OT_representation},
\(
c(\pi^e)
=
D(\gamma^e,\eta^e)
=
D_e,
\)
so $(\pi^e,\gamma^e,\eta^e)\in\mathcal C_e$. Thus
$\mathcal C_e$ is nonempty.

We next show that $\mathcal C_e$ is compact. Since every transport
plan belongs to $\Delta^{K^2}$,
\(
\mathcal C_e
\subseteq
\Delta^{K^2}\times\mathcal M_\mu\times\mathcal M_\nu,
\)
and the latter set is compact. Let
\(
(\pi^m,\gamma^m,\eta^m)\in\mathcal C_e,
\;
(\pi^m,\gamma^m,\eta^m)
\to
(\pi,\gamma,\eta).
\)
Because $\mathcal M_\mu$ and $\mathcal M_\nu$ are closed,
$\gamma\in\mathcal M_\mu$ and $\eta\in\mathcal M_\nu$.
Passing to the limit in the row- and column-sum equalities gives
$\pi\in\Pi(\gamma,\eta)$. Continuity of $D$ and $c$ gives
\(
D(\gamma,\eta)
=
\lim_{m\to\infty}D(\gamma^m,\eta^m)
=
D_e
\)
and
\(
c(\pi)
=
\lim_{m\to\infty}c(\pi^m)
=
D_e.
\)
Hence $(\pi,\gamma,\eta)\in\mathcal C_e$, so
$\mathcal C_e$ is closed and therefore compact.

By Proposition~\ref{prop:OT_representation} and the definition of $\Pi_e^\ast$,
\(
\Pi_e^\ast
=
\left\{
\pi:
(\pi,\gamma,\eta)\in\mathcal C_e
\text{ for some }(\gamma,\eta)
\right\}.
\)
Thus $\Pi_e^\ast$ is the projection of the nonempty compact set
$\mathcal C_e$ onto the $\pi$-coordinate. Since coordinate projection
is continuous, $\Pi_e^\ast$ is nonempty and compact.

Finally, for every $(i,j)$ the coordinate map
$\pi\mapsto\pi_{ij}$ is continuous. It therefore attains its minimum
and maximum on each compact set $\Pi_e^\ast$. Hence
\(
\underline{\pi}^{\,e}_{ij}
=
\min_{\pi\in\Pi_e^\ast}\pi_{ij},
\;
\overline{\pi}^{\,e}_{ij}
=
\max_{\pi\in\Pi_e^\ast}\pi_{ij},
\)
which proves the result.
\end{proof}

\subsection{Proposition~\ref{Prop-ELR}}\label{Proof - Prop ELR}
\begin{proof}
By independence of the two repeated cross-sections, the joint empirical
likelihood factorizes as
\(
L(\vartheta)
=
L_X(\mathbf q_X)L_Y(\mathbf q_Y),
\)
where $L_X(\mathbf q_X)=\prod_{j=0}^{K}q_{Xj}^{N_{Xj}}$ and $L_Y(\mathbf q_Y)=\prod_{j=0}^{K}q_{Yj}^{N_{Yj}}$. The unrestricted maximum empirical likelihood estimators are the empirical
cell-probability vectors
\(
\widehat{\mathbf q}_X
=
\left(
\frac{N_{X0}}{n_X},\ldots,\frac{N_{XK}}{n_X}
\right)\;\text{and}\; \widehat{\mathbf q}_Y
=
\left(
\frac{N_{Y0}}{n_Y},\ldots,\frac{N_{YK}}{n_Y}
\right).
\)
Consequently, $R_n(\vartheta)=\frac{L_X(\mathbf q_X)}{L_X(\widehat{\mathbf q}_X)}\frac{L_Y(\mathbf q_Y)}{L_Y(\widehat{\mathbf q}_Y)}$. Taking the logarithm of $R_n(\vartheta)$ yields
$
\log R_n(\vartheta)
=
\sum_{j=0}^{K}
N_{Xj}
\log\left(
\frac{q_{Xj}}{\widehat q_{Xj}}
\right)
+
\sum_{j=0}^{K}
N_{Yj}
\log\left(
\frac{q_{Yj}}{\widehat q_{Yj}}
\right).
$
Since $N_{Xj}=n_X\widehat q_{Xj}$ and $N_{Yj}=n_Y\widehat q_{Yj}$, it follows that
\begin{align*}
-2\log R_n(\vartheta)
&=
2n_X
\sum_{j=0}^{K}
\widehat q_{Xj}
\log\left(
\frac{\widehat q_{Xj}}{q_{Xj}}
\right)
+2n_Y
\sum_{j=0}^{K}
\widehat q_{Yj}
\log\left(
\frac{\widehat q_{Yj}}{q_{Yj}}
\right)
\\
&=
2n_X
D_{\mathrm{KL}}
\left(
\widehat{\mathbf q}_X\,\Vert\,\mathbf q_X
\right)
+
2n_Y
D_{\mathrm{KL}}
\left(
\widehat{\mathbf q}_Y\,\Vert\,\mathbf q_Y
\right).
\end{align*}
The stated convention handles cells for which the empirical frequency is
zero.
\end{proof}


\subsection{Proposition~\ref{prop:persistent-global-convergence}}
\label{proof:global-search-proposition}
\begin{proof}
Fix a realization of $\mathcal G_{n,B}$ satisfying Assumption~2.
Consider first a minimization problem and let
\(
h^\star
:=
\inf_{\vartheta\in\mathcal C^{MC}_{n,B}} h(\vartheta).
\)
The reported objectives are bounded, so $h^\star$ is finite. The
best feasible value $\widehat h_N$ is nonincreasing in $N$ and satisfies
\(
\widehat h_N\ge h^\star
\;\text{for every }N.
\)

For each $r\in\mathbb N$, let
\(
A_{1/r}(h)
=
\left\{
\vartheta\in\mathcal C^{MC}_{n,B}:
h(\vartheta)<h^\star+\frac1r
\right\}.
\)
By Assumption~2, observe that
$\Gamma(A_{1/r}(h))>0$, and Lemma~\ref{lem:persistent-hit} implies
\(
\mathbb P_{\mathrm{alg}}
\left(
\vartheta_m^{\mathrm{prop}}\in A_{1/r}(h)
\ \text{infinitely often}
\right)
=1.
\)
Let $E_r$ denote this probability-one event and set
\(
E:=\bigcap_{r=1}^{\infty}E_r.
\)
Then $\mathbb P_{\mathrm{alg}}(E)=1$.

On $E$, for every $r$ there exists a finite $N_r$ such that a feasible
candidate in $A_{1/r}(h)$ has been encountered by proposal $N_r$.
Hence, for every $N\ge N_r$,
\(
h^\star
\le
\widehat h_N
<
h^\star+\frac1r.
\)
Since this holds for every $r\in\mathbb N$,
\(
\widehat h_N\longrightarrow h^\star
\)
on $E$. Therefore
\(
\mathbb P_{\mathrm{alg}}
\left(
\lim_{N\to\infty}\widehat h_N
=
\inf_{\vartheta\in\mathcal C^{MC}_{n,B}}h(\vartheta)
\right)
=1.
\)

For a maximization problem, let
\(
h^\star
:=
\sup_{\vartheta\in\mathcal C^{MC}_{n,B}} h(\vartheta).
\)
Then $\widehat h_N$ is nondecreasing and bounded above by $h^\star$.
Now applying the same argument to the event
\(
A_{1/r}(h)
=
\left\{
\vartheta\in\mathcal C^{MC}_{n,B}:
h(\vartheta)>h^\star-\frac1r
\right\}
\)
gives
\(
\mathbb P_{\mathrm{alg}}
\left(
\lim_{N\to\infty}\widehat h_N
=
\sup_{\vartheta\in\mathcal C^{MC}_{n,B}}h(\vartheta)
\right)
=1.
\)
\end{proof}

\section{Regularity of the Optimization Correspondences}
\label{appendix:projection-measurability}

This section collects the measurability results used in
Section~\ref{Section Inference}. The projection arguments themselves
are deterministic: coverage of the primitive observable-data parameter
implies coverage of the corresponding optimization values and solution
sets. The results below ensure that the value functions and
correspondences appearing in those probability statements are
measurable.

\subsection{A general measurable optimization result}

Let $\Theta$ be equipped with its Borel $\sigma$-field, let $E$ be a
separable metric space, and let  
\(
\mathcal Z:\Theta\rightrightarrows E
\)
be a correspondence. Consider the parameterized optimization problem
\[
v(\vartheta)
=
\inf_{z\in\mathcal Z(\vartheta)}f(z),
\qquad
S(\vartheta)
=
\argmin_{z\in\mathcal Z(\vartheta)}f(z).
\]

\begin{proposition}
\label{prop:supp-measurable-maximum}
Suppose Assumption~\ref{ass:projection} holds. Then:
\begin{enumerate}[(i)]
\item
the value function $v:\Theta\rightarrow\mathbb R$ is measurable;
\item
the solution correspondence
$S:\Theta\rightrightarrows E$ is weakly measurable with nonempty
compact values.
\end{enumerate}
\end{proposition}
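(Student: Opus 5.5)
The plan is to invoke the Measurable Maximum Theorem \citep[Theorem~18.19]{AliprantisBorder2006}, after matching its hypotheses to Assumption~\ref{ass:projection}. That theorem is stated for a measurable space $(\Theta,\Sigma)$, a separable metrizable space $E$, a weakly measurable correspondence $\mathcal Z$ with nonempty compact values, and a Carath\'eodory objective. Here $\Sigma$ is the Borel $\sigma$-field on $\Theta$ and $E$ is separable metric. The objective $(\vartheta,z)\mapsto f(z)$ does not depend on $\vartheta$, so it is trivially measurable in $\vartheta$ for each $z$ and continuous in $z$ for each $\vartheta$; hence it is Carath\'eodory. The theorem is formulated for maxima, so I will apply it to $-f$, using $\inf f=-\sup(-f)$ and $\argmin f=\argmax(-f)$.

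For (i), the theorem gives measurability of $\vartheta\mapsto\max_{z\in\mathcal Z(\vartheta)}(-f(z))$. Negating yields measurability of $v$. The value $v(\vartheta)$ is finite and attained, because $f$ is continuous on the nonempty compact set $\mathcal Z(\vartheta)$.

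For (ii), the theorem states that the argmax correspondence has nonempty compact values and is measurable. The one point I expect to need care is the sense of ``measurable'' in that conclusion. In Aliprantis--Border the conclusion is measurability, meaning inverse images of closed sets are measurable. For compact-valued correspondences into a separable metrizable space, Lemma~18.2 there shows that measurability implies weak measurability, so the conclusion converts to the weak measurability required here.

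As a direct check of the values, $S(\vartheta)=\{z\in\mathcal Z(\vartheta):f(z)=v(\vartheta)\}$. This set is nonempty by the Weierstrass theorem, and it is closed in the compact set $\mathcal Z(\vartheta)$ by continuity of $f$, so it is compact.

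If a self-contained argument were preferred over the citation, I would proceed as follows. Since $E$ is separable, Castaing representation (Theorem~18.13 of Aliprantis--Border) gives measurable selectors $\{g_n\}$ of $\mathcal Z$ with $\overline{\{g_n(\vartheta)\}}=\mathcal Z(\vartheta)$. Then $v(\vartheta)=\inf_n f(g_n(\vartheta))$ by continuity of $f$, and this is a countable infimum of measurable functions, which settles (i). Part (ii) would then follow from writing $S(\vartheta)=\mathcal Z(\vartheta)\cap\{z:f(z)\le v(\vartheta)\}$ and using measurability of intersections of compact-valued measurable correspondences. That intersection step is the main technical obstacle, which is why I would rely on the cited theorem for it.
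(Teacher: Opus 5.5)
Your proposal is correct and follows essentially the same argument as the paper: both apply the Measurable Maximum Theorem (Aliprantis--Border, Theorem~18.19) to the Carath\'eodory function $(\vartheta,z)\mapsto -f(z)$ and then negate to recover $v$ and $S$. Your remark that the theorem concludes only that the argmax correspondence is measurable, and that this implies weak measurability when $E$ is metrizable, spells out a step the paper passes over.
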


\begin{proof}
Define
\(
g(\vartheta,z)=-f(z).
\)
For each $z\in E$, the mapping
$\vartheta\mapsto g(\vartheta,z)$ is measurable, since it is constant
in $\vartheta$, while for each $\vartheta\in\Theta$, the mapping
$z\mapsto g(\vartheta,z)$ is continuous. Hence $g$ is a
Carath\'eodory function.

By Assumption~\ref{ass:projection},
$\mathcal Z$ is weakly measurable with nonempty compact values.
The Measurable Maximum Theorem (\citealp{AliprantisBorder2006}, Theorem~18.19) therefore implies that
\(
\vartheta
\longmapsto
\max_{z\in\mathcal Z(\vartheta)}g(\vartheta,z)
\)
is measurable and that its maximizing correspondence is weakly
measurable with nonempty compact values. Since
\(
v(\vartheta)
=
-
\max_{z\in\mathcal Z(\vartheta)}g(\vartheta,z),
\)
and the maximizing correspondence for $g$ coincides with
$S(\vartheta)$, the result follows.
\end{proof}

\subsection{The latent-marginal correspondence}

Recall that an observable probability vector is
\(
\mathbf q
=
(q_0,q_1,\ldots,q_K)
\in\Delta^{K+1},
\)
where $q_0$ is the nonresponse probability. Define
\(
\mathbf q_+
:=
(q_1,\ldots,q_K).
\)
The identified set for the corresponding latent marginal distribution
is 
\(
\mathcal M(\mathbf q)
=
\left\{
\mu\in\Delta^K:
q_k\leq\mu_k\leq q_k+q_0,
\; k=1,\ldots,K
\right\}.
\)

The following alternative representation will be useful:
\[
\mathcal M(\mathbf q)
=
\mathbf q_+
+
q_0\Delta^K
=
\left\{
\mathbf q_+ + q_0 r:
r\in\Delta^K
\right\}.
\]
Indeed, if $\mu\in\mathcal M(\mathbf q)$ and $q_0>0$, then
\(
r_k
=
\frac{\mu_k-q_k}{q_0},
\; k=1,\ldots,K,
\)
defines an element of $\Delta^K$. When $q_0=0$,
$\mathcal M(\mathbf q)=\{\mathbf q_+\}$.

\begin{lemma}
\label{lem:latent-correspondence-continuous}
The correspondence
\(
\mathcal M:
\Delta^{K+1}
\rightrightarrows
\Delta^K
\)
is continuous and has nonempty compact values.
\end{lemma}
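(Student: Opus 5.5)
Continuity of $\mathcal M$ will follow from the affine representation $\mathcal M(\mathbf q)=\mathbf q_+ + q_0\Delta^K$ established just above. Nonemptiness and compactness are immediate. For any $\mathbf q\in\Delta^{K+1}$, the point $\mathbf q_+ + q_0\delta_1$ lies in $\mathcal M(\mathbf q)$, because its entries are nonnegative and sum to $\sum_{k\ge1}q_k+q_0=1$. The set $\mathcal M(\mathbf q)$ is the image of the compact simplex $\Delta^K$ under the continuous map $r\mapsto \mathbf q_++q_0r$, so it is compact.

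For continuity I will verify upper and lower hemicontinuity separately, using the sequential characterizations valid for compact-valued correspondences into a compact metric space (Aliprantis and Border, Theorems~17.16 and~17.21).

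\emph{Upper hemicontinuity.} Since the codomain $\Delta^K$ is compact, it suffices to show that the graph is closed. Take $\mathbf q^m\to\mathbf q$ and $\mu^m\in\mathcal M(\mathbf q^m)$ with $\mu^m\to\mu$. The defining inequalities $q^m_k\le\mu^m_k\le q^m_k+q^m_0$ and the constraint $\mu^m\in\Delta^K$ are preserved under limits. Hence $\mu\in\mathcal M(\mathbf q)$, using the inequality description of $\mathcal M(\mathbf q)$.

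\emph{Lower hemicontinuity.} Take $\mathbf q^m\to\mathbf q$ and $\mu\in\mathcal M(\mathbf q)$. Write $\mu=\mathbf q_++q_0r$ for some $r\in\Delta^K$; if $q_0=0$, choose any $r$, for instance $\delta_1$. Define $\mu^m:=\mathbf q^m_++q^m_0r\in\mathcal M(\mathbf q^m)$. Then
\[
\|\mu^m-\mu\|\le\|\mathbf q^m_+-\mathbf q_+\|+|q^m_0-q_0|\,\|r\|\to0 .
\]

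The only delicate step is the lower hemicontinuity at $q_0=0$, where the representing $r$ is not unique. The construction above sidesteps this because any fixed $r$ works: $\mathcal M(\mathbf q)$ is then the singleton $\{\mathbf q_+\}$, and $\mu^m\to\mathbf q_+$ regardless of which $r$ is chosen. Together, upper and lower hemicontinuity give continuity of $\mathcal M$.
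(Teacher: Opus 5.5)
Your proposal is correct and follows the paper's proof. Nonemptiness and compactness come from the affine representation $\mathcal M(\mathbf q)=\mathbf q_++q_0\Delta^K$, upper hemicontinuity from a closed graph into the compact $\Delta^K$, and lower hemicontinuity from the fixed-$r$ construction $\mu^m=\mathbf q^m_++q^m_0 r$. The only cosmetic difference is that you check the closed graph directly from the inequality description, whereas the paper extracts a convergent subsequence of the representing vectors $r^m$.
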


\begin{proof}
Nonemptiness and compactness follow immediately from
\(
\mathcal M(\mathbf q)
=
\mathbf q_+ + q_0\Delta^K
\)
and compactness of $\Delta^K$.

I establish upper and lower hemicontinuity sequentially. Let
$\mathbf q^m\rightarrow\mathbf q$ and take
$\mu^m\in\mathcal M(\mathbf q^m)$ such that
$\mu^m\rightarrow\mu$. For every $m$, there exists
$r^m\in\Delta^K$ satisfying
\(
\mu^m
=
\mathbf q_+^m+q_0^m r^m.
\)
Since $\Delta^K$ is compact, every subsequence of $\{r^m\}$ has a
further convergent subsequence with limit $r\in\Delta^K$. Along such
a subsequence,
\(
\mu
=
\mathbf q_+ + q_0 r
\in
\mathcal M(\mathbf q).
\)
Hence $\mathcal M$ has a closed graph and, because its range is
contained in the compact set $\Delta^K$, it is upper
hemicontinuous.

For lower hemicontinuity, fix
$\mu\in\mathcal M(\mathbf q)$. There exists $r\in\Delta^K$ such that
\(
\mu=\mathbf q_+ + q_0r.
\)
For any sequence $\mathbf q^m\rightarrow\mathbf q$, define
\(
\mu^m
=
\mathbf q_+^m+q_0^m r.
\)
Then
$\mu^m\in\mathcal M(\mathbf q^m)$ for every $m$ and
$\mu^m\rightarrow\mu$. Therefore $\mathcal M$ is lower
hemicontinuous. Combining the two properties proves continuity.
\end{proof}

For
\(
\vartheta=(\mathbf q_X,\mathbf q_Y)\in\Theta,
\)
define
\(
\mathcal Z(\vartheta)
=
\mathcal M(\mathbf q_X)
\times
\mathcal M(\mathbf q_Y).
\)

\begin{corollary}
\label{cor:latent-product-continuous}
The correspondence
\(
\mathcal Z:
\Theta
\rightrightarrows
\Delta^K\times\Delta^K
\)
is continuous with nonempty compact values.
\end{corollary}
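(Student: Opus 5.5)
The plan is to derive the result directly from Lemma~\ref{lem:latent-correspondence-continuous}, using the fact that a product of two continuous correspondences with nonempty compact values is again continuous with nonempty compact values. For $\vartheta=(\mathbf q_X,\mathbf q_Y)$, the set $\mathcal Z(\vartheta)=\mathcal M(\mathbf q_X)\times\mathcal M(\mathbf q_Y)$ is a product of two nonempty compact subsets of $\Delta^K$. By Tychonoff's theorem in this finite-dimensional setting, it is therefore a nonempty compact subset of $\Delta^K\times\Delta^K$. This settles the value properties.

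For continuity, I will verify upper and lower hemicontinuity sequentially, mirroring the proof of the lemma.

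\emph{Upper hemicontinuity.} Because the range $\Delta^K\times\Delta^K$ is compact, it suffices to show that $\mathcal Z$ has a closed graph. Take $\vartheta^m=(\mathbf q_X^m,\mathbf q_Y^m)\to\vartheta$ and $(\mu^m,\nu^m)\in\mathcal Z(\vartheta^m)$ with $(\mu^m,\nu^m)\to(\mu,\nu)$. Then $\mathbf q_X^m\to\mathbf q_X$ and $\mu^m\to\mu$ with $\mu^m\in\mathcal M(\mathbf q_X^m)$. The closed-graph property of $\mathcal M$, established in the lemma, gives $\mu\in\mathcal M(\mathbf q_X)$. The same argument gives $\nu\in\mathcal M(\mathbf q_Y)$, so $(\mu,\nu)\in\mathcal Z(\vartheta)$.

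\emph{Lower hemicontinuity.} Fix $(\mu,\nu)\in\mathcal Z(\vartheta)$ and a sequence $\vartheta^m\to\vartheta$. Lower hemicontinuity of $\mathcal M$ supplies $\mu^m\in\mathcal M(\mathbf q_X^m)$ with $\mu^m\to\mu$ and $\nu^m\in\mathcal M(\mathbf q_Y^m)$ with $\nu^m\to\nu$. Concretely, writing $\mu=\mathbf q_{X+}+q_{X0}r$ and $\nu=\mathbf q_{Y+}+q_{Y0}s$, one can take $\mu^m=\mathbf q_{X+}^m+q_{X0}^m r$ and $\nu^m=\mathbf q_{Y+}^m+q_{Y0}^m s$. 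Then $(\mu^m,\nu^m)\in\mathcal Z(\vartheta^m)$ and converges to $(\mu,\nu)$. Since $\Theta$ is metrizable, these sequential characterizations are equivalent to upper and lower hemicontinuity, and continuity follows.

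There is no real obstacle here. The only point needing care is the use of the sequential characterizations, which is legitimate for compact-valued correspondences into a compact metric space; alternatively one can cite the product result for hemicontinuous correspondences in \citet{AliprantisBorder2006}.
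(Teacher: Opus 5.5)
Your proposal is correct and follows the paper's route. The paper's proof simply cites Lemma~\ref{lem:latent-correspondence-continuous} together with the fact that finite Cartesian products of continuous compact-valued correspondences are continuous with nonempty compact values; your closed-graph and sequential lower-hemicontinuity arguments spell out that product fact explicitly.
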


\begin{proof}
The result follows immediately from
Lemma~\ref{lem:latent-correspondence-continuous} and continuity of
finite Cartesian products of compact-valued correspondences.
\end{proof}

\subsection{Identification-bound solution correspondences}

Recall that the ordinal discrepancy is a function
\(
D:
\Delta^K\times\Delta^K
\rightarrow\mathbb R.
\)
Because it is continuous, define
\(
\underline D(\vartheta)
=
\min_{(\mu,\nu)\in\mathcal Z(\vartheta)}
D(\mu,\nu),
\)
and
\(
\overline D(\vartheta)
=
\max_{(\mu,\nu)\in\mathcal Z(\vartheta)}
D(\mu,\nu).
\)
Let
\(
S_{\underline D}(\vartheta)
=
\argmin_{(\mu,\nu)\in\mathcal Z(\vartheta)}
D(\mu,\nu)
\)
and
\(
S_{\overline D}(\vartheta)
=
\argmax_{(\mu,\nu)\in\mathcal Z(\vartheta)}
D(\mu,\nu).
\)

\begin{proposition}
\label{prop:endpoint-correspondences}
The functions
$\underline D,\overline D:\Theta\rightarrow\mathbb R$
are continuous. Moreover,
\(
S_{\underline D},
S_{\overline D}:
\Theta
\rightrightarrows
\Delta^K\times\Delta^K
\)
have nonempty compact values and are upper hemicontinuous and weakly
measurable.
\end{proposition}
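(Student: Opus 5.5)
The plan is to apply Berge's Maximum Theorem to the continuous correspondence $\mathcal Z$ from Corollary~\ref{cor:latent-product-continuous} together with the continuous objective $D$. First, I will note that $D$ is continuous on $\Delta^K\times\Delta^K$. This holds because $D(\mu,\nu)=\sum_{k=1}^{K-1}|C_k(\mu-\nu)|$ is a finite sum of absolute values of linear maps. By Corollary~\ref{cor:latent-product-continuous}, $\mathcal Z$ is continuous, meaning both upper and lower hemicontinuous, with nonempty compact values. Berge's Maximum Theorem (\citealp{AliprantisBorder2006}, Theorem~17.31) applied to the objective $D$ then gives two things: the value function $\overline D(\vartheta)=\max_{(\mu,\nu)\in\mathcal Z(\vartheta)}D(\mu,\nu)$ is continuous, and the argmax correspondence $S_{\overline D}$ has nonempty compact values and is upper hemicontinuous. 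Applying the same theorem to $-D$ gives the corresponding conclusions for $\underline D=-\max_{\mathcal Z(\vartheta)}(-D)$ and for $S_{\underline D}$.

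For weak measurability, I will route through Proposition~\ref{prop:supp-measurable-maximum}, and this requires checking that $\mathcal Z$ is weakly measurable. An upper hemicontinuous correspondence with compact values satisfies a stronger property: for every closed set $F$, the lower inverse $\{\vartheta:\mathcal Z(\vartheta)\cap F\neq\varnothing\}$ is closed. Every open subset $O$ of the metric space $\Delta^K\times\Delta^K$ is a countable union of closed sets $F_n$, so $\{\vartheta:\mathcal Z(\vartheta)\cap O\neq\varnothing\}=\bigcup_n\{\vartheta:\mathcal Z(\vartheta)\cap F_n\neq\varnothing\}$ is Borel. Alternatively, lower hemicontinuity makes this lower inverse open directly, which is simpler. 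With $E=\Delta^K\times\Delta^K$ (separable) and $f=D$ or $f=-D$, Assumption~\ref{ass:projection} holds, and Proposition~\ref{prop:supp-measurable-maximum}(ii) delivers weak measurability of $S_{\underline D}$ and $S_{\overline D}$. The same argument can also be made directly: $S_{\overline D}$ is upper hemicontinuous with compact values, so its lower inverses of closed sets are closed, and hence its lower inverses of open sets are $F_\sigma$, which gives weak measurability without appeal to Proposition~\ref{prop:supp-measurable-maximum}.

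The main point requiring care is the boundary of $\Theta$ where $q_0=0$. There $\mathcal M(\mathbf q)$ collapses to a singleton, and continuity of $\mathcal Z$ must not fail. Lemma~\ref{lem:latent-correspondence-continuous} already handles this uniformly through the representation $\mathbf q_++q_0\Delta^K$, so once it is invoked the remaining steps are routine applications of the Maximum Theorem.
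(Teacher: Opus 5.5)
Your proof is correct and follows the paper's argument. Berge's Maximum Theorem, applied to the continuous, nonempty compact-valued correspondence $\mathcal Z$ of Corollary~\ref{cor:latent-product-continuous} with objective $D$ (and $-D$), gives continuity of $\underline D,\overline D$ and nonempty compact values and upper hemicontinuity of $S_{\underline D},S_{\overline D}$; weak measurability then follows from upper hemicontinuity in metric spaces, where your direct $F_\sigma$ argument spells out what the paper only asserts, and your alternative route through Proposition~\ref{prop:supp-measurable-maximum} (after checking weak measurability of $\mathcal Z$ via lower hemicontinuity) is also valid.
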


\begin{proof}
By Corollary~\ref{cor:latent-product-continuous},
$\mathcal Z$ is continuous with nonempty compact values, and
$D$ is continuous. Berge's Maximum Theorem therefore implies that
$\underline D$ and $\overline D$ are continuous and that the
correspondences $S_{\underline D}$ and $S_{\overline D}$ are
nonempty, compact valued, and upper hemicontinuous.

Since the domain and range are finite-dimensional metric spaces and
the solution correspondences are compact valued and upper
hemicontinuous, they are weakly measurable.
\end{proof}

\subsection{Endpoint-conditioned benchmark correspondences}

For $\mu,\nu\in\Delta^K$, let
\(
\Pi(\mu,\nu)
=
\left\{
\pi\in\mathbb R_+^{K\times K}:
\sum_{j=1}^K\pi_{ij}=\mu_i,\quad
\sum_{i=1}^K\pi_{ij}=\nu_j
\right\}
\)
denote the transport polytope, and define the transport cost
\(
c(\pi)
=
\sum_{i=1}^K\sum_{j=1}^K
|i-j|\pi_{ij}.
\)
All feasible transport plans belong to the compact simplex
\(
\Delta^{K^2}
=
\left\{
\pi\in\mathbb R_+^{K\times K}:
\sum_{i=1}^K\sum_{j=1}^K\pi_{ij}=1
\right\}.
\)

Rather than relying on a general composition result for measurable
correspondences, I verify measurability of the endpoint-conditioned
benchmark sets directly.

Define
\[
\mathcal C_L(\vartheta)
=
\left\{
(\mu,\nu,\pi):
\begin{array}{l}
\mu\in\mathcal M(\mathbf q_X),\;
\nu\in\mathcal M(\mathbf q_Y),\\
\pi\in\Pi(\mu,\nu),\\
c(\pi)=\underline D(\vartheta)
\end{array}
\right\},
\]
and
\[
\mathcal C_U(\vartheta)
=
\left\{
(\mu,\nu,\pi):
\begin{array}{l}
\mu\in\mathcal M(\mathbf q_X),\;
\nu\in\mathcal M(\mathbf q_Y),\\
\pi\in\Pi(\mu,\nu),\\
D(\mu,\nu)=\overline D(\vartheta),\\
c(\pi)=\overline D(\vartheta)
\end{array}
\right\}.
\]

The first definition indeed selects the lower-endpoint benchmark triples. To see this, for every feasible triple,
optimal transport implies
\(
D(\mu,\nu)\leq c(\pi),
\)
while by definition of the lower identification endpoint,
\(
\underline D(\vartheta)
\leq D(\mu,\nu).
\)
Hence
\(
c(\pi)=\underline D(\vartheta)
\)
forces
\(
D(\mu,\nu)
=
c(\pi)
=
\underline D(\vartheta),
\)
so that $(\mu,\nu)$ attains the lower endpoint and $\pi$ is an
optimal transport plan.

For the upper endpoint, the separate restriction
\(
D(\mu,\nu)=\overline D(\vartheta)
\)
is required because the equality
$c(\pi)=\overline D(\vartheta)$ alone would not imply that
$(\mu,\nu)$ attains the upper discrepancy endpoint.

\begin{proposition}
\label{prop:endpoint-benchmark-measurable}
The correspondences
\(
\mathcal C_L,\mathcal C_U:
\Theta
\rightrightarrows
\Delta^K\times\Delta^K\times\Delta^{K^2}
\)
are weakly measurable with nonempty compact values.
\end{proposition}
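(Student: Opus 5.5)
The plan is to show that each of $\mathcal C_L$ and $\mathcal C_U$ has nonempty compact values and a closed graph in $\Theta\times\Delta^K\times\Delta^K\times\Delta^{K^2}$. A closed graph with values in a compact metric space gives upper hemicontinuity. Upper hemicontinuous, compact-valued correspondences between finite-dimensional metric spaces are weakly measurable, by the same argument used in Proposition~\ref{prop:endpoint-correspondences}. Alternatively, a closed-graph correspondence into a compact metric space is measurable by \citet[Theorem~18.20]{AliprantisBorder2006}.

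\emph{Nonemptiness.} Fix $\vartheta$. By Proposition~\ref{prop:endpoint-correspondences}, $S_{\underline D}(\vartheta)$ and $S_{\overline D}(\vartheta)$ are nonempty. Take $(\mu,\nu)$ in the relevant set. The transport polytope $\Pi(\mu,\nu)$ is nonempty, since it contains $\mu\otimes\nu$, and it is compact, so $c$ attains its minimum on it. By Proposition~\ref{prop:OT_representation}, an optimal plan $\pi$ satisfies $c(\pi)=D(\mu,\nu)$, which equals $\underline D(\vartheta)$ or $\overline D(\vartheta)$. Hence $(\mu,\nu,\pi)\in\mathcal C_e(\vartheta)$.

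\emph{Closed graph.} Take $\vartheta^m\to\vartheta$ and $(\mu^m,\nu^m,\pi^m)\in\mathcal C_e(\vartheta^m)$ converging to $(\mu,\nu,\pi)$.
\begin{itemize}
\item Lemma~\ref{lem:latent-correspondence-continuous} gives closedness of the graph of $\mathcal M$, so $\mu\in\mathcal M(\mathbf q_X)$ and $\nu\in\mathcal M(\mathbf q_Y)$.
\item The row- and column-sum constraints and nonnegativity are closed conditions jointly in $(\mu,\nu,\pi)$, so $\pi\in\Pi(\mu,\nu)$.
\item The equalities $c(\pi^m)=\underline D(\vartheta^m)$, and for $e=U$ also $D(\mu^m,\nu^m)=\overline D(\vartheta^m)$, pass to the limit. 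This uses continuity of $c$ and $D$, together with continuity of $\underline D$ and $\overline D$ on $\Theta$ from Proposition~\ref{prop:endpoint-correspondences}.
\end{itemize}
Thus $(\mu,\nu,\pi)\in\mathcal C_e(\vartheta)$. Compactness of each value follows by fixing $\vartheta$: $\mathcal C_e(\vartheta)$ is a closed subset of the compact set $\Delta^K\times\Delta^K\times\Delta^{K^2}$.

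\emph{Main obstacle.} The delicate step is passing the endpoint equalities to the limit. This needs continuity of the value functions $\underline D$ and $\overline D$ in $\vartheta$, not merely their measurability. It is also why the defining constraints are written as equalities to these continuous functions rather than as membership in the solution sets $S_{\underline D}(\vartheta)$ and $S_{\overline D}(\vartheta)$, which are only upper hemicontinuous. With that continuity from Berge's theorem already available, the remaining steps are routine. The final deduction is then: closed graph plus compact range gives upper hemicontinuity, and upper hemicontinuity gives weak measurability, since $\{\vartheta:\mathcal C_e(\vartheta)\cap F\neq\varnothing\}$ is closed for each closed $F$ and every open set is a countable union of closed sets.
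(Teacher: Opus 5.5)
Your proposal is correct and follows the paper's proof. Nonemptiness comes from attainment of the endpoint together with an optimal transport plan whose cost equals $D(\mu,\nu)$ by Proposition~\ref{prop:OT_representation}. Compactness holds because each value is a closed subset of $\Delta^K\times\Delta^K\times\Delta^{K^2}$. The graph is closed because the graph of $\mathcal M$ is closed and $c$, $D$, $\underline D$, $\overline D$ are continuous. Weak measurability follows because hit sets of closed sets are closed and every open set is a countable union of closed sets.

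The only difference is cosmetic. You get closedness of $\{\vartheta:\mathcal C_e(\vartheta)\cap F\neq\varnothing\}$ through upper hemicontinuity, whereas the paper obtains it directly as the projection of the compact set $\operatorname{Gr}(\mathcal C_L)\cap(\Theta\times F)$.
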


\begin{proof}
I give the argument for $\mathcal C_L$ first. Nonemptiness follows from attainment of
$\underline D(\vartheta)$. By
Proposition~\ref{prop:endpoint-correspondences}, there exists
$(\mu^\star,\nu^\star)$ attaining
$\underline D(\vartheta)$. The transport polytope
$\Pi(\mu^\star,\nu^\star)$ is nonempty and compact, and the linear
transport objective attains its minimum. Hence there exists
$\pi^\star$ such that
\(
c(\pi^\star)
=
D(\mu^\star,\nu^\star)
=
\underline D(\vartheta),
\)
so that
$(\mu^\star,\nu^\star,\pi^\star)\in\mathcal C_L(\vartheta)$.

The values of $\mathcal C_L$ are compact because they are closed
subsets of the compact space
\(
\Delta^K\times\Delta^K\times\Delta^{K^2}.
\)

I next show that the graph of $\mathcal C_L$ is closed. Let
\(
\vartheta^m\rightarrow\vartheta
\)
and
\(
(\mu^m,\nu^m,\pi^m)
\rightarrow
(\mu,\nu,\pi),
\)
with
\(
(\mu^m,\nu^m,\pi^m)
\in
\mathcal C_L(\vartheta^m)
\)
for every $m$. Continuity of the latent-marginal correspondence
implies
\(
\mu\in\mathcal M(\mathbf q_X),
\)
\(
\nu\in\mathcal M(\mathbf q_Y).
\)
The row- and column-sum restrictions defining the transport polytope
are linear and hence closed, so
\(
\pi\in\Pi(\mu,\nu).
\)
Finally, continuity of $c$ and of
$\underline D$ gives
\(
c(\pi)
=
\lim_{m\rightarrow\infty}c(\pi^m)
=
\lim_{m\rightarrow\infty}\underline D(\vartheta^m)
=
\underline D(\vartheta).
\)
Thus
\(
(\mu,\nu,\pi)\in\mathcal C_L(\vartheta),
\)
and the graph is closed.

Since $\Theta$ and
\(
E_B
:=
\Delta^K\times\Delta^K\times\Delta^{K^2}
\)
are compact metric spaces, closed graph and compact values imply weak
measurability here directly. Indeed, let $F\subseteq E_B$ be closed.
Then
\(
\left\{
\vartheta:
\mathcal C_L(\vartheta)\cap F\neq\varnothing
\right\}
\)
is the projection onto $\Theta$ of
\(
\operatorname{Gr}(\mathcal C_L)
\cap
(\Theta\times F).
\)
This set is compact, and its projection is therefore compact and
hence closed. Hit sets of closed subsets of $E_B$ are consequently
Borel measurable. For an open set $O\subseteq E_B$, write
\[
O
=
\bigcup_{m=1}^{\infty}
\left\{
z\in E_B:
d(z,E_B\setminus O)\geq\frac{1}{m}
\right\}.
\]
Each set on the right is closed, so the corresponding hit set for
$O$ is a countable union of measurable sets. Hence
$\mathcal C_L$ is weakly measurable.

The argument for $\mathcal C_U$ is identical. Nonemptiness follows
from attainment of $\overline D(\vartheta)$ and existence of an
optimal transport plan for an upper-endpoint pair. Compactness again
follows from the compact ambient space. Its graph is closed because
the defining restrictions are continuous, using in particular the
continuity of
\(
\vartheta\mapsto\overline D(\vartheta)
\)
established in Proposition~\ref{prop:endpoint-correspondences}.
The same closed-graph argument therefore establishes weak
measurability.
\end{proof}

The endpoint-conditioned optimal-plan correspondences can now be
defined directly as the projections $\Pi_L^\star(\vartheta)=
\left\{
\pi:
(\mu,\nu,\pi)\in\mathcal C_L(\vartheta)
\text{ for some }(\mu,\nu)
\right\}$ and \\ $\Pi_U^\star(\vartheta)
=
\left\{
\pi:
(\mu,\nu,\pi)\in\mathcal C_U(\vartheta)
\text{ for some }(\mu,\nu)
\right\}$.

\begin{corollary}
\label{cor:optimal-plan-correspondence-measurable}
The correspondences
\(
\Pi_L^\star,\Pi_U^\star:
\Theta
\rightrightarrows
\Delta^{K^2}
\)
are weakly measurable with nonempty compact values.
\end{corollary}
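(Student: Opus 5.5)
The corollary follows from Proposition~\ref{prop:endpoint-benchmark-measurable} by projecting onto the $\pi$-coordinate. Let $p:\Delta^K\times\Delta^K\times\Delta^{K^2}\to\Delta^{K^2}$ denote the continuous coordinate projection $p(\mu,\nu,\pi)=\pi$. For $e\in\{L,U\}$ we have, by definition, $\Pi_e^\star(\vartheta)=p(\mathcal C_e(\vartheta))$. The argument then has three parts: nonemptiness, compactness, and weak measurability.

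\emph{Nonemptiness and compactness.} Proposition~\ref{prop:endpoint-benchmark-measurable} gives that $\mathcal C_e(\vartheta)$ is nonempty and compact. The image of a nonempty compact set under the continuous map $p$ is again nonempty and compact, so $\Pi_e^\star(\vartheta)$ is nonempty and compact for every $\vartheta\in\Theta$. This step is routine.

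\emph{Weak measurability.} Fix an open set $O\subseteq\Delta^{K^2}$. Then
\[
\{\vartheta:\Pi_e^\star(\vartheta)\cap O\neq\varnothing\}
=
\{\vartheta:\mathcal C_e(\vartheta)\cap p^{-1}(O)\neq\varnothing\}.
\]
Indeed, some $\pi\in O$ arises from a triple $(\mu,\nu,\pi)\in\mathcal C_e(\vartheta)$ exactly when $\mathcal C_e(\vartheta)$ meets $p^{-1}(O)$. By continuity of $p$, the set $p^{-1}(O)$ is open in the ambient space $E_B=\Delta^K\times\Delta^K\times\Delta^{K^2}$. Weak measurability of $\mathcal C_e$ from Proposition~\ref{prop:endpoint-benchmark-measurable} then makes the right-hand side measurable.

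The only point needing care is that weak measurability of $\mathcal C_e$ was established with respect to open subsets of the relevant ambient space $E_B$; $p^{-1}(O)$ is indeed such a set, so no obstacle arises. As a fallback, one could redo the closed-graph argument from Proposition~\ref{prop:endpoint-benchmark-measurable} directly. The graph of $\Pi_e^\star$ is the image of the compact graph of $\mathcal C_e$ under the continuous map $(\vartheta,\mu,\nu,\pi)\mapsto(\vartheta,\pi)$, hence compact and therefore closed. The same compact-metric hit-set argument, writing $O$ as a countable union of closed sets, then yields weak measurability.
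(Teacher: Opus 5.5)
Your proof is correct. Nonemptiness and compactness are handled exactly as in the paper. Your primary argument for weak measurability differs from the paper's, and your fallback \emph{is} the paper's proof.

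The paper does not use the weak measurability of $\mathcal C_e$ directly. It observes that $\operatorname{Gr}(\Pi_e^\star)$ is the image of the compact set $\operatorname{Gr}(\mathcal C_e)$ under $(\vartheta,\mu,\nu,\pi)\mapsto(\vartheta,\pi)$, so this graph is closed. It then reruns the closed-hit-set argument, writing $O$ as a countable union of closed sets, from Proposition~\ref{prop:endpoint-benchmark-measurable}.

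Your main route uses the identity $\{\vartheta:\Pi_e^\star(\vartheta)\cap O\neq\varnothing\}=\{\vartheta:\mathcal C_e(\vartheta)\cap p^{-1}(O)\neq\varnothing\}$, which holds by definition of $\Pi_e^\star$ as the $\pi$-projection of $\mathcal C_e$. Weak measurability therefore transfers in one line.

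Your route is shorter and more general. It needs only weak measurability of $\mathcal C_e$ and continuity of $p$, so it does not rely on compactness of the graph or of the ambient spaces.

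The paper's route uses more structure but delivers a bit more. It shows that $\Pi_e^\star$ has a closed graph, which together with the compact range $\Delta^{K^2}$ gives upper hemicontinuity. That property is not needed for the corollary. The paper's route also matches its stated preference to verify measurability ``directly'' rather than via composition results, although your composition step is elementary enough that it requires no external theorem.
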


\begin{proof}
Nonemptiness and compactness follow because
$\Pi_L^\star(\vartheta)$ and $\Pi_U^\star(\vartheta)$ are continuous
projections of the nonempty compact sets
$\mathcal C_L(\vartheta)$ and $\mathcal C_U(\vartheta)$,
respectively.

For measurability, consider the lower-endpoint correspondence. Its
graph is the projection of
$\operatorname{Gr}(\mathcal C_L)$ onto the
$(\vartheta,\pi)$ coordinates. Since
$\operatorname{Gr}(\mathcal C_L)$ is a closed subset of a compact
finite-dimensional space, it is compact, and its projection is
compact and therefore closed. Thus
$\Pi_L^\star$ has a closed graph and compact values. The same
closed-hit-set argument used in the proof of
Proposition~\ref{prop:endpoint-benchmark-measurable} implies weak
measurability. The proof for $\Pi_U^\star$ is identical.
\end{proof}

\subsection{Measurability of the endpoint-conditioned cell bounds}

For each $(i,j)\in\{1,\ldots,K\}^2$, define
\begin{align*}
\underline{\pi}_{ij}^{L}(\vartheta)
&=
\min_{(\mu,\nu,\pi)\in\mathcal C_L(\vartheta)}
\pi_{ij},
&
\overline{\pi}_{ij}^{L}(\vartheta)
&=
\max_{(\mu,\nu,\pi)\in\mathcal C_L(\vartheta)}
\pi_{ij},\\
\underline{\pi}_{ij}^{U}(\vartheta)
&=
\min_{(\mu,\nu,\pi)\in\mathcal C_U(\vartheta)}
\pi_{ij},
&
\overline{\pi}_{ij}^{U}(\vartheta)
&=
\max_{(\mu,\nu,\pi)\in\mathcal C_U(\vartheta)}
\pi_{ij}.
\end{align*}

\begin{proposition}
\label{prop:cell-bounds-measurable}
For every $(i,j)$, the four functions
\(
\underline{\pi}_{ij}^{L},
\;
\overline{\pi}_{ij}^{L},
\;
\underline{\pi}_{ij}^{U},
\;
\overline{\pi}_{ij}^{U}
\)
are measurable, and their defining extrema are attained.
\end{proposition}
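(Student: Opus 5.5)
The plan is to apply Proposition~\ref{prop:supp-measurable-maximum} directly, taking $\mathcal Z=\mathcal C_L$ or $\mathcal Z=\mathcal C_U$ and a coordinate evaluation as the objective. Fix a cell $(i,j)$. Take the separable metric space
\(
E_B=\Delta^K\times\Delta^K\times\Delta^{K^2}
\)
and the objective $f(\mu,\nu,\pi)=\pi_{ij}$. Since $f$ is a coordinate projection, it is linear and hence continuous on $E_B$. By Proposition~\ref{prop:endpoint-benchmark-measurable}, the correspondences $\mathcal C_L$ and $\mathcal C_U$ are weakly measurable with nonempty compact values. Therefore Assumption~\ref{ass:projection} holds for each of the four pairs $(\mathcal C_e, \pm f)$ with $e\in\{L,U\}$.

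For the lower bounds, part~(i) of Proposition~\ref{prop:supp-measurable-maximum} shows that
\(
\vartheta\mapsto\inf_{z\in\mathcal C_e(\vartheta)}\pi_{ij}
\)
is measurable. For the upper bounds, apply the same result with $-f$ in place of $f$, which is again continuous, and use
\(
\overline\pi^{e}_{ij}(\vartheta)=-\inf_{z\in\mathcal C_e(\vartheta)}(-\pi_{ij}).
\)
A measurable function multiplied by $-1$ remains measurable, so $\overline\pi^{e}_{ij}$ is measurable as well.

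Attainment is handled pointwise in $\vartheta$. Each set $\mathcal C_e(\vartheta)$ is nonempty and compact, and $f$ is continuous, so by Weierstrass both the minimum and the maximum of $\pi_{ij}$ over $\mathcal C_e(\vartheta)$ are attained. This justifies writing $\min$ and $\max$ in the definitions. Equivalently, part~(ii) of Proposition~\ref{prop:supp-measurable-maximum} already gives nonempty argmin correspondences.

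As a consistency check, the optimization can be taken over $\mathcal C_e(\vartheta)$ or over its projection $\Pi_e^\star(\vartheta)$, since $f$ depends only on $\pi$. The two give the same extrema, so these functions coincide with the candidate-specific cell extrema used in Section~\ref{subsec:comp-benchmarks}.

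I expect no substantive obstacle, because all the regularity work is contained in Proposition~\ref{prop:endpoint-benchmark-measurable}. The one point to check is the measurability convention. Proposition~\ref{prop:supp-measurable-maximum} is stated for correspondences on $\Theta$ with its Borel $\sigma$-field, and Proposition~\ref{prop:endpoint-benchmark-measurable} establishes weak measurability with respect to that same Borel structure through the closed-hit-set argument. The conventions therefore match, and the Measurable Maximum Theorem applies without modification.
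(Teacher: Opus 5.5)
Your proposal is correct and follows essentially the same route as the paper. Both arguments take the weak measurability and nonempty compact values of $\mathcal C_L,\mathcal C_U$ from Proposition~\ref{prop:endpoint-benchmark-measurable}, apply the Measurable Maximum Theorem to the continuous coordinate map $(\mu,\nu,\pi)\mapsto\pi_{ij}$ and its negative, and obtain attainment from compactness; your routing through Proposition~\ref{prop:supp-measurable-maximum} instead of citing the theorem directly is only cosmetic.
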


\begin{proof}
By Proposition~\ref{prop:endpoint-benchmark-measurable},
$\mathcal C_L$ and $\mathcal C_U$ are weakly measurable with
nonempty compact values. The coordinate mapping
\(
(\mu,\nu,\pi)
\mapsto
\pi_{ij}
\)
is continuous. Applying the Measurable Maximum Theorem to this
mapping and to its negative shows that the corresponding maximum and
minimum value functions are measurable. Compactness of the feasible
sets implies attainment of all four extrema.
\end{proof}

Propositions~\ref{prop:endpoint-correspondences}--
\ref{prop:cell-bounds-measurable} verify the measurability conditions
required for the projection results in
Section~\ref{Section Inference}. In particular, no uniqueness of the
endpoint-attaining latent marginals or of the optimal transport plans
is required.


\section{Computational Details and Global Search}
\label{appendix:global-search}

This appendix provides the computational details and supporting regularity
arguments used in Section~\ref{Section Implementation}. I first give the
linear-programming formulations used to compute the endpoint-conditioned
benchmark transition probabilities conditional on the observable-data
parameter. I then describe the exact sequential conditional-binomial
simulation and common-random-number construction used to evaluate the
finite-sample Monte Carlo confidence region. The following subsections study
the local geometry of the realized Monte Carlo acceptance rule, show how the
structure of the simulation cells and an inward-rank accessibility condition
yield a useful sufficient condition for Near-optimal accessibility of the
discrepancy projections, and establish the persistent-exploration result used
in the proof of Proposition~\ref{prop:persistent-global-convergence}. The
appendix concludes with a numerical sensitivity analysis of the projected
discrepancy with respect to the number of Monte Carlo replications.

\subsection{Linear Programs for the Endpoint-Conditioned Benchmark Bounds}
\label{appendix:benchmark-LPs}

Fix an observable-data parameter
\(
\vartheta
=
(\mathbf q_X,\mathbf q_Y)
\in\Theta.
\)
Conditional on $\vartheta$, the identified sets for the two latent
marginal distributions are
\begin{align*}
\mathcal M(\mathbf q_X)
& =
\left\{
\mu\in\Delta^K:
q_{Xj}\leq\mu_j\leq q_{Xj}+q_{X0},
\quad j=1,\ldots,K
\right\},\quad\text{and}\\
\mathcal M(\mathbf q_Y)
&=
\left\{
\nu\in\Delta^K:
q_{Yj}\leq\nu_j\leq q_{Yj}+q_{Y0},
\quad j=1,\ldots,K
\right\}.
\end{align*}

Let
\begin{equation}
\begin{aligned}
\mathcal T(\mathbf q_X,\mathbf q_Y)
=
\biggl\{
(\mu,\nu,\pi):\;&
\mu,\nu\in\Delta^K,\quad
\pi\in\mathbb R_+^{K\times K},
\\
&
q_{Xj}\leq\mu_j\leq q_{Xj}+q_{X0},
\quad j=1,\ldots,K,
\\
&
q_{Yj}\leq\nu_j\leq q_{Yj}+q_{Y0},
\quad j=1,\ldots,K,
\\
&
\sum_{s=1}^{K}\pi_{rs}=\mu_r,
\quad r=1,\ldots,K,
\\
&
\sum_{r=1}^{K}\pi_{rs}=\nu_s,
\quad s=1,\ldots,K
\biggr\}.
\end{aligned}
\label{eq:supp-T}
\end{equation}
Writing
\(
c_{rs}=|r-s|,
\)
the lower discrepancy endpoint conditional on
$(\mathbf q_X,\mathbf q_Y)$ can equivalently be computed as the
optimal transport problem
\begin{equation}
\underline D(\mathbf q_X,\mathbf q_Y)
=
\min_{(\mu,\nu,\pi)\in
\mathcal T(\mathbf q_X,\mathbf q_Y)}
\sum_{r=1}^{K}\sum_{s=1}^{K}
c_{rs}\pi_{rs}.
\label{eq:supp-lower-D}
\end{equation}
This formulation jointly selects admissible latent marginals and an
optimal transport plan between them.

For a transition cell $(i,j)$, the lower-endpoint candidate-specific cell bounds are
obtained from
\begin{align}
\underline\pi^L_{ij}(\vartheta)
&=
\min_{\mu,\nu,\pi}\;
\pi_{ij}
\label{eq:supp-lower-cell-min}\\
\overline\pi^L_{ij}(\vartheta)
&=
\max_{\mu,\nu,\pi}\;
\pi_{ij},
\label{eq:supp-lower-cell-max}
\end{align}
where in both problems
\begin{align}
(\mu,\nu,\pi)
&\in
\mathcal T(\mathbf q_X,\mathbf q_Y),
\label{eq:supp-lower-cell-T}\\
\sum_{r=1}^{K}\sum_{s=1}^{K}
c_{rs}\pi_{rs}
&=
\underline D(\vartheta).
\label{eq:supp-lower-cell-cost}
\end{align}
All restrictions in
\eqref{eq:supp-lower-cell-T}--\eqref{eq:supp-lower-cell-cost}
are linear once the conditional endpoint
$\underline D(\vartheta)$ has been computed. Hence
\eqref{eq:supp-lower-cell-min} and
\eqref{eq:supp-lower-cell-max} are linear programs.

The equality in \eqref{eq:supp-lower-cell-cost} performs two roles
simultaneously. For every feasible pair of latent marginals,
\(
D(\mu,\nu)
\leq
\sum_{r,s}c_{rs}\pi_{rs},
\)
while, by definition,
\(
\underline D(\vartheta)
\leq
D(\mu,\nu).
\)
Consequently,
\(
\sum_{r,s}c_{rs}\pi_{rs}
=
\underline D(\vartheta)
\)
forces
\(
D(\mu,\nu)
=
\underline D(\vartheta)
\)
and simultaneously establishes optimality of $\pi$ for those
marginals.

The upper discrepancy endpoint requires a different formulation
because maximizing $D(\mu,\nu)$ is not itself a linear program.
Recall that
\(
D(\mu,\nu)
=
\max_{s\in\mathcal S_K}
\sum_{k=1}^{K-1}
s_k\,C_k(\mu-\nu),
\) with
\(\mathcal S_K
=
\{-1,1\}^{K-1},
\)
where
\(
C_k(\mu-\nu)
=
\sum_{\ell=1}^{k}(\mu_\ell-\nu_\ell).
\)
For a fixed $s\in\mathcal S_K$, define
\begin{equation}
\begin{aligned}
\mathcal T^U_s(\mathbf q_X,\mathbf q_Y)
=
\biggl\{
(\mu,\nu,\pi)\in
\mathcal T(\mathbf q_X,\mathbf q_Y):\;&
s_k\,C_k(\mu-\nu)\geq0,
\quad k=1,\ldots,K-1,
\\
&
\sum_{k=1}^{K-1}
s_k\,C_k(\mu-\nu)
=
\overline D(\vartheta),
\\
&
\sum_{r=1}^{K}\sum_{t=1}^{K}
c_{rt}\pi_{rt}
=
\overline D(\vartheta)
\biggr\}.
\end{aligned}
\label{eq:supp-TU}
\end{equation}
The sign restrictions imply
\(
s_k\,C_k(\mu-\nu)
=
|C_k(\mu-\nu)|
\)
for every $k$. Hence the first equality in
\eqref{eq:supp-TU} implies
\(
D(\mu,\nu)
=
\overline D(\vartheta),
\)
while the transport-cost equality implies that $\pi$ is optimal
between those endpoint-attaining marginals.

The upper-endpoint candidate-specific cell bounds can therefore be written as
\begin{align}
\underline\pi^U_{ij}(\vartheta)
&=
\min_{s\in\mathcal S_K}
\;
\min_{(\mu,\nu,\pi)\in
\mathcal T^U_s(\mathbf q_X,\mathbf q_Y)}
\pi_{ij},
\label{eq:supp-upper-cell-min}\\
\overline\pi^U_{ij}(\vartheta)
&=
\max_{s\in\mathcal S_K}
\;
\max_{(\mu,\nu,\pi)\in
\mathcal T^U_s(\mathbf q_X,\mathbf q_Y)}
\pi_{ij}.
\label{eq:supp-upper-cell-max}
\end{align}
For each fixed sign vector $s$, the inner problems in
\eqref{eq:supp-upper-cell-min}--\eqref{eq:supp-upper-cell-max}
are linear programs. Sign vectors for which
$\mathcal T^U_s(\mathbf q_X,\mathbf q_Y)$ is empty are discarded.

Finally, for a generic primitive confidence region
$\mathcal C\subseteq\Theta$, the projected benchmark bounds are
\begin{align}
\underline{\pi}^{L}_{ij}(\mathcal C)
&=
\inf_{\vartheta\in\mathcal C}
\underline\pi^L_{ij}(\vartheta),
&
\overline{\pi}^{L}_{ij}(\mathcal C)
&=
\sup_{\vartheta\in\mathcal C}
\overline\pi^L_{ij}(\vartheta),
\label{eq:supp-projected-L}\\
\underline{\pi}^{U}_{ij}(\mathcal C)
&=
\inf_{\vartheta\in\mathcal C}
\underline\pi^U_{ij}(\vartheta),
&
\overline{\pi}^{U}_{ij}(\mathcal C)
&=
\sup_{\vartheta\in\mathcal C}
\overline\pi^U_{ij}(\vartheta).
\label{eq:supp-projected-U}
\end{align}
Thus the endpoint-conditioned calculations have a nested structure:
an outer optimization over the observable probabilities and inner
linear programs determining the endpoint-attaining benchmark plans.
The inner problems depend only on the candidate observable parameter;
the primitive confidence region enters through the outer projection.

\subsection{Exact Multinomial Simulation and Common Random Numbers}
\label{appendix:MC-simulation}

The finite-sample confidence region
$\mathcal C_{n,B}^{\mathrm{MC}}(1-\alpha)$
requires repeated simulation from the multinomial distribution
specified by each candidate observable parameter. This subsection
describes the simulation procedure used in the numerical
implementation.

Consider first a generic multinomial probability vector
\(
\mathbf q
=
(q_0,\ldots,q_K)
\in\Delta^{K+1}
\)
and sample size $n$. A draw
\(
N=(N_0,\ldots,N_K)
\sim
\operatorname{Multinomial}(n,\mathbf q)
\)
can be generated sequentially from conditional binomial
distributions.

Define
\(
m_0=n,
\qquad
r_0=1.
\)
For $k=0,\ldots,K-1$, let
\(
m_k
=
n-\sum_{\ell=0}^{k-1}N_\ell
\)
and
\(
r_k
=
1-\sum_{\ell=0}^{k-1}q_\ell.
\)
Whenever $r_k>0$, generate
\begin{equation}
N_k
\sim
\operatorname{Binomial}
\left(
m_k,
\frac{q_k}{r_k}
\right).
\label{eq:conditional-binomial}
\end{equation}
The final count is then determined by
\(
N_K
=
n-\sum_{k=0}^{K-1}N_k.
\)
If $r_k=0$, all remaining category probabilities are zero and the
remaining counts are set equal to zero. This convention allows the
simulation algorithm to be used on the boundary of the probability
simplex.

The sequential construction is exact. To see this, conditional on
the first $k$ cell counts, the remaining $m_k$ observations are
distributed across categories $k,\ldots,K$ with probabilities
renormalized by $r_k$. Hence the conditional distribution of $N_k$
is precisely \eqref{eq:conditional-binomial}. Iterating these
conditional distributions yields the multinomial joint law.

For common-random-number evaluation, let
\(
U_k^{(b)},
\;
b=1,\ldots,B,\;
k=0,\ldots,K-1,
\)
be independent $\operatorname{Uniform}(0,1)$ random variables. These
uniform draws are generated once and held fixed throughout the outer
optimization. The conditional binomial count in
\eqref{eq:conditional-binomial} is generated by
\(
N_k^{(b)}(\mathbf q)
=
F_{\operatorname{Bin}
\left(m_k^{(b)},q_k/r_k\right)}^{-1}
\left(
U_k^{(b)}
\right),
\)
where
\(
F^{-1}(u)
=
\inf\{z:F(z)\geq u\}
\)
denotes the generalized inverse of the relevant binomial distribution
function.

Separate independent uniform arrays are used for the source and
target repeated cross-sections. Thus, for each fixed candidate
\(
\vartheta=(\mathbf q_X,\mathbf q_Y),
\)
the resulting simulated vectors satisfy
\[
N_X^{(b)}
\sim
\operatorname{Multinomial}(n_X,\mathbf q_X),
\qquad
N_Y^{(b)}
\sim
\operatorname{Multinomial}(n_Y,\mathbf q_Y),
\]
independently across the two samples and across
$b=1,\ldots,B$.

The randomized tie-breaking variables
\(
U_0^{R},U_1^{R},\ldots,U_B^{R}
\)
used in the Monte Carlo rank statistic are also generated once,
independently of the multinomial simulation uniforms, and held fixed
during optimization. For each candidate $\vartheta$, the simulated
statistics $T_n^{(b)}(\vartheta)$ for $b=1,\ldots,B$, and the observed statistic
$T_n^{(0)}(\vartheta)$ are then ranked as described in
Section~\ref{subsec:MC-inference}.

Holding the auxiliary uniforms fixed has two computational
advantages. First, repeated evaluation of the same candidate
$\vartheta$ gives exactly the same Monte Carlo $p$-value. Second,
nearby candidate parameters are evaluated using the same underlying
simulation draws, avoiding additional numerical noise that would
arise if an independent Monte Carlo experiment were performed at
every objective-function evaluation.

This common-random-number construction does not alter the
finite-sample validity of the Monte Carlo test. For every fixed
candidate $\vartheta$, the simulated samples have exactly the
multinomial distribution specified by that candidate, and at the true
parameter the observed and simulated statistics, together with their
independent randomized tie breakers, satisfy the exchangeability
argument in Theorem~\ref{thm:MC-test}. Dependence of the simulated
statistics across different candidate values of $\vartheta$ is
irrelevant to this pointwise argument.

Conditional on the fixed auxiliary draws, the mapping
\(
\vartheta
\longmapsto
p_{n,B}(\vartheta)
\)
is deterministic. It is not generally smooth. The inverse binomial
transform changes only when a candidate probability crosses a value
at which one of the simulated counts changes. On a region on which
all simulated count vectors are fixed, the Monte Carlo rank can change
only when a simulated likelihood-ratio statistic ties the observed
statistic. Thus the nonsmoothness of the realized Monte Carlo
confidence region is generated by simulation-count boundaries and
rank-tie boundaries. The next subsection describes this local
structure more precisely.

\subsection{Geometry of the Realized Monte Carlo Acceptance Rule}
\label{subsec:MC-geometry}

Let \(\mathcal G_{n,B}\) be the sigma-field generated by the observed samples and the Monte Carlo auxiliary random variables, as defined in Section~\ref{subsec:mc-global-search}. Throughout the remainder of this appendix, a realization of \(\mathcal G_{n,B}\) is held fixed. The realized Monte Carlo confidence region \(\mathcal C_{n,B}^{\mathrm{MC}}\), the candidate-specific projection objectives, and their exact outer extrema are therefore deterministic. Probability statements concern only the algorithmic randomization and are taken under \(\mathbb P_{\mathrm{alg}}\).

Conditional on the fixed auxiliary uniforms, each simulated multinomial count is locally constant except at parameter values at which the relevant binomial distribution function evaluated at an integer count equals the corresponding auxiliary uniform. Because the sample sizes, number of categories, and number of replications are finite, only finitely many such simulation-change boundaries arise after refinement over the finitely many possible preceding count configurations.

The structure is especially transparent on the full-dimensional simplex interior. For one multinomial probability vector define the sequential conditional probabilities
\(
\lambda_k
=
\frac{q_k}{1-\sum_{\ell=0}^{k-1}q_\ell},
\qquad
k=0,\ldots,K-1.
\)
These are the stick-breaking coordinates associated with the conditional-binomial construction. Conditional on a fixed complete bank of simulated counts, each inverse-binomial requirement restricts the corresponding \(\lambda_k\) to an interval, possibly with one endpoint included. Intersecting these restrictions over the finitely many Monte Carlo replications therefore gives a Cartesian product of intervals in the stick-breaking coordinates. Thus the simulation-count regions have a structured cell geometry; the remaining changes in the Monte Carlo rank within a fixed count cell arise only from likelihood-ratio ties.

For later use, it is convenient to make this cell structure explicit. For one sample, let
\(
\Psi_1:[0,1]^K\to\Delta^{K+1}
\)
denote the stick-breaking map defined by
\[
q_0=\lambda_0,
\qquad
q_k
=
\lambda_k\prod_{\ell=0}^{k-1}(1-\lambda_\ell),
\quad k=1,\ldots,K-1,
\qquad
q_K
=
\prod_{\ell=0}^{K-1}(1-\lambda_\ell).
\]
Let
\(
\Psi:[0,1]^{2K}\to\Theta
\)
be the product of the corresponding maps for the source and target samples. On \((0,1)^{2K}\), \(\Psi\) is a smooth one-to-one parameterization of the relative interior of \(\Theta\).

For \(m\geq0\) and \(z\in\{0,\ldots,m\}\), write
\(
F_{m,z}(p)
\)
for the distribution function of a \(\operatorname{Binomial}(m,p)\) random variable evaluated at \(z\), with the convention
\(
F_{m,-1}(p)=0.
\)

\begin{lemma}[Fixed-count cells]
\label{lem:fixed-count-cells}
Fix the observed data and all Monte Carlo simulation uniforms on the probability-one event that every simulation uniform lies in \((0,1)\), and let
\(
\lambda^\star\in[0,1]^{2K}
\)
be a stick-breaking representation of a candidate
\(
\vartheta^\star=\Psi(\lambda^\star).
\)
Let \(\mathcal S(\lambda^\star)\) denote the set of all stick-breaking coordinate vectors that, under the fixed common random numbers, generate exactly the same complete bank of simulated multinomial counts as 
 \(\lambda^\star\). Then
\(
\mathcal S(\lambda^\star)
=
I_1\times\cdots\times I_{2K},
\)
where each \(I_\ell\subseteq[0,1]\) is a nondegenerate interval. Moreover, with interior and closure in the following display taken in \(\mathbb R^{2K}\),
\[
\lambda^\star
\in
\operatorname{cl}
\left(
\operatorname{int}\mathcal S(\lambda^\star)
\cap
(0,1)^{2K}
\right).
\]
Thus every realized simulated-count configuration can be approached arbitrarily closely by strictly interior observable probabilities without changing any simulated count.
\end{lemma}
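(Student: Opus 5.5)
The plan is to work entirely in the stick-breaking coordinates and reduce the cell to a coordinatewise inverse-binomial condition. The first observation is that the sequential conditional-binomial construction uses success probability $q_k/r_k$, and under $\Psi_1$ this ratio is exactly $\lambda_k$, because $r_k=\prod_{\ell<k}(1-\lambda_\ell)$. Hence, for replication $b$ and coordinate $k$ of one sample,
\[
N_k^{(b)}=F^{-1}_{\operatorname{Bin}(m_k^{(b)},\lambda_k)}\bigl(U_k^{(b)}\bigr),
\]
where $m_k^{(b)}$ depends only on the earlier counts $N_0^{(b)},\ldots,N_{k-1}^{(b)}$. This needs one check at the boundary: if some earlier $\lambda_\ell=1$, so that $r_k=0$, then $U_\ell^{(b)}>0$ forces $N_\ell^{(b)}=m_\ell^{(b)}$. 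The remaining sample size is then zero, and setting all later counts to zero agrees with the inverse transform for $m=0$.

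The main step is to show that requiring the complete bank to coincide with that of $\lambda^\star$ decouples across coordinates. Argue by induction on $k$ within each sample. On the set where all earlier counts agree with those at $\lambda^\star$, every $m_k^{(b)}$ equals its value at $\lambda^\star$. The condition that $N_k^{(b)}$ equals its value $z$ at $\lambda^\star$ is therefore a condition on $\lambda_k$ alone:
\[
F_{m,z-1}(\lambda_k)<U_k^{(b)}\le F_{m,z}(\lambda_k), \qquad m=m_k^{(b)}.
\]
This gives $\mathcal S(\lambda^\star)=I_1\times\cdots\times I_{2K}$, where each $I_\ell$ is the intersection over $b=1,\ldots,B$ of the one-dimensional sets just described. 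The source and target samples use separate uniform arrays, so they contribute separate factors.

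Next, determine the shape of each one-dimensional set. Use that, for $0\le z<m$, $p\mapsto F_{m,z}(p)$ is continuous and strictly decreasing on $[0,1]$, from $1$ to $0$, and that $F_{m,m}\equiv1$ and $F_{m,z-1}<F_{m,z}$ on $(0,1)$. Since $U_k^{(b)}\in(0,1)$, the admissible set falls into one of four cases:
\begin{itemize}
\item $[0,1]$ when $m=0$;
\item $[0,h]$ with $h>0$ when $z=0<m$;
\item $(l,1]$ with $l<1$ when $z=m>0$;
\item $(l,h]$ with $l<h$ when $0<z<m$, where $F_{m,z-1}(l)=U=F_{m,z}(h)$.
\end{itemize}
Each set contains $\lambda^\star_k$.

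I expect nondegeneracy of the intersection to be the delicate point, since an intersection of nondegenerate intervals can in general collapse to a point. Here it cannot, because every interval above is left-open-right-closed, apart from left endpoints equal to $0$ that are included. The intersection is therefore $(\max l,\min h]$ or $[0,\min h]$, and it contains $\lambda^\star_k$. That forces $\max l<\lambda^\star_k\le \min h$, or $0\le\lambda^\star_k\le\min h$ with $\min h>0$, so it has nonempty interior. Finally, for a nondegenerate interval $I_\ell\ni\lambda^\star_\ell$ inside $[0,1]$, $\lambda^\star_\ell$ lies in the closure of $\operatorname{int}I_\ell\cap(0,1)$. Taking products, and using $\operatorname{int}(\prod I_\ell)=\prod\operatorname{int}I_\ell$, gives the stated closure inclusion.
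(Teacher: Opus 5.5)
Your proposal is correct and follows essentially the same argument as the paper. In both, a recursive reduction turns each count-preservation requirement into a one-dimensional inverse-binomial condition on $\lambda_k$, which gives the product structure, and nondegeneracy comes from the strict-lower/weak-upper form of these restrictions together with $U_k^{(b)}\in(0,1)$. Your identification $q_k/r_k=\lambda_k$, your handling of the $r_k=0$ boundary, and your explicit four-case description of the admissible sets make precise several points the paper leaves implicit.
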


\begin{proof}
Consider first one sample and one sequential step \(k\). Conditional on the preceding simulated counts, the remaining sample size \(m_k^{(b)}\) is fixed. If the realized count in replication \(b\) is \(z_k^{(b)}\), the generalized-inverse convention implies that the same count is generated at conditional probability \(p\) exactly when
\(
F_{m_k^{(b)},z_k^{(b)}-1}(p)
<
U_k^{(b)}
\leq
F_{m_k^{(b)},z_k^{(b)}}(p).
\)
When \(m_k^{(b)}>0\), the binomial distribution functions appearing here are continuous in \(p\), and for \(z<m_k^{(b)}\) they are strictly decreasing in \(p\). Hence the set of \(p\in[0,1]\) satisfying the displayed inequalities is an interval of the form \([0,u]\) or \((\ell,u]\), with the natural endpoint modifications when the realized count is \(0\) or \(m_k^{(b)}\). If \(m_k^{(b)}=0\), the realized count is zero for every \(p\), so this replication imposes no restriction.

Intersecting over \(b=1,\ldots,B\) gives an interval \(I_k\) containing \(\lambda_k^\star\). The lower restrictions are strict except at the endpoint zero, whereas the upper restrictions are weak. Since every simulation uniform belongs to \((0,1)\), the resulting interval cannot collapse to a singleton: if \(\lambda_k^\star\in(0,1)\), its lower endpoint is strictly below \(\lambda_k^\star\); at \(\lambda_k^\star=0\) the interval contains sufficiently small positive values; and at \(\lambda_k^\star=1\) it contains values sufficiently close to one from below.

Conditional on preservation of the counts at the preceding steps, the remaining sample sizes are unchanged. Proceeding recursively over \(k=0,\ldots,K-1\) therefore shows that the set of parameters preserving the entire count vector in every replication is the Cartesian product of these coordinate intervals. Applying the same argument independently to the source and target samples gives
\(
\mathcal S(\lambda^\star)=I_1\times\cdots\times I_{2K}.
\)
Because every factor is nondegenerate and has interior points in \((0,1)\) arbitrarily close to \(\lambda_\ell^\star\), the final closure statement follows by taking coordinatewise approximating sequences.
\end{proof}

Lemma~\ref{lem:fixed-count-cells} shows that simulation-count boundaries do
not themselves obstruct regular approximation. Any candidate can be
approached arbitrarily closely by strictly interior parameters that generate
the same complete bank of simulated counts. Thus, within a fixed count cell,
the only remaining obstruction to approximating an accepted candidate by
interior accepted points comes from the Monte Carlo rank comparisons. I now
make this remaining rank condition explicit.

Fix a relatively open region on which all simulated count vectors are constant. For replication \(b=1,\ldots,B\), write
\(
G_b(\vartheta)
=
T_n^{(b)}(\vartheta)-T_n^{(0)}(\vartheta).
\)
On the relative interior of a fixed simplex face, wherever the likelihood-ratio statistics are finite, \(G_b\) is continuous; on the full-dimensional simplex interior it is real analytic. With the simulated and observed empirical vectors fixed, it can be written, up to a constant \(c_b\) depending only on those empirical vectors, as
\[
G_b(\vartheta)
=
c_b
-
2n_X\sum_{j=0}^K
\left(
\widehat q^{(b)}_{Xj}-\widehat q_{Xj}
\right)\log q_{Xj}
-
2n_Y\sum_{j=0}^K
\left(
\widehat q^{(b)}_{Yj}-\widehat q_{Yj}
\right)\log q_{Yj}.
\]
If the simulated empirical vectors coincide with the observed empirical vectors, the two likelihood-ratio statistics tie identically and their ordering is fixed by the randomized tie breakers. Otherwise, on an analytic region on which \(G_b\) is not identically zero, the zero set \(\{G_b=0\}\) has empty relative interior. Consequently, away from simulation-change boundaries and nontrivial rank-tie boundaries, all comparisons entering the randomized Monte Carlo rank are locally unchanged and the Monte Carlo \(p\)-value is locally constant.

Call an accepted candidate \(\vartheta\in\mathcal C_{n,B}^{\mathrm{MC}}\) \emph{MC-regular} if there exists a nonempty relatively open neighborhood \(U\) of \(\vartheta\) in \(\Theta\) such that
\(
U\subseteq\mathcal C_{n,B}^{\mathrm{MC}}.
\)
In particular, an accepted candidate away from the simulation-change and nontrivial rank-tie boundaries is MC-regular whenever the preceding local constancy holds in the relative topology of \(\Theta\).

The fixed-count-cell lemma isolates the remaining possible obstruction to regular approximation: the randomized rank itself. Fix an accepted candidate
\(
\vartheta^\star=\Psi(\lambda^\star)
\)
and its count cell
\(
\mathcal S^\star=\mathcal S(\lambda^\star).
\)
On
\(
\operatorname{int}\mathcal S^\star\cap(0,1)^{2K},
\)
define
\(
H_b(\lambda)=G_b(\Psi(\lambda)).
\)
Let \(P(\vartheta^\star)\) denote the replications that are strictly more extreme than the observed statistic at \(\vartheta^\star\), and let \(J(\vartheta^\star)\) denote the replications that tie the observed statistic at \(\vartheta^\star\) but for which \(H_b\) is not identically zero on the interior of the count cell. Ties for which \(H_b\equiv0\) are excluded from \(J(\vartheta^\star)\), because their ordering is fixed throughout the cell by the randomized tie breakers.

Call \(\vartheta^\star\) \emph{inward-rank accessible} if there exists a sequence
\(
\lambda^m
\in
\operatorname{int}\mathcal S^\star\cap(0,1)^{2K},
\;
\lambda^m\longrightarrow\lambda^\star,
\)
such that, for all sufficiently large \(m\),
\(
H_b(\lambda^m)>0
\;
\text{for every }
b\in P(\vartheta^\star)\cup J(\vartheta^\star),
\)
and
\(
H_b(\lambda^m)\neq0
\;
\text{for every }b
\text{ for which }H_b\not\equiv0
\text{ on }
\operatorname{int}\mathcal S^\star.
\)
Thus an inward-rank accessible accepted point can be approached from the interior of the same simulated-count cell while preserving every strict comparison that already contributes to the Monte Carlo rank, turning every nontrivial active tie in favor of the simulated statistic, and avoiding all nontrivial rank-tie surfaces.

\begin{proposition}[Inward rank accessibility]
\label{prop:inward-rank-closure}
Fix the observed data and Monte Carlo auxiliary draws. Suppose every accepted candidate
\(
\vartheta^\star\in\mathcal C_{n,B}^{\mathrm{MC}}
\)
is inward-rank accessible for at least one stick-breaking representation \(\lambda^\star\). Then
\(
\mathcal C_{n,B}^{\mathrm{MC}}
\subseteq
\operatorname{cl}_{\Theta}
\left(
\operatorname{int}_{\Theta}
\mathcal C_{n,B}^{\mathrm{MC}}
\right).
\)
\end{proposition}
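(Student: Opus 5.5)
Fix an accepted candidate \(\vartheta^\star\in\mathcal C_{n,B}^{\mathrm{MC}}\) and a stick-breaking representation \(\lambda^\star\) for which it is inward-rank accessible, with the sequence \(\lambda^m\to\lambda^\star\) supplied by that definition. Set \(\vartheta^m:=\Psi(\lambda^m)\). Continuity of \(\Psi\) gives \(\vartheta^m\to\vartheta^\star\). It therefore suffices to show that each \(\vartheta^m\), for \(m\) large, lies in \(\operatorname{int}_\Theta\mathcal C_{n,B}^{\mathrm{MC}}\). I plan to do this in two steps: first \(\vartheta^m\) is accepted, then it is MC-regular.

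\emph{Step 1 (acceptance).} Because \(\lambda^m\in\mathcal S^\star\), Lemma~\ref{lem:fixed-count-cells} shows that the complete bank of simulated counts at \(\vartheta^m\) coincides with that at \(\vartheta^\star\). The observed empirical vectors and the tie breakers are fixed. Hence the indicator \(I_b(\vartheta^m)\) is determined only by the sign of \(G_b(\vartheta^m)=H_b(\lambda^m)\), together with the fixed ordering of \(U_b\) and \(U_0\) when there is an identical tie.

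I will compare \(I_b(\vartheta^m)\) with \(I_b(\vartheta^\star)\) case by case.
\begin{enumerate}[(i)]
\item For \(b\in P(\vartheta^\star)\cup J(\vartheta^\star)\), accessibility gives \(H_b(\lambda^m)>0\), so \(I_b(\vartheta^m)=1\ge I_b(\vartheta^\star)\).
\item For a replication with \(H_b\equiv0\) on the cell, the tie ordering is unchanged, so \(I_b\) is unchanged.
\item Every remaining replication has \(I_b(\vartheta^\star)=0\), so the inequality \(I_b(\vartheta^m)\ge I_b(\vartheta^\star)\) holds trivially.
\end{enumerate}
Summing over \(b\) gives \(p_{n,B}(\vartheta^m)\ge p_{n,B}(\vartheta^\star)>\alpha\), so \(\vartheta^m\) is accepted.

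\emph{Step 2 (regularity).} Since \(\lambda^m\in\operatorname{int}\mathcal S^\star\cap(0,1)^{2K}\), there is an open box \(V\ni\lambda^m\) contained in that set, on which all counts are fixed. Accessibility gives \(H_b(\lambda^m)\neq0\) for every nontrivial \(b\). Each \(H_b\) is continuous (indeed analytic) on the interior, because all \(q\)'s are strictly positive there and the statistics are finite. So, after shrinking \(V\), every nontrivial \(H_b\) has constant sign on \(V\), and the trivial ties keep their fixed ordering. Consequently \(p_{n,B}\circ\Psi\) is constant on \(V\), and \(\Psi(V)\subseteq\mathcal C_{n,B}^{\mathrm{MC}}\).

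Because \(\Psi\) restricted to \((0,1)^{2K}\) is a smooth bijection onto the relative interior of \(\Theta\) with smooth inverse (the stick-breaking inverse \(\lambda_k=q_k/(1-\sum_{\ell<k}q_\ell)\)), the set \(\Psi(V)\) is open in \(\operatorname{ri}\Theta\). Since \(\operatorname{ri}\Theta\) is itself open in \(\Theta\), the set \(\Psi(V)\) is a relatively open neighborhood of \(\vartheta^m\) in \(\Theta\). Thus \(\vartheta^m\in\operatorname{int}_\Theta\mathcal C_{n,B}^{\mathrm{MC}}\).

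Letting \(m\to\infty\) yields \(\vartheta^\star\in\operatorname{cl}_\Theta(\operatorname{int}_\Theta\mathcal C_{n,B}^{\mathrm{MC}})\). As \(\vartheta^\star\) was an arbitrary accepted candidate, this proves the inclusion.

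The main obstacle is the bookkeeping in Step 1. One must check that replications outside \(P\cup J\) with \(H_b\not\equiv0\) cannot lower the \(p\)-value. This holds because such replications were not counted at \(\vartheta^\star\), since they were either strictly less extreme or trivially tied with \(U_b<U_0\), and a \(p\)-value can only rise when extra replications become counted. The other delicate point is that \(\Psi\) is a homeomorphism onto the relative interior. I would verify this directly from the explicit inverse rather than cite a general result.
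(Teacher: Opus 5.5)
Your proof is correct and follows the same route as the paper. You first show that $\vartheta^m=\Psi(\lambda^m)$ is accepted by comparing rank indicators case by case, then use the fixed counts and the nonvanishing of every nontrivial $H_b$ at $\lambda^m$ to make the $p$-value locally constant, transferring this to $\Theta$ through the stick-breaking homeomorphism, which you usefully verify from its explicit inverse where the paper only asserts it. One small wording slip does not affect the argument: in your closing remark, the replications outside $P(\vartheta^\star)\cup J(\vartheta^\star)$ with $H_b\not\equiv0$ are exactly the strictly less extreme ones, because identically tied replications are already handled by your case (ii).
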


\begin{proof}
Fix
\(
\vartheta^\star\in\mathcal C_{n,B}^{\mathrm{MC}}
\)
and let \(\lambda^m\) be a sequence supplied by inward rank accessibility. Set
\(
\vartheta^m=\Psi(\lambda^m).
\)
By construction, \(\vartheta^m\to\vartheta^\star\) and every \(\vartheta^m\) generates the same simulated count bank as \(\vartheta^\star\).

Consider the Monte Carlo rank contributions. Every replication that is strictly more extreme at \(\vartheta^\star\) remains strictly more extreme at \(\vartheta^m\). Every nontrivial tie at \(\vartheta^\star\) becomes strictly more extreme at \(\vartheta^m\), so its rank contribution changes from either zero or one to one. An identically tied replication retains the same randomized ordering throughout the count cell. Replications that contribute zero at \(\vartheta^\star\) cannot reduce the rank. Hence, for all sufficiently large \(m\),
\(
p_{n,B}(\vartheta^m)
\geq
p_{n,B}(\vartheta^\star)
>
\alpha,
\)
so \(\vartheta^m\in\mathcal C_{n,B}^{\mathrm{MC}}\).

The sequence also lies in the interior of the fixed-count cell and avoids every nontrivial rank-tie surface. Therefore all simulated counts and all nontrivial rank comparisons are locally constant around \(\lambda^m\); identically tied comparisons remain ordered by the fixed randomized tie breakers. The Monte Carlo \(p\)-value is consequently locally constant around \(\lambda^m\). Since \(\lambda^m\in(0,1)^{2K}\), the stick-breaking map is locally a homeomorphism onto the relative interior of \(\Theta\). It follows that
\(
\vartheta^m\in
\operatorname{int}_{\Theta}\mathcal C_{n,B}^{\mathrm{MC}}.
\)
As \(\vartheta^m\to\vartheta^\star\), the desired inclusion follows.
\end{proof}

At an interior candidate, inward rank accessibility has a simple differential sufficient condition. By Lemma~\ref{lem:fixed-count-cells}, write the count cell coordinatewise as
\(
\mathcal S^\star=I_1\times\cdots\times I_{2K}.
\)
Because the generalized-inverse rule gives open lower and weak upper restrictions away from the endpoint zero, an interior candidate can lie on a count-cell facet only at an upper endpoint. Let \(A(\lambda^\star)\) be the coordinates for which \(\lambda_\ell^\star\) equals that upper endpoint. Suppose the functions \(H_b\) are differentiable at \(\lambda^\star\). If there exists \(v\in\mathbb R^{2K}\) such that
\[
v_\ell<0,
\qquad
\ell\in A(\lambda^\star),
\quad\text{and}\quad
\nabla H_b(\lambda^\star)^\top v>0,
\qquad
b\in J(\vartheta^\star),
\]
then \(\lambda^\star+t v\) lies in the interior of the same count cell and makes every active nontrivial tie strictly more extreme for all sufficiently small \(t>0\). Continuity preserves the already strict rank comparisons, so \(\vartheta^\star\) is inward-rank accessible. A convenient stronger sufficient condition is linear independence of the oriented active count-cell normals
\(
\{-e_\ell:\ell\in A(\lambda^\star)\}
\)
and the active rank-tie gradients
\(
\{\nabla H_b(\lambda^\star):b\in J(\vartheta^\star)\}.
\)
Indeed, if these vectors are the rows of a full-row-rank matrix \(M\), one may solve \(Mv=\mathbf 1\), which gives all the required strict inequalities. This linear-independence condition is sufficient rather than necessary; the primitive requirement is simply existence of an inward rank-improving direction.

Figure~\ref{fig:MC-accessibility} summarizes the geometric roles of
Lemma~\ref{lem:fixed-count-cells} and Proposition~\ref{prop:inward-rank-closure}

\begin{figure}[H]
\centering

\begin{tikzpicture}[
    x=1cm,
    y=1cm,
    >=Latex,
    every node/.style={font=\footnotesize}
]


\node[font=\small\bfseries] at (3.0,4.55)
{Panel A: Fixed-count cell};

\draw[thick] (0,0) rectangle (6.0,4.1);

\node[anchor=north west] at (0.18,3.92)
{Stick-breaking space $[0,1]^{2K}$};

\fill[gray!15] (1.05,0.75) rectangle (4.95,3.15);
\draw[thick] (1.05,0.75) rectangle (4.95,3.15);

\node at (3.0,2.78)
{fixed-count cell $S(\lambda^\star)$};

\fill (3.72,1.72) circle (1.8pt);
\node[anchor=south] at (3.72,1.84)
{$\lambda^m$};

\fill (4.95,1.72) circle (1.8pt);
\node[anchor=west] at (5.05,1.72)
{$\lambda^\star$};

\draw[-{Latex[length=2.3mm]},thick,dashed]
      (3.86,1.72) -- (4.80,1.72);

\node at (4.34,1.28)
{$\lambda^m\to\lambda^\star$};


\begin{scope}[xshift=7.0cm]

\node[font=\small\bfseries] at (3.0,4.55)
{Panel B: Inward-rank accessibility};

\draw[thick] (0,0) rectangle (6.0,4.1);

\node[anchor=north west] at (0.18,3.92)
{Observable-parameter space $\Theta$};

\filldraw[
    fill=gray!15,
    draw=black,
    thick,
    rounded corners=11pt
]
(0.95,0.78)
-- (4.75,0.78)
.. controls (5.25,0.82) and (5.40,1.30) .. (5.18,1.72)
.. controls (4.93,2.10) and (5.42,2.62) .. (4.85,3.12)
-- (1.35,3.12)
.. controls (0.85,2.98) and (0.70,2.40) .. (0.82,1.98)
.. controls (0.64,1.48) and (0.70,1.02) .. (0.95,0.78)
-- cycle;

\node at (2.75,2.72)
{$C^{MC}_{n,B}$};

\fill (3.75,1.68) circle (1.8pt);
\node[anchor=south] at (3.75,1.80)
{$\vartheta^m$};

\fill (5.12,1.90) circle (1.8pt);
\node[anchor=west] at (5.22,1.90)
{$\vartheta^\star$};

\draw[-{Latex[length=2.3mm]},thick,dashed]
      (3.90,1.70) -- (4.96,1.87);

\node at (4.42,1.27)
{$\vartheta^m\to\vartheta^\star$};

\end{scope}


\node[
    align=center,
    text width=12.5cm,
    font=\footnotesize
] at (6.5,-0.55)
{Lemma~C.1 provides same-count interior approximation;
inward-rank accessibility strengthens this to interior
\emph{accepted} approximation.};

\end{tikzpicture}

\caption{Geometry underlying Proposition~C.1.
Panel A illustrates Lemma~C.1: a candidate
$\lambda^\star$ can be approximated by points
$\lambda^m$ in the interior of the same fixed-count cell, so the complete
bank of simulated counts is unchanged.
Panel B illustrates the additional role of inward-rank accessibility:
an accepted candidate $\vartheta^\star$ can be approximated by relatively
interior accepted candidates $\vartheta^m$, yielding
$C^{MC}_{n,B}\subseteq
\operatorname{cl}_{\Theta}
\bigl(\operatorname{int}_{\Theta}C^{MC}_{n,B}\bigr)$.}
\label{fig:MC-accessibility}
\end{figure}

\subsection{A Regular-Region Sufficient Condition for Near-Optimal Accessibility}
\label{subsec:regular-route-accessibility}

For a reported outer objective \(h\), say that a candidate is \emph{\(h\)-regular} if it is MC-regular and \(h\) is continuous at that candidate. The following lemma gives a simple route to Assumption~\ref{ass:near-optimal-accessibility}.

\begin{lemma}
\label{lem:regular-implies-accessibility}
Suppose \(\Gamma\) has full support on \(\Theta\). Consider a reported outer objective \(h\).

If \(h\) is minimized and, for every \(\varepsilon>0\), there exists an \(h\)-regular candidate \(\vartheta_\varepsilon\in\mathcal C_{n,B}^{\mathrm{MC}}\) satisfying
\(
h(\vartheta_\varepsilon)<h^\star+\varepsilon,
\)
then
\(
\Gamma\!\left(A_\varepsilon(h)\right)>0
\;
\text{for every }\varepsilon>0.
\)
The analogous conclusion holds for a maximization objective if, for every \(\varepsilon>0\), there exists an \(h\)-regular accepted candidate satisfying
\(
h(\vartheta_\varepsilon)>h^\star-\varepsilon.
\)
\end{lemma}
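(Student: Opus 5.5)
The plan is to exhibit, for each $\varepsilon>0$, a nonempty relatively open subset of $\Theta$ contained in $A_\varepsilon(h)$. Full support of $\Gamma$ then gives it positive mass. I treat the minimization case first; the maximization case follows by replacing $h$ with $-h$, or by repeating the argument with reversed inequalities.

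\textbf{Step 1: locate a near-optimal regular point.} Fix $\varepsilon>0$. Apply the hypothesis with $\varepsilon/2$ in place of $\varepsilon$. This gives an $h$-regular accepted candidate $\vartheta_{\varepsilon/2}\in\mathcal C_{n,B}^{\mathrm{MC}}$ with $h(\vartheta_{\varepsilon/2})<h^\star+\varepsilon/2$.

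\textbf{Step 2: use MC-regularity.} By MC-regularity there is a nonempty relatively open neighborhood $U$ of $\vartheta_{\varepsilon/2}$ in $\Theta$ with $U\subseteq\mathcal C_{n,B}^{\mathrm{MC}}$.

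\textbf{Step 3: use continuity of $h$.} Since $h$ is continuous at $\vartheta_{\varepsilon/2}$, there is a relatively open neighborhood $V$ of $\vartheta_{\varepsilon/2}$ such that $h(\vartheta)<h(\vartheta_{\varepsilon/2})+\varepsilon/2<h^\star+\varepsilon$ for every $\vartheta\in V$.

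\textbf{Step 4: conclude.} The set $W:=U\cap V$ is relatively open in $\Theta$ and contains $\vartheta_{\varepsilon/2}$, so it is nonempty. Every point of $W$ is accepted and satisfies $h<h^\star+\varepsilon$, so $W\subseteq A_\varepsilon(h)$. Full support of $\Gamma$ on $\Theta$ means every nonempty relatively open subset of $\Theta$ has positive $\Gamma$-measure. Hence $\Gamma(A_\varepsilon(h))\ge\Gamma(W)>0$.

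If $A_\varepsilon(h)$ might fail to be Borel, the same argument applies to the open set $W$ and yields positive inner measure. This is the quantity that matters for the hitting argument, because proposals falling in $W$ already lie in $A_\varepsilon(h)$.

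The only step needing care is the topology. MC-regularity and full support must both be read in the relative topology of $\Theta$, which is a compact product of simplices with boundary. Otherwise a neighborhood of a boundary point could wrongly be judged null or not open. Since the paper defines both notions relative to $\Theta$, the two fit together directly, and I do not expect a genuine obstacle.
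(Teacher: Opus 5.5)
Your proof is correct and follows the paper's argument step for step: the paper also invokes the hypothesis at $\varepsilon/2$, intersects the MC-regularity neighborhood with a continuity neighborhood of $h$ to get a nonempty relatively open subset of $A_\varepsilon(h)$, and concludes by full support of $\Gamma$. Your inner-measure remark is harmless but unnecessary, since the paper separately establishes that $A_\varepsilon(h)$ is Borel for every reported objective.
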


\begin{proof}
Consider minimization and fix \(\varepsilon>0\). Apply the stated condition with \(\varepsilon/2\) to obtain an \(h\)-regular accepted candidate \(\vartheta_{\varepsilon/2}\) such that
\(
h(\vartheta_{\varepsilon/2})<h^\star+\frac{\varepsilon}{2}.
\)
MC regularity gives a relatively open neighborhood \(U_1\) of \(\vartheta_{\varepsilon/2}\) contained in \(\mathcal C_{n,B}^{\mathrm{MC}}\). Continuity of \(h\) at \(\vartheta_{\varepsilon/2}\) gives a relatively open neighborhood \(U_2\) such that
\(
h(\vartheta)<h^\star+\varepsilon
\;
\text{for every }\vartheta\in U_2.
\)
Thus \(U_1\cap U_2\) is a nonempty relatively open subset of \(A_\varepsilon(h)\). Full support of \(\Gamma\) implies
\[
\Gamma\!\left(A_\varepsilon(h)\right)
\geq
\Gamma(U_1\cap U_2)
>0.
\]
The maximization case is identical with the inequalities reversed.
\end{proof}

For the discrepancy projections, Proposition~\ref{prop:endpoint-correspondences} establishes that \(\underline D(\vartheta)\) and \(\overline D(\vartheta)\) are continuous on \(\Theta\). Hence the regularity requirement in Lemma~\ref{lem:regular-implies-accessibility} reduces to the existence of arbitrarily near-optimal MC-regular accepted candidates. For the endpoint-conditioned benchmark functions \(\underline\pi^e_{ij}(\vartheta)\) and \(\overline\pi^e_{ij}(\vartheta)\), Appendix~\ref{appendix:projection-measurability} establishes measurability and attainment of the conditional cell extrema. Local continuity is an additional sufficient property: for example, it holds on neighborhoods on which the relevant endpoint branch, sign configuration, and optimal linear-programming basis remain valid. These regularity observations provide sufficient routes to Near-optimal accessibility; they are not assumptions required by Proposition~\ref{prop:persistent-global-convergence} itself.

\subsection{A Geometric Sufficient Condition for the Discrepancy Projections}
\label{subsec:D-geometric-accessibility}

For the discrepancy projections, continuity permits a particularly simple sufficient condition for Near-optimal accessibility. Proposition~\ref{prop:inward-rank-closure} gives a primitive condition under which the geometric property used below is satisfied. Fix the realized Monte Carlo confidence region and write
\(
\mathcal C
=
\mathcal C_{n,B}^{\mathrm{MC}}.
\)
Throughout this subsection, interior and closure are taken relative to the observable-parameter space \(\Theta\).

\begin{proposition}[Geometric accessibility of the discrepancy projections]
\label{prop:D-closure-accessibility}
Suppose that \(\mathcal C\) is nonempty, that the global proposal measure
\(\Gamma\) has full support on \(\Theta\), and that
\begin{equation}
\label{eq:MC-closure-interior}
\mathcal C
\subseteq
\operatorname{cl}_{\Theta}
\left(
\operatorname{int}_{\Theta}\mathcal C
\right).
\end{equation}
Then Assumption~\ref{ass:near-optimal-accessibility} holds for both
discrepancy objectives \(h=\underline D\) and
\(h=\overline D\).

Consequently,
\(
\inf_{\vartheta\in\mathcal C}\underline D(\vartheta)
=
\inf_{\vartheta\in\operatorname{int}_{\Theta}\mathcal C}
\underline D(\vartheta)
\)
and
\(
\sup_{\vartheta\in\mathcal C}\overline D(\vartheta)
=
\sup_{\vartheta\in\operatorname{int}_{\Theta}\mathcal C}
\overline D(\vartheta),
\)
and every \(\varepsilon\)-optimal feasible set for either discrepancy
projection has strictly positive \(\Gamma\)-probability.
\end{proposition}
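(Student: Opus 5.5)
The plan is to reduce the statement to Lemma~\ref{lem:regular-implies-accessibility}, using continuity of the discrepancy endpoints together with the closure condition \eqref{eq:MC-closure-interior}. I treat the minimization objective \(h=\underline D\); the case \(h=\overline D\) is symmetric. By Proposition~\ref{prop:endpoint-correspondences}, \(\underline D\) and \(\overline D\) are continuous on \(\Theta\). They are also bounded, since \(0\le D\le K-1\). Hence \(h^\star=\inf_{\mathcal C}\underline D\) is finite because \(\mathcal C\) is nonempty. The first step is to show that every point of \(\operatorname{int}_\Theta\mathcal C\) is \(h\)-regular. Such a point has a relatively open neighborhood contained in \(\mathcal C\), so it is MC-regular by definition. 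Continuity of \(h\) at that point then comes from Proposition~\ref{prop:endpoint-correspondences}.

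The second step produces near-optimal regular candidates. Fix \(\varepsilon>0\) and choose \(\vartheta'\in\mathcal C\) with \(h(\vartheta')<h^\star+\varepsilon/2\). By \eqref{eq:MC-closure-interior}, there is a sequence \(\vartheta^m\in\operatorname{int}_\Theta\mathcal C\) with \(\vartheta^m\to\vartheta'\). Continuity of \(h\) gives \(h(\vartheta^m)<h^\star+\varepsilon\) for large \(m\). Such a \(\vartheta^m\) is an \(h\)-regular accepted candidate that is \(\varepsilon\)-optimal. Lemma~\ref{lem:regular-implies-accessibility}, which requires full support of \(\Gamma\), then yields \(\Gamma(A_\varepsilon(h))>0\) for every \(\varepsilon>0\). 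This is Assumption~\ref{ass:near-optimal-accessibility} for \(h\). The maximization case uses the same argument with the inequalities reversed.

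For the equality of infima, one direction is immediate: \(\operatorname{int}_\Theta\mathcal C\subseteq\mathcal C\) gives \(\inf_{\mathcal C}\underline D\le\inf_{\operatorname{int}\mathcal C}\underline D\). For the reverse direction, the second step shows that for each \(\varepsilon\) some interior point has value below \(h^\star+\varepsilon\). Letting \(\varepsilon\downarrow0\) gives equality. The supremum statement follows in the same way. The positive-probability claim for \(\varepsilon\)-optimal feasible sets is exactly \(\Gamma(A_\varepsilon(h))>0\), which was already established.

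There is no substantial obstacle. The only point needing care is that interior and closure are taken relative to \(\Theta\), and that MC-regularity is defined in that same relative topology, so the interior points qualify directly. One should also note explicitly that continuity of \(\underline D\) and \(\overline D\) holds on all of \(\Theta\), including boundary faces. This is what lets the approximation pass from an arbitrary accepted near-optimizer to interior points without any loss in the objective.
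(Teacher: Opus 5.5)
Your proposal is correct and follows the paper's proof. Both arguments approximate an accepted near-optimizer by a point of \(\operatorname{int}_\Theta\mathcal C\), using \eqref{eq:MC-closure-interior} and the continuity of \(\underline D,\overline D\); they then intersect an accepted neighborhood with a continuity neighborhood to get a nonempty relatively open subset of \(A_\varepsilon(h)\), which has positive \(\Gamma\)-mass by full support. The only difference is packaging: you cite Lemma~\ref{lem:regular-implies-accessibility} after noting that interior points are \(h\)-regular, whereas the paper writes out that lemma's neighborhood argument directly (with \(\varepsilon/4\) and \(\varepsilon/2\) in place of your \(\varepsilon/2\)).
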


\begin{proof}
By Proposition~\ref{prop:endpoint-correspondences},
\(\underline D\) and \(\overline D\) are continuous on \(\Theta\).

Consider first the lower discrepancy projection and write
\(
d_L^\star
=
\inf_{\vartheta\in\mathcal C}\underline D(\vartheta).
\)
Fix \(\varepsilon>0\). By the definition of the infimum, there exists
\(\vartheta^\star\in\mathcal C\) such that
\(
\underline D(\vartheta^\star)
<
d_L^\star+\frac{\varepsilon}{4}.
\)
Condition~\eqref{eq:MC-closure-interior} implies that
\(\vartheta^\star\) can be approximated by points in
\(\operatorname{int}_{\Theta}\mathcal C\). Continuity of
\(\underline D\) at \(\vartheta^\star\) therefore allows us to choose
\(
\widetilde\vartheta
\in
\operatorname{int}_{\Theta}\mathcal C
\)
sufficiently close to \(\vartheta^\star\) that
\(
\underline D(\widetilde\vartheta)
<
\underline D(\vartheta^\star)
+\frac{\varepsilon}{4}
<
d_L^\star+\frac{\varepsilon}{2}.
\)

Because \(\widetilde\vartheta\) is an interior point of \(\mathcal C\)
relative to \(\Theta\), there exists a relatively open neighborhood
\(U_1\) of \(\widetilde\vartheta\) such that
\(
U_1\subseteq\mathcal C.
\)
Continuity of \(\underline D\) at \(\widetilde\vartheta\) gives a
relatively open neighborhood \(U_2\) such that
\(
\underline D(\vartheta)
<
d_L^\star+\varepsilon
\;
\text{for every }\vartheta\in U_2.
\)
Consequently,
\(
\varnothing
\neq
U_1\cap U_2
\subseteq
A_\varepsilon(\underline D).
\)
Since \(\Gamma\) has full support on \(\Theta\),
\(
\Gamma\!\left(A_\varepsilon(\underline D)\right)
\geq
\Gamma(U_1\cap U_2)
>0.
\)
As \(\varepsilon>0\) was arbitrary,
Assumption~\ref{ass:near-optimal-accessibility} holds for
\(h=\underline D\).

For the upper discrepancy projection, write
\(
d_U^\star
=
\sup_{\vartheta\in\mathcal C}\overline D(\vartheta).
\)
Given \(\varepsilon>0\), choose
\(\vartheta^\star\in\mathcal C\) such that
\(
\overline D(\vartheta^\star)
>
d_U^\star-\frac{\varepsilon}{4}.
\)
Condition~\eqref{eq:MC-closure-interior} and continuity of
\(\overline D\) give
\(\widetilde\vartheta\in
\operatorname{int}_{\Theta}\mathcal C\) satisfying
\(
\overline D(\widetilde\vartheta)
>
d_U^\star-\frac{\varepsilon}{2}.
\)
Using interiority of \(\widetilde\vartheta\) relative to \(\mathcal C\)
and continuity once more yields a nonempty relatively open set
\(
U
\subseteq
\left\{
\vartheta\in\mathcal C:
\overline D(\vartheta)>d_U^\star-\varepsilon
\right\}
=
A_\varepsilon(\overline D).
\)
Full support of \(\Gamma\) therefore implies
\(
\Gamma\!\left(A_\varepsilon(\overline D)\right)>0.
\)
Thus Near-optimal accessibility also holds for
\(h=\overline D\).

Finally, the two equalities between extrema over
\(\mathcal C\) and over
\(\operatorname{int}_{\Theta}\mathcal C\) follow directly from
\eqref{eq:MC-closure-interior} and continuity: every point of
\(\mathcal C\) is the limit of a sequence of interior points whose
discrepancy values converge to its discrepancy value.
\end{proof}

Condition~\eqref{eq:MC-closure-interior} has a direct interpretation:
every accepted parameter can be approximated arbitrarily closely by
parameters lying in a relatively open accepted region. It therefore excludes
accepted lower-dimensional strata that are isolated from all such regions.
The condition is sufficient rather than necessary; Near-optimal
accessibility requires only positive global-proposal probability for the
near-optimal feasible sets themselves. Because the topology is relative to
\(\Theta\), condition~\eqref{eq:MC-closure-interior} is not an interiority
restriction on the true observable parameter and allows probability vectors
on the boundary of the simplex.

\subsection{Persistent Global Exploration}
\label{subsec:appendix-persistent-exploration}
\begin{proposition}[Borel measurability of the realized acceptance region]
Fix the observed data and the Monte Carlo auxiliary draws. Then the mapping
\(
\vartheta\longmapsto p_{n,B}(\vartheta)
\)
is Borel measurable on $\Theta$. Consequently,
\(
\mathcal C^{MC}_{n,B}(1-\alpha)
=
\{\vartheta\in\Theta:p_{n,B}(\vartheta)>\alpha\}
\)
is a Borel subset of $\Theta$.
\end{proposition}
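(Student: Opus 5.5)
I would prove Borel measurability by writing $p_{n,B}$ as a finite composition of Borel maps of $\vartheta$, with the observed counts, the simulation uniforms $U_k^{(b)}$, and the tie-breakers $U_b^R$ held fixed. Since
\[
p_{n,B}(\vartheta)=\frac{1+\sum_{b=1}^B I_b(\vartheta)}{B+1},
\]
and finite sums of Borel functions are Borel, it suffices to show that each indicator $I_b$ is Borel. Each $I_b$ is assembled from the events $\{T_n^{(b)}(\vartheta)>T_n^{(0)}(\vartheta)\}$ and $\{T_n^{(b)}(\vartheta)=T_n^{(0)}(\vartheta)\}$, multiplied by the constant $\mathbf 1\{U_b^R>U_0^R\}$. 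So the task reduces to showing that each $T_n^{(b)}$, for $b=0,\ldots,B$, is a Borel map $\Theta\to[0,+\infty]$. The comparison events are then Borel, because the order relations on the extended reals are closed sets and we are taking preimages under Borel maps into $[0,\infty]^2$.

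\textbf{Step 1: the observed statistic.} By Proposition~\ref{Prop-ELR}, $T_n^{(0)}(\vartheta)=2n_XD_{\mathrm{KL}}(\widehat{\mathbf q}_X\Vert\mathbf q_X)+2n_YD_{\mathrm{KL}}(\widehat{\mathbf q}_Y\Vert\mathbf q_Y)$ with $\widehat{\mathbf q}$ fixed. Each summand $a_j\log(a_j/q_j)$ is either identically zero (when $a_j=0$) or the map $q_j\mapsto a_j\log a_j-a_j\log q_j$. The latter is continuous from $[0,1]$ into $(-\infty,+\infty]$, taking the value $+\infty$ at $q_j=0$. A finite sum of such terms, each bounded below, is Borel into $(-\infty,+\infty]$.

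\textbf{Step 2: the simulated counts.} This is the main step. I would show that $\vartheta\mapsto N_X^{(b)}(\vartheta)$ is Borel, and likewise for $Y$, by induction along the sequential conditional-binomial construction of Appendix~\ref{appendix:MC-simulation}. For the inductive step, suppose $(N_0,\ldots,N_{k-1})$ is Borel. It takes finitely many values, so $\Theta$ splits into finitely many Borel pieces on which $m_k$ is constant. On each piece, $r_k=1-\sum_{\ell<k}q_\ell$ is continuous, and the sets $\{r_k=0\}$ and $\{r_k>0\}$ are Borel. On $\{r_k>0\}$, the conditional probability $p=q_k/r_k$ is continuous. The count is then
\[
N_k=\min\{z\in\{0,\ldots,m_k\}:F_{m_k,z}(p)\ge U_k^{(b)}\},
\]
so $\{N_k\le z\}=\{F_{m_k,z}(p)\ge U_k^{(b)}\}$. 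Since $p\mapsto F_{m,z}(p)$ is a polynomial and hence continuous, this set is closed within the piece and therefore Borel. On $\{r_k=0\}$, the count is the constant $0$. Gluing finitely many Borel pieces gives a Borel $N_k$, and $N_K=n-\sum_{k<K}N_k$ is then Borel as well. It follows that $\widehat{\mathbf q}^{(b)}(\vartheta)$ is a Borel simple function of $\vartheta$.

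\textbf{Step 3: the simulated statistic.} On each of the finitely many Borel sets where the simulated count bank is fixed, $T_n^{(b)}$ has exactly the form treated in Step 1, with a fixed empirical vector, and is therefore Borel. A finite Borel partition then makes $T_n^{(b)}$ Borel on all of $\Theta$. Combining Steps 1–3 with the first paragraph, $p_{n,B}$ is Borel. Finally, $\mathcal C^{MC}_{n,B}(1-\alpha)=p_{n,B}^{-1}((\alpha,\infty))$ is the preimage of an open set under a Borel map, and so is a Borel subset of $\Theta$.

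The main obstacle is Step 2. It needs careful handling of the boundary cases: $r_k=0$, the extended-valued statistic taking the value $+\infty$, and the fact that the piecewise recursion depends on earlier counts. The finite-partition argument handles all three.
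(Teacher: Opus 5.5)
Your proposal is correct and follows the same route as the paper's proof. Both arguments show that the sequential inverse-binomial counts are Borel because they are defined by inequalities in the continuous binomial distribution functions, and both then pass through the extended-real likelihood-ratio statistics, the comparison sets, the rank indicators, $p_{n,B}$, and finally its superlevel set. Your version is more explicit about the finite Borel partition induced by the earlier counts and about the $r_k=0$ and $+\infty$ cases. One trivial aside: the strict-order set $\{x>y\}$ is open rather than closed, but it is still Borel, so nothing changes.
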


\begin{proof}
For each Monte Carlo replication, the sequential inverse-binomial
construction maps the candidate probability vector into a
finite-valued simulated count vector. At each sequential step, the
event that a particular count $z$ is generated can be written as the inequalities
\(
F_{m,z-1}(p)<u\leq F_{m,z}(p),
\)
where $u$ is the fixed auxiliary uniform and $F_{m,z}$ is the binomial
distribution function evaluated at $z$. Since $F_{m,z}(p)$ is
continuous in $p$, these sets are Borel. Iterating over the finitely
many sequential steps therefore shows that each simulated count vector,
and hence each simulated empirical probability vector, is a Borel
function of $\vartheta$.

For every $b=0,\ldots,B$, the likelihood-ratio statistic
$T_n^{(b)}(\vartheta)$ is consequently Borel measurable, using the
usual extended-real convention for the Kullback--Leibler divergence.
It follows that the comparison sets
\(
\{T_n^{(b)}(\vartheta)>T_n^{(0)}(\vartheta)\}
\)
and
\(
\{T_n^{(b)}(\vartheta)=T_n^{(0)}(\vartheta)\}
\)
are Borel. Since the randomized tie breakers are fixed, each rank
indicator $I_b(\vartheta)$ is Borel measurable. Therefore
\(
p_{n,B}(\vartheta)
=
\frac{1+\sum_{b=1}^B I_b(\vartheta)}{B+1}
\)
is Borel measurable. The stated confidence region is the inverse image
of $(\alpha,1]$ under this mapping and is therefore Borel.
\end{proof}

Let \(\mathcal F^{\mathrm{alg}}_{m-1}\) denote the algorithmic search history through proposal \(m-1\). Under \eqref{eq:mixture-proposal},
\(
\mathbb P_{\mathrm{alg}}
\!\left(
\vartheta_m^{\mathrm{prop}}\in A
\mid
\mathcal F^{\mathrm{alg}}_{m-1}
\right)
\geq
\rho\,\Gamma(A)
\)
for every Borel set \(A\subseteq\Theta\). No independence assumption is required for the local proposal sequence; it may depend on the complete algorithmic history.

The reported outer objectives are Borel measurable: the discrepancy
endpoint functions are continuous by Proposition~\ref{prop:endpoint-correspondences}, while the
endpoint-conditioned cell-extremal functions are measurable by
Proposition~\ref{prop:cell-bounds-measurable}. Hence, for every reported objective $h$ and every
$\varepsilon>0$, the near-optimal feasible set $A_\varepsilon(h)$ is
Borel measurable.

\begin{lemma}\label{lem:persistent-hit}
Fix the realization of $\mathcal G_{n,B}$. If
$A\subseteq\Theta$ is Borel measurable and $\Gamma(A)>0$, then, for
every deterministic $M\geq 0$ and $N\geq 1$,
\(
\mathbb P_{\mathrm{alg}}
\left(
\bigcap_{m=M+1}^{M+N}
\{\vartheta_m^{\mathrm{prop}}\notin A\}
\right)
\leq
\{1-\rho\Gamma(A)\}^N,
\)
holds.
Consequently,
\(
\mathbb P_{\mathrm{alg}}
\left(
\vartheta_m^{\mathrm{prop}}\in A
\ \text{infinitely often}
\right)
=1.
\)
\end{lemma}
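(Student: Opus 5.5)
The plan is to prove the finite-horizon bound by conditioning on the algorithmic history and peeling off one proposal at a time. Then the ``infinitely often'' conclusion follows by a Borel--Cantelli type argument over disjoint blocks of time.

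For the first step, fix $M\geq 0$ and write $E_N:=\bigcap_{m=M+1}^{M+N}\{\vartheta_m^{\mathrm{prop}}\notin A\}$, with $E_0$ the whole sample space. Each event $\{\vartheta_m^{\mathrm{prop}}\notin A\}$ is $\mathcal F^{\mathrm{alg}}_{m}$-measurable, so $E_{N-1}\in\mathcal F^{\mathrm{alg}}_{M+N-1}$. Using the minorization
\(
\mathbb P_{\mathrm{alg}}(\vartheta_m^{\mathrm{prop}}\in A\mid\mathcal F^{\mathrm{alg}}_{m-1})\geq\rho\,\Gamma(A)
\)
stated just before the lemma, which follows from \eqref{eq:mixture-proposal}, I will write
\[
\mathbb P_{\mathrm{alg}}(E_N)
=\mathbb E_{\mathrm{alg}}\bigl[\mathbf 1_{E_{N-1}}\,
\mathbb P_{\mathrm{alg}}(\vartheta_{M+N}^{\mathrm{prop}}\notin A\mid\mathcal F^{\mathrm{alg}}_{M+N-1})\bigr]
\leq\{1-\rho\Gamma(A)\}\,\mathbb P_{\mathrm{alg}}(E_{N-1}).
\]
Induction on $N$ then gives the stated bound $\{1-\rho\Gamma(A)\}^N$. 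No independence of the local proposals is used, only the conditional lower bound. This is why arbitrary history dependence in $Q_m^{\mathrm{loc}}$ is harmless. Borel measurability of $A$ is needed so that the conditional probabilities are defined; $\Gamma(A)>0$ and $\rho>0$ make the factor strictly less than one.

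For the second step, I will show that the event ``only finitely many hits'' is the countable union over $M$ of the events $\{\vartheta_m^{\mathrm{prop}}\notin A\text{ for all }m>M\}$. For each fixed $M$, that event is contained in $E_N$ for every $N$. Its probability is therefore at most $\inf_N\{1-\rho\Gamma(A)\}^N=0$. A countable union of null events is null, so $\mathbb P_{\mathrm{alg}}(\vartheta_m^{\mathrm{prop}}\in A\text{ i.o.})=1$.

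The only delicate point is the conditioning step. The paper's formulation conditions on $\mathcal F^{\mathrm{alg}}_{m-1}$ for a fixed realization of $\mathcal G_{n,B}$, so I must make sure that the minorization holds almost surely as an inequality of conditional probabilities and that $E_{N-1}$ is measurable with respect to the conditioning $\sigma$-field. I will handle this by taking $\mathcal F^{\mathrm{alg}}_{m}$ to include $\vartheta_1^{\mathrm{prop}},\dots,\vartheta_m^{\mathrm{prop}}$, which is implicit in ``search history through proposal $m$''. Once that is fixed, the rest is routine.
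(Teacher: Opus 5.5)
Your proof is correct and follows the paper's argument essentially step for step. The paper also iterates the conditional minorization $\mathbb P_{\mathrm{alg}}(\vartheta_m^{\mathrm{prop}}\notin A\mid\mathcal F^{\mathrm{alg}}_{m-1})\leq 1-\rho\Gamma(A)$ to obtain the geometric bound, then writes the complement of the infinitely-often event as $\bigcup_{M}\bigcap_{m>M}\{\vartheta_m^{\mathrm{prop}}\notin A\}$ and shows each tail intersection is null. The differences are cosmetic: the paper uses continuity from above where you use containment in $E_N$, your tower-property induction (with $E_{N-1}\in\mathcal F^{\mathrm{alg}}_{M+N-1}$) just spells out the paper's ``repeated conditioning,'' and despite your opening sentence, no disjoint-block Borel--Cantelli argument is actually needed or used.
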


\begin{proof}
For every $m$,
\(
\mathbb P_{\mathrm{alg}}
\left(
\vartheta_m^{\mathrm{prop}}\notin A
\mid
\mathcal F^{\mathrm{alg}}_{m-1}
\right)
\leq
1-\rho\Gamma(A).
\)
Repeated conditioning therefore gives, for every deterministic
$M\geq0$ and $N\geq1$,
\[
\mathbb P_{\mathrm{alg}}
\left(
\bigcap_{m=M+1}^{M+N}
\{\vartheta_m^{\mathrm{prop}}\notin A\}
\right)
\leq
\{1-\rho\Gamma(A)\}^N.
\]
Letting $N\to\infty$ and using continuity of probability from above,
\[
\mathbb P_{\mathrm{alg}}
\left(
\bigcap_{m=M+1}^{\infty}
\{\vartheta_m^{\mathrm{prop}}\notin A\}
\right)
=0
\]
for every $M\geq0$.

Now
\(
\left\{
\vartheta_m^{\mathrm{prop}}\in A
\ \text{infinitely often}
\right\}^{c}
=
\bigcup_{M=0}^{\infty}
\bigcap_{m=M+1}^{\infty}
\{\vartheta_m^{\mathrm{prop}}\notin A\}.
\)
The right-hand side is a countable union of probability-zero events.
Hence
\(
\mathbb P_{\mathrm{alg}}
\left(
\vartheta_m^{\mathrm{prop}}\in A
\ \text{infinitely often}
\right)
=1.
\)
\end{proof}

Assumption~\ref{ass:near-optimal-accessibility} applies Lemma~\ref{lem:persistent-hit} directly to the \(\varepsilon\)-optimal feasible sets \(A_\varepsilon(h)\). Hence continuity of the outer objective, convexity or smoothness of the Monte Carlo confidence region, and local regularity of the acceptance rule are not required for Proposition~\ref{prop:persistent-global-convergence}. The preceding regularity and geometric conditions are instead convenient sufficient conditions for verifying Near-optimal accessibility in the projection problems considered here. 

\subsection{Sensitivity to the Number of Monte Carlo Replications}
\label{appendix:MC-B-sensitivity}

The number of Monte Carlo replications $B$ plays a different role from the
computational budget used to solve the outer projection problems. As discussed
in Section~\ref{subsec:MC-inference}, finite-sample validity of the randomized
Monte Carlo test does not require $B$ to be large. At $\alpha=0.05$, both
$B=999$ and $B=9999$ satisfy
\(
\alpha(B+1)\in\mathbb N,
\)
so both yield the exact finite-sample calibration of
Theorem~\ref{thm:MC-test}. Increasing $B$ instead affects power and the
probability that the randomized Monte Carlo decision differs from that of the
infeasible fundamental test; away from the boundary at which the fundamental
p-value equals the nominal level, this disagreement probability vanishes as
$B$ increases \citep[Section~3]{DUFOUR2006}. Thus larger $B$ can improve power
and concordance at greater computational cost.

To assess the sensitivity of the projected discrepancy, I repeated the
calculation with $B=9999$. The main analysis with $B=999$ gives
\(
[0.2400862378,\,0.4767587822].
\)
The $B=9999$ upper endpoint was $0.4778451191$. The initial lower-endpoint
search produced $0.2405950120$ but failed the search-stability diagnostic, so
I conducted 39 additional simulated-annealing restarts. The refined lower
endpoint was $0.2400263098$, with Monte Carlo p-value $0.0501$, and satisfied
the stability diagnostic. The resulting interval is therefore
\(
[0.2400263098,\,0.4778451191].
\)
Table~\ref{tab:B-sensitivity-D} summarizes the comparison.

\begin{table}[htbp]
\centering
\caption{Sensitivity of the projected discrepancy to the number of
Monte Carlo replications}
\label{tab:B-sensitivity-D}
\begin{tabular}{lccc}
\hline\hline
$B$
& Lower endpoint
& Upper endpoint
& Width
\\
\hline
999
& 0.2400862
& 0.4767588
& 0.2366725
\\
9999
& 0.2400263
& 0.4778451
& 0.2378188
\\
\hline
Difference
& $-0.0000599$
& 0.0010863
& 0.0011463
\\
\hline\hline
\end{tabular}
\end{table}

The effect of increasing $B$ tenfold is small relative to the scale of the
projected interval. On the normalized scale $D/(K-1)=D/3$, the endpoint
changes correspond in absolute value to approximately 0.002 and 0.036
percentage points of the maximum possible average rank displacement. Most
importantly, the lower projection remains well separated from zero and the
range of minimum restructuring allowed by sampling uncertainty and unrestricted
item nonresponse is essentially unchanged.

This comparison is not an empirical power calculation: the Lebanon data are
held fixed and only the Monte Carlo resolution is changed. It instead assesses
the sensitivity of the final projected object to a substantial increase in
$B$, motivated by the power and concordance considerations in
\citet[Section~3]{DUFOUR2006}.

The higher-resolution calculation is computationally expensive. The original
$B=9999$ projection used 13 parallel workers and required approximately
36.6 hours of wall-clock time; the targeted lower-endpoint refinement required
a further 45.1 hours using the same number of workers. Each of the 39 restarts
allowed up to 36,000 objective evaluations. Although the refined search
satisfied the stability diagnostic, it remains a finite-run approximation
rather than a certificate of global optimality.

The endpoint-conditioned benchmark analysis is considerably more costly,
because lower and upper bounds are projected for each of 16 cells at both
discrepancy endpoints. I therefore retain $B=999$ for the benchmark projections
in the main text. This choice preserves exact finite-sample calibration at the
5\% level and keeps the aggregate and cellwise analyses on a common Monte
Carlo specification. The $B=9999$ exercise shows that the primary scalar
discrepancy is highly insensitive to a tenfold increase in Monte Carlo
resolution, but it does not establish the same insensitivity for every
endpoint-conditioned cell bound.

\section{Further Discussion}
\label{appendix:further-discussion}
\subsection{Measurement Geometry and Maximal Reassignment Benchmarks}
\label{appendix:maximal-reassignment}

The conservative transition benchmark studied in the main text minimizes
aggregate reassignment under the ordinal geometry selected in
Section~\ref{subsec:ordinal-geometry}. To distinguish the measurement choice
from the subsequent optimization over couplings, consider the general threshold-weighted geometry on the canonical rank
support,
\[
d_w(i,j)
=
\sum_{k=\min\{i,j\}}^{\max\{i,j\}-1}w_k,
\quad\text{where}\quad i\neq j,
\]
with $w_k>0$, and define for $\pi\in\Pi(\mu,\nu)$
\(
C_w(\pi)
=
\sum_{i=1}^K\sum_{j=1}^K d_w(i,j)\pi_{ij}.
\)
Section~\ref{subsec:ordinal-geometry} shows that the corresponding probability
metric is
\(
D_w(\mu,\nu)
=
\sum_{k=1}^{K-1}w_k|F_\mu(k)-F_\nu(k)|
=
\min_{\pi\in\Pi(\mu,\nu)} C_w(\pi).
\)
Under threshold neutrality and unit normalization, $d_w(i,j)=|i-j|$ and
$D_w$ reduces to the canonical discrepancy $D$ used in the main text.

Once the ordinal geometry is fixed, one may also consider the opposite
extremum,
\(
M_w(\mu,\nu)
:=
\max_{\pi\in\Pi(\mu,\nu)} C_w(\pi),
\)
which is attained because $\Pi(\mu,\nu)$ is compact and $C_w$ is linear.
Moreover,
\(
\mathcal C_w(\mu,\nu)
:=
\{C_w(\pi):\pi\in\Pi(\mu,\nu)\}
=
[D_w(\mu,\nu),M_w(\mu,\nu)],
\)
since $\Pi(\mu,\nu)$ is convex and $C_w$ is linear. Hence, when only the two
marginals are observed, this interval is the sharp identified set for the
aggregate ordinal reassignment cost of a latent transition structure. If
$\pi^0\in\Pi(\mu,\nu)$ is the actual population transition structure, then
\(
D_w(\mu,\nu)
\leq
C_w(\pi^0)
\leq
M_w(\mu,\nu),
\)
and every intermediate value is attainable. Its width therefore measures the
ambiguity about aggregate ordinal movement left unresolved by the marginals.

The two endpoints have different conceptual status. The lower endpoint is the
probability metric characterized by the measurement principles in
Section~\ref{subsec:ordinal-geometry}; its optimal-transport representation is
a consequence of that characterization. The upper endpoint is instead an
extremal functional of the admissible couplings after the ordinal cost has
been fixed. In particular, $M_w$ is not a probability metric: generally
$M_w(\mu,\mu)>0$ even though $D_w(\mu,\mu)=0$. Thus $D_w$ measures the
restructuring required by the marginal change, whereas $M_w$ measures the
greatest restructuring the same marginals can accommodate. The latter is
informative about latent movement that repeated cross-sections cannot rule
out, but is not an alternative measure of ordinal distributional change.

The interval $[D_w(\mu,\nu),M_w(\mu,\nu)]$ concerns aggregate reassignment
cost, not individual transition probabilities. Minimizing and maximizing
plans therefore do not generally provide cellwise bounds on $\pi_{ij}$;
sharp bounds based only on the marginals are instead the classical Fr\'echet
bounds discussed in Appendix~\ref{appendix:frechet}.

Let
\(
\Pi_w^{\max}(\mu,\nu)
=
\arg\max_{\pi\in\Pi(\mu,\nu)} C_w(\pi).
\)
As with the conservative benchmark, the maximizer need not be unique. For the
canonical geometry and the four-category example in
Section~\ref{subsec:illustrative-example},
\(
\mu=(0.4,0.3,0.2,0.1),
\;
\nu=(0.2,0.3,0.3,0.2),
\)
the conservative benchmark is $D(\mu,\nu)=0.5$, whereas the maximal
reassignment benchmark is $M(\mu,\nu)=2.4$.
Figure~\ref{fig:maximal_mobility_example} displays one representative maximal
coupling. Whereas conservative benchmark plans minimize aggregate threshold
crossings, a maximizing coupling places mass across more distant categories
subject to preserving the same marginals.

\begin{figure}[htbp]
\centering
\begin{tikzpicture}
\begin{axis}[
    width=0.34\textwidth,
    height=0.34\textwidth,
    colormap={whitered}{
        color(0cm)=(white);
        color(1cm)=(red)
    },
    point meta min=0,
    point meta max=0.2,
    xmin=0.5, xmax=4.5,
    ymin=0.5, ymax=4.5,
    xtick={1,2,3,4},
    ytick={1,2,3,4},
    xlabel={Destination category $j$},
    ylabel={Origin category $i$},
    enlargelimits=false,
    axis on top,
    y dir=reverse,
    nodes near coords,
    nodes near coords style={font=\tiny, text=black},
    every node near coord/.append style={
        /pgf/number format/fixed,
        /pgf/number format/precision=1
    },
    colorbar,
    colorbar style={
        at={(1.12,0.5)},
        anchor=west,
        height=0.22\textwidth,
        width=0.10cm,
        ytick={0,0.05,0.10,0.15,0.20},
        scaled ticks=false,
        tick label style={font=\tiny}
    },
    title={Maximal reassignment coupling $\pi^{\max}$}
]
\addplot[
    matrix plot*,
    mesh/rows=4,
    mesh/cols=4,
    point meta=explicit,
] coordinates {
    (1,1) [0.0] (2,1) [0.0] (3,1) [0.2] (4,1) [0.2]
    (1,2) [0.0] (2,2) [0.0] (3,2) [0.1] (4,2) [0.2]
    (1,3) [0.1] (2,3) [0.2] (3,3) [0.0] (4,3) [0.0]
    (1,4) [0.1] (2,4) [0.1] (3,4) [0.0] (4,4) [0.0]
};
\end{axis}
\end{tikzpicture}
\caption{Representative maximal reassignment coupling for the illustrative
example.}
\label{fig:maximal_mobility_example}
\end{figure}

Under partial observation, one could similarly optimize $M_w(\mu,\nu)$ and
its maximizing plans over the identified latent marginals and apply the
projection-based inference machinery of Section~4. I do not pursue this
extension because the maximal benchmark concerns movement that the marginals
fail to rule out, rather than movement the observed marginal change makes
unavoidable.

\subsection{Panel-Data Interpretation}

Although the framework is motivated by settings in which only repeated cross-sections are observed, the
minimum- and maximum-reassignment benchmarks also provide useful reference points when population
transition data are available directly. Let $\pi^0\in\Pi(\mu_t,\mu_{t+1})$ denote the realized population
transition matrix between two periods and define its aggregate ordinal reassignment cost by
\(
C^0
=
\sum_{i=1}^K\sum_{j=1}^K
|i-j|\pi^0_{ij}.
\)
Since $D(\mu_t,\mu_{t+1})$ and $M(\mu_t,\mu_{t+1})$ are respectively the minimum and maximum of this
cost over $\Pi(\mu_t,\mu_{t+1})$, it follows immediately that
\(
D(\mu_t,\mu_{t+1})
\leq
C^0
\leq
M(\mu_t,\mu_{t+1}).
\)

The lower endpoint therefore measures the aggregate restructuring required by the marginal evolution,
whereas the upper endpoint measures the greatest aggregate restructuring compatible with the same
marginals. The gap
\(
C^0-D(\mu_t,\mu_{t+1})
\)
measures how much realized movement exceeds the minimum required restructuring. Such excess movement
may arise, for example, from offsetting transitions that are invisible in the comparison of the marginals.
At the same time,
\(
M(\mu_t,\mu_{t+1})-D(\mu_t,\mu_{t+1})
\)
measures the full range of aggregate reassignment costs left unresolved by the marginal information alone.

Whenever $M(\mu_t,\mu_{t+1})>D(\mu_t,\mu_{t+1})$, the realized transition structure can therefore be
located within this range using the normalized diagnostic
\[
S=
\{C^0-D(\mu_t,\mu_{t+1})\}/
\{M(\mu_t,\mu_{t+1})-D(\mu_t,\mu_{t+1})\}.
\]
By construction, $S\in[0,1]$. Values close to zero indicate that realized movement is close to the
least-displacement benchmark implied by the marginal evolution, whereas values close to one indicate
movement approaching the greatest aggregate reassignment compatible with those marginals. If
$M(\mu_t,\mu_{t+1})=D(\mu_t,\mu_{t+1})$, the aggregate reassignment cost is already determined by the
marginals and no normalization is needed.

This interpretation does not change the main purpose of the framework. When panel transitions are
unavailable, the conservative benchmark characterizes what restructuring the marginal evolution itself
requires, while the maximal benchmark describes the additional aggregate movement that the marginals
cannot rule out. When panel transitions are available, the realized transition structure can instead be
located relative to these two extremal reference points. In this sense, the same ordinal geometry provides
a common link between marginal distributional change and realized transition behaviour.

\subsection{Relation to Fr\'echet Classes and Extremal Dependence}
\label{appendix:frechet}

For fixed marginals $\mu$ and $\nu$, the set
\(
\Pi(\mu,\nu)
\)
is the Fr\'echet class of all joint distributions having those marginals
\citep{Frechet1935,Frechet1951}. It is determined entirely by marginal
information and is therefore independent of any ordering or metric imposed on
the outcome categories.

Classical Fr\'echet inequalities give the sharp marginal-information bounds
on an individual cell probability:
\(
\max\{0,\mu_i+\nu_j-1\}
\leq
\pi_{ij}
\leq
\min\{\mu_i,\nu_j\},
\; i,j=1,\ldots,K.
\)
These extrema are cellwise and need not be attained simultaneously by a
single coupling.

The transport benchmarks begin from the same Fr\'echet class but add a
criterion for comparing its elements. Under the threshold-additive ordinal
cost $d_w$ of Section~\ref{subsec:ordinal-geometry},
\[
\Pi_w^{\min}(\mu,\nu)
=
\arg\min_{\pi\in\Pi(\mu,\nu)} C_w(\pi),
\qquad
\Pi_w^{\max}(\mu,\nu)
=
\arg\max_{\pi\in\Pi(\mu,\nu)} C_w(\pi).
\]
Thus the marginals determine the feasible set, while the ordinal geometry
selects its least- and greatest-reassignment faces. Changing the threshold
weights leaves the Fr\'echet class and its classical cellwise bounds
unchanged, but may change the optimizing couplings.

The cell bounds studied in the main text can therefore be viewed as
Fr\'echet bounds sharpened by the global transport criterion. With known
marginals, define
\[
\underline\pi^{\min}_{ij}
=
\min_{\pi\in\Pi_w^{\min}(\mu,\nu)}\pi_{ij},
\qquad
\overline\pi^{\min}_{ij}
=
\max_{\pi\in\Pi_w^{\min}(\mu,\nu)}\pi_{ij}.
\]
Since
\(
\Pi_w^{\min}(\mu,\nu)\subseteq\Pi(\mu,\nu),
\)
\(
\max\{0,\mu_i+\nu_j-1\}
\leq
\underline\pi^{\min}_{ij}
\leq
\overline\pi^{\min}_{ij}
\leq
\min\{\mu_i,\nu_j\}.
\)
The same inclusion holds for cell bounds calculated over
$\Pi_w^{\max}(\mu,\nu)$.

The distinction is therefore one of conditioning. Classical Fr\'echet bounds
ask how large or small a joint cell probability can be given the marginals
alone. Conservative transport bounds ask the same question among couplings
that achieve the least aggregate ordinal reassignment, while maximal transport
bounds condition instead on greatest aggregate reassignment. Because these are
global optimization criteria, their cellwise extrema need not coincide with
the unrestricted Fr\'echet bounds.

Under partial observation, the same distinction remains. Missingness allows
the marginals to vary over $\mathcal M(\mathbf q_X)$ and
$\mathcal M(\mathbf q_Y)$. Unrestricted Fr\'echet-type cell bounds would
optimize over all admissible marginals and all couplings between them. The
endpoint-conditioned benchmark bounds in the main text instead restrict
attention to marginals attaining a specified discrepancy endpoint and to
transport plans attaining the corresponding minimum cost. They therefore
combine observable marginal restrictions, ordinal measurement geometry, and
global optimality.

Fr\'echet bounds and transport benchmarks are thus complementary rather than
competing notions of extremality. Fr\'echet theory describes the dependence
structures permitted by the marginals; the transport criterion ranks those
structures by aggregate ordinal reassignment and selects the corresponding
extremal faces of the Fr\'echet class.

\subsection{Binary Outcomes and Poverty Transitions}

The binary case provides a particularly transparent specialization of the
Fr\'echet-class interpretation above. Suppose the two categories are poverty
and nonpoverty, and write the marginal distributions as
$\mu=(p,1-p)$ and $\nu=(q,1-q)$, where $p$ and $q$ denote the poverty rates
in the two periods. Every coupling in $\Pi(\mu,\nu)$ can be written as
\[
\pi(x)
=
\begin{pmatrix}
x & p-x\\
q-x & 1-p-q+x
\end{pmatrix},
\qquad
\max\{0,p+q-1\}
\leq x
\leq
\min\{p,q\}.
\]
Thus the Fr\'echet class is a line segment indexed by the single transition
probability $x$.

The endpoints of this segment are the two Fr\'echet--Hoeffding extremal
couplings. Under the canonical ordinal cost, aggregate reassignment satisfies
\(
C(\pi(x))=p+q-2x.
\)
It follows that the upper Fr\'echet endpoint
$x=\min\{p,q\}$ is the unique conservative transport plan, while the lower
Fr\'echet endpoint $x=\max\{0,p+q-1\}$ is the unique maximal-reassignment
plan. Hence
\(
D(\mu,\nu)=|p-q|
\)
and
\(
M(\mu,\nu)=1-|p+q-1|.
\)

In this binary setting, therefore, the extremal Fr\'echet couplings coincide
with the minimum- and maximum-reassignment transport plans. The distinction
between marginal feasibility and transport selection remains conceptually
useful, but the geometry collapses to the two endpoints of a one-dimensional
feasible set.

The poverty-transition interpretation is immediate. If poverty falls, so that
$p>q$, the conservative plan assigns mass $p-q$ to exits from poverty and
zero mass to entries into poverty. If poverty rises, the analogous statement
holds with the direction reversed. The maximal plan instead assigns as much
mass as possible to movements between poverty and nonpoverty while preserving
the same marginal poverty rates.

The binary case therefore gives an unusually sharp link between repeated
cross-sectional marginals and admissible transition structures. The full
Fr\'echet class is the line segment joining the conservative and maximal
benchmark plans, and every feasible transition matrix lies between these two
extremes.

\subsection{Dependence, Gini Mean Differences, and Maximal Reassignment}
\label{appendix:gini-dependence}

As noted in Remark~\ref{Remark: Gini}, the Kantorovich metric with
absolute-distance cost has historically also been called Gini's index of
dissimilarity. A distinct connection with Gini arises by evaluating the same
absolute-difference loss under other couplings in the Fr\'echet class.

Let $Q_\mu$ and $Q_\nu$ denote the quantile functions of the canonical rank
distributions and let $U,V\sim\operatorname{Unif}(0,1)$ be independent. The
one-dimensional transport representation gives
\(
D(\mu,\nu)
=
E|Q_\mu(U)-Q_\nu(U)|,
\)
where $(Q_\mu(U),Q_\nu(U))$ is the comonotone coupling and minimizes expected
absolute rank displacement. Under independence, define
\(
A(\mu,\nu)
=
E|Q_\mu(U)-Q_\nu(V)|.
\)
Thus $D$ and $A$ evaluate the same ordinal displacement loss under different
dependence structures.

When $\mu=\nu$, $A(\mu,\mu)$ is the Gini mean difference of the canonical
rank distribution:
\(
A(\mu,\mu)
=
E|X-X'|
=
2\sum_{k=1}^{K-1}F_\mu(k)\{1-F_\mu(k)\},
\)
for independent $X,X'\sim\mu$. More generally,
\(
A(\mu,\nu)
=
\sum_{k=1}^{K-1}
\left[
F_\mu(k)\{1-F_\nu(k)\}
+
F_\nu(k)\{1-F_\mu(k)\}
\right],
\)
to be compared with
\(
D(\mu,\nu)
=
\sum_{k=1}^{K-1}|F_\mu(k)-F_\nu(k)|.
\)

At the opposite Fr\'echet extreme, the countermonotone coupling
$(Q_\mu(U),Q_\nu(1-U))$ maximizes expected absolute rank displacement. Hence,
with the maximal-reassignment functional of
Appendix~\ref{appendix:maximal-reassignment},
\(
M(\mu,\nu)
=
E|Q_\mu(U)-Q_\nu(1-U)|,
\;
D(\mu,\nu)\leq A(\mu,\nu)\leq M(\mu,\nu).
\)
The three quantities therefore evaluate the same ordinal displacement loss
under comonotone, independent, and countermonotone dependence, respectively.

The distinction is clearest when $\mu=\nu$. Then $D(\mu,\mu)=0$, since no
marginal change has occurred, while $A(\mu,\mu)$ measures dispersion under
independent pairing and $M(\mu,\mu)$ the greatest aggregate displacement
compatible with the common marginal. Hence the Gini-type quantity is not a
measure of distributional change in the sense characterized in
Section~\ref{subsec:ordinal-geometry}, but an intermediate coupling-specific
functional within the Fr\'echet class.

All of these quantities refer to the canonical rank distribution. They do
not impose cardinal values on the substantive categories beyond the
threshold-neutral rank geometry, and $A(\mu,\mu)$ should not be confused with
the normalized Gini coefficient for cardinal outcomes.

\subsection{Higher-Order Quantile Discrepancies}

The canonical rank representation suggests a broader class of
distributional discrepancies. For $p\geq1$, let $Q_\mu$ and $Q_\nu$
denote the quantile functions of the corresponding rank distributions and
define
\(
D_p(\mu,\nu)
:=
\left(
\int_0^1
|Q_\mu(u)-Q_\nu(u)|^p\,du
\right)^{1/p}.
\)
Because the quantiles are computed from the canonical rank distributions,
$D_p$ is invariant to order-preserving relabellings and depends only on the
two repeated cross-sectional marginals. Moreover, $D_p(\mu,\nu)^p$ is the
average $p$th-power rank displacement between corresponding population
quantiles; for example, $D_2$ is the root-mean-square rank displacement.
These comparisons concern population quantiles rather than matched
individuals.

In one dimension,
\(
D_p(\mu,\nu)^p
=
\min_{\pi\in\Pi(\mu,\nu)}
\sum_{i=1}^K\sum_{j=1}^K
|i-j|^p\pi_{ij},
\)
so $D_p=W_p$ on the canonical rank distributions. As for $p=1$, optimal
transport provides a representation of a discrepancy already determined by
the marginals.

The case $p=1$ is nevertheless distinctive because
\(
D_1(\mu,\nu)
=
\sum_{k=1}^{K-1}
|F_\mu(k)-F_\nu(k)|,
\)
which decomposes additively across ordinal thresholds and yields the
interpretation of the minimum average number of threshold crossings needed
to reconcile the marginals. For $p>1$, longer rank movements are penalized
more than proportionally, so this threshold-additive interpretation no longer
holds.

Higher-order quantile discrepancies may therefore be useful for questions
that place greater weight on longer ordinal movements, but their measurement
foundation would require different axioms from the threshold-separability
argument used for $D_1$. Developing such a characterization is left for
future work.

\end{document}